\newcommand{\al}{\alpha}
\newcommand{\f}{\frac}
\newcommand{\rar}{\rightarrow}
\newcommand{\N}{\mathbb{N}}
\newcommand{\R}{\mathbb{R}}
\newcommand{\bi}{\mathbbm{1}}
\newcommand{\LPOPT}{\text{OPT}_{\text{LP}}}
\newcommand{\VDV}{\text{VDV}}
\newcommand{\RPOPT}{\text{OPT}_{\text{Relaxed}}}
\definecolor{purple}{RGB}{128,0,128}
\def\qed{\vbox{\hrule\hbox{\vrule\kern3pt\vbox{\kern6pt}\kern3pt\vrule}\hrule}}
\newcommand{\TSPOPT}{\text{OPT}_{\text{TSP}}}
\newtheorem{theorem}{Theorem}[section]
\newtheorem{lemma}[theorem]{Lemma}
\newtheorem{corollary}[theorem]{Corollary}
\newtheorem{remark}[theorem]{Remark}
\newtheorem{proposition}[theorem]{Proposition}
\newtheorem{definition}[theorem]{Definition}
\newtheorem{claim}[theorem]{Claim}
\newtheorem{conjecture}[theorem]{Conjecture}
 \newtheorem*{prop*}{Proposition}
\theoremstyle{definition}
\title{Characterizing the Integrality Gap of the Subtour LP for the Circulant Traveling Salesman Problem}
\author{Samuel C. Gutekunst \\ \href{mailto:scg94@cornell.edu}{scg94@cornell.edu} \and David P. Williamson \\
\href{mailto:davidpwilliamson@cornell.edu}{davidpwilliamson@cornell.edu}
}
\date{}
\begin{document}
\maketitle

\begin{abstract}
We consider the integrality gap of the subtour LP relaxation of the Traveling Salesman Problem restricted to circulant instances.  De Klerk and Dobre \cite{Klerk11} conjectured that the value of the optimal solution to the subtour LP on these instances is equal to an entirely  combinatorial lower bound from  Van der Veen, Van Dal, and Sierksma \cite{VDV91}.  We prove this conjecture by giving an explicit optimal solution to the subtour LP.  We then show that the integrality gap of the subtour LP is  $2$ on circulant instances, making such instances one of the few non-trivial classes of TSP instances for which the integrality gap of the subtour LP is exactly known.   We also show that the degree constraints do not strengthen the subtour LP on circulant instances, mimicking the parsimonious property of metric, symmetric TSP instances shown in Goemans and Bertsimas \cite{Goe93} in a distinctly non-metric set of instances. 
\end{abstract}

\section{Introduction} 

The traveling salesman problem (TSP) is one of the most famous problems in combinatorial optimization.  An input to the TSP consists of a set of $n$ cities $[n]:=\{1, 2, ..., n\}$ and edge costs $c_{ij}$ for each pair of distinct $i, j \in [n]$ representing the cost of traveling from city $i$ to city $j$.  Given this information, the TSP is to find a minimum-cost tour visiting every city exactly once.    Throughout this paper, we implicitly assume that the edge costs are \emph{symmetric} (so that $c_{ij}=c_{ji}$ for all distinct $i, j\in [n]$)  and  interpret the $n$ cities as vertices of the complete undirected graph $K_n$ with edge costs $c_e=c_{ij}$ for edge $e=\{i, j\}$.  In this setting, the TSP is to find a minimum-cost Hamiltonian cycle on $K_n.$

With just this set-up, the TSP is well known to be NP-hard.  An algorithm that could approximate TSP solutions in polynomial time to within any constant factor $\alpha$ would imply P=NP (see, e.g., Theorem 2.9 in Williamson and Shmoys \cite{DDBook}).  Hence more restricted assumptions are placed on the edge costs.  If one assumes that edge costs are \emph{metric} (i.e. $c_{ij}\leq c_{ik}+c_{kj}$ for all distinct $i, j, k\in [n]$), it is known to be NP-hard to approximate TSP solutions in polynomial time to within any constant factor  $\alpha<\f{123}{122}$ (see Karpinski,  Lampis,  and Schmied \cite{Karp15}).  Conversely, the polynomial-time Christofides-Serdyukov  algorithm \cite{Chr76,Serd78} outputs a Hamiltonian cycle that is at most  a factor of $\f{3}{2}$ away from the optimal solution to any metric, symmetric instance.  

For  metric and symmetric edge costs, the Christofides-Serdyukov  algorithm remains the state of the art.  Significant work has gone into looking at more restricted sets of edge costs.   
For example, the $(1, 2)$-TSP restricts $c_{ij}\in\{1, 2\}$ for every edge $\{i, j\}$ (see, e.g., Papadimitriou and Yannakakis \cite{Pap93}, Berman and Karpinski \cite{Ber08}, Karpinski and Schmied  \cite{Karp12}).  In graphic TSP, instead, the input corresponds to a connected, undirected graph $G$ on vertex set $[n]$, and for $i, j\in [n],$ the cost $c_{ij}$ is the length of the shortest $i$-$j$ path in $G;$  approximation algorithms with stronger performance guarantees than the  Christofides-Serdyukov  algorithm are known in this case  (see, e.g.,  Oveis Gharan, Saberi, and Singh \cite{Gha11}, M{\"o}mke and Svensson \cite{Mom16}, Mucha \cite{Muc14}, and Seb{\H{o}} and Vygen \cite{Seb14}).    Yet another special case of metric and symmetric edge costs is Euclidean TSP, where each city $i\in[n]$ corresponds to a point $x_i\in\R^2$, and the cost $c_{ij}$ is given by the Euclidean distance between $x_i$ and $x_j;$ a polynomial-time approximation scheme is known in this case (see, e.g., Arora \cite{Aro96} and Mitchell \cite{Mitch99}).

In this paper, we consider a different class of instances: circulant TSP.  This class can be described by {\bf circulant matrices,}  matrices of the form
\begin{equation}\label{eq:circMat}
\begin{pmatrix} m_0 & m_1 & m_2 & m_3 & \cdots & m_{n-1} \\ m_{n-1} & m_0 & m_1 & m_2 & \cdots & m_{n-2} \\ m_{n-2} & m_{n-1} & m_0 & m_1 & \ddots & m_{n-3} \\ \vdots & \vdots & \vdots & \vdots & \ddots & \vdots \\ m_1 & m_2 & m_3 & m_4 & \cdots & m_0 \end{pmatrix} = \left(m_{(t-s) \text{ mod } n}\right)_{s, t=1}^n.
\end{equation}
In circulant TSP, the matrix of edge costs $C=(c_{i, j})_{i, j=1}^n$ is circulant; the cost of edge $\{i, j\}$ only depends on $i-j$ mod $n$.  Our assumption that the edge costs are symmetric and that $K_n$ is a simple graph implies that we can write our cost matrix in terms of $ \lfloor \frac{n}{2}\rfloor$ parameters: \begin{equation}\label{eq:ScircMat} C=(c_{(j-i) \text{ mod } n})_{i, j=1}^n=\begin{pmatrix} 0 & c_1 & c_2 & c_3 & \cdots & c_{1} \\ c_{1} & 0 & c_1 & c_2 & \cdots & c_2 \\ c_{2} & c_{1} &0 & c_1 & \ddots & c_{3} \\ \vdots & \vdots & \vdots & \vdots & \ddots & \vdots \\ c_1 & c_2 & c_3 & c_4 & \cdots & 0\end{pmatrix},\end{equation} with $c_0=0$ and  $c_i=c_{n-i}$ for $i=1, ..., \lfloor \frac{n}{2}\rfloor.$   Importantly, in circulant TSP we do not implicitly assume that the edge costs are also metric.  A {\bf circulant graph} is a graph whose weighted adjacency matrix is circulant.

Circulant matrices have well-studied structure (see, e.g., Davis \cite{Davis12} and Gray \cite{Gray06}), and form an intriguing class of instances for combinatorial optimization problems. They seem to provide just enough structure to make a compelling, ambiguous set of instances; it is unclear whether or not a given combinatorial optimization problem should remain hard or become easy when restricted to circulant instances.  Some classic combinatorial optimization problems become easy when restricted to circulant instances: in the late 70's, Garfinkel \cite{Gar77} considered a restricted set of circulant TSP instances motivated by minimizing wallpaper waste and argued that, for these instances, the canonical greedy algorithm for TSP (the nearest neighbor heuristic) provides an optimal solution.  
In the late 80's, Burkard and Sandholzer \cite{Burk91} showed that the decidability question for whether or not a symmetric circulant graph is Hamiltonian can be solved in polynomial time and showed that  bottleneck TSP is polynomial-time solvable on symmetric circulant graphs.  Bach, Luby, and Goldwasser (cited in Gilmore, Lawler, and Shmoys \cite{Gil85}) showed that one could find minimum-cost Hamiltonian paths in (not-necessarily-symmetric) circulant graphs in polynomial time.  In contrast, Codenotti, Gerace, and Vigna \cite{Code98} show that Max Clique and Graph Coloring remain NP-hard when restricted to circulant graphs and do not admit constant-factor approximation algorithms unless P=NP.

Because of this ambiguity, the complexity of circulant TSP has often been cited as an open problem (see, e.g., Burkhard \cite{Burk97}, Burkhard, De\u{\i}neko, Van Dal, Van der Veen, and Woeginger \cite{Burk98}, and Lawler, Lenstra, Rinnooy Kan, and Shmoys  \cite{Law07}). 
It is not known if the circulant TSP is solvable in polynomial-time or is NP-hard, even when restricted to instances where only two of the edge costs $c_1, ...., c_{\lfloor \frac{n}{2}\rfloor}$ are finite: the \emph{two-stripe circulant TSP}.  See Greco and Gerace \cite{Grec07} and Gerace and Greco \cite{Ger08b}.  Yang, Burkard, \c{C}ela, and Woeginger \cite{Yang97} provide a polynomial-time algorithm for asymmetric TSP in circulant graphs with only two stripes having finite edge costs.  The symmetric two-stripe circulant TSP is not, however, a special case of the asymmetric two-stripe version.
In addition to questions of minimizing wallpaper waste, circulant TSP has applications in  reconfigurable network design (see Medova \cite{Med93}). 

Motivated by positive results on Hamiltonicity and minimum-cost Hamiltonian paths, Van der Veen, Van Dal, and Sierksma \cite{VDV91} developed two heuristic algorithms for circulant TSP.  In the case where all costs $c_1, ..., c_{\lfloor \frac{n}{2}\rfloor}$  are distinct, one heuristic provides tours within a factor of two of the optimal solution.   In addition,  Van der Veen, Van Dal, and Sierksma \cite{VDV91} give an explicit combinatorial formula as a lower bound for circulant TSP.   Gerace and Greco \cite{Ger08} give a 2-approximation algorithm for the general case of circulant TSP when costs may not be distinct.  Gerace and Irving \cite{Ger98} give a $\frac{4}{3}$-approximation algorithm for circulant TSP when edge costs are also metric.  See also Greco and Gerace \cite{Gre08}.

De Klerk and Dobre \cite{Klerk11} consider several lower bounds for the circulant TSP, including the subtour elimination linear program (also referred to as the Dantzig-Fulkerson-Johnson relaxation \cite{Dan54} and the Held-Karp bound \cite{Held70}, and which we will refer to as the {\bf subtour LP}). Let $V=[n]$ denote the set of vertices in $K_n,$ and let $E$ denote the set of edges in $K_n$.  For $S\subset V$, denote the set of edges with exactly one endpoint in $S$ by $\delta(S):=\{e=\{i, j\}: |\{i, j\}\cap S|=1\}$ and let $\delta(v):=\delta(\{v\}).$  The subtour LP is:
\begin{equation}\label{eq:SLP}\begin{array}{l l l}
\min & \sum_{e\in E} c_e x_e & \\
\text{subject to} & \sum_{e\in \delta(v)} x_e = 2, & v=1, \ldots, n \\
& \sum_{e\in \delta(S)} x_e \geq 2, & S\subset V: S\neq \emptyset, S\neq V \\
&0\leq x_e \leq 1, & e\in E. \end{array}
\end{equation}
The constraints  $\sum_{e\in \delta(v)} x_e = 2$ are known as the degree constraints, while the constraints $\sum_{e\in \delta(S)} x_e \geq 2$ are known as the subtour elimination constraints.  When edge costs are metric (but not necessarily circulant), Wolsey \cite{Wol80},  Cunningham \cite{Cun86}, and Shmoys and Williamson \cite{Shm90} show that solutions to this linear program are within a factor of $\f{3}{2}$ of the optimal, integer solution to the TSP.

De Klerk and Dobre \cite{Klerk11} show that, in the context of circulant TSP, the subtour LP is at least as strong  as the combinatorial lower bound of  Van der Veen, Van Dal, and Sierksma \cite{VDV91}.  They also conjecture that, on any instance of circulant TSP, the combinatorial lower bound of  Van der Veen, Van Dal, and Sierksma \cite{VDV91} exactly equals the optimal solution to the subtour LP.

Our paper has two main results.  First, we prove the conjecture of De Klerk and Dobre \cite{Klerk11}.  Second, we show that the integrality gap of the subtour LP is 2 for circulant TSP instances, making such instances one of the few non-trivial classes of TSP instances for which the  integrality gap of the subtour LP is exactly known.  

We begin, in Section \ref{CTSP}, by reviewing major results and notation relevant to circulant TSP.  In Section \ref{Main}, we then state and prove our main theorem, showing that  the combinatorial lower bound of  Van der Veen, Van Dal, and Sierksma \cite{VDV91} exactly equals the optimal solution to the subtour LP.  In proving this result, we provide an  explicit optimal solution to the subtour LP on circulant instances.   As a corollary, we show that the degree constraints do not strengthen the subtour LP on circulant instances, mimicking the parsimonious property of metric, symmetric TSP instances shown in Goemans and Bertsimas \cite{Goe93} in a distinctly non-metric set of instances. In Section \ref{sec:main2} we complete our characterization of the integrality gap of the subtour LP and show that it is exactly 2 on circulant instances.   The instances we use to show that the integrality gap is 2 are the same instances for which the crown inequalities (a certain class of facet-defining inequalities for the metric, symmetric TSP; see Naddef and Rinaldi \cite{Nad92}) were derived.  We show that, unfortunately, adding the crown inequalities to the subtour LP does not reduce the integrality gap when restricted to circulant TSP instances.  This leads us to discuss and conjecture constraints whose addition to the subtour LP would lower its integrality gap on circulant instances.

 Our results serve to motivate circulant TSP as a non-trivial class of TSP instances for which there is substantial number-theoretic and combinatorial structure.  We hope our results reinvigorate broad interest in the circulant TSP,  and thus we conclude by indicating several compelling open questions.

\section{Circulant TSP: Notation and Background}\label{CTSP}
Throughout this paper, we consider circulant TSP instances where $V=[n]$ and let $d:=\lfloor\f{n}{2}\rfloor.$  We use $\equiv_n$ to denote the mod-$n$ equivalence relationship and assume all computations on the vertex set are done mod $n$. In circulant TSP, all edges $\{i, j\}$ such that $i-j\equiv_n k$ or $i-j\equiv_n (n-k)$ have the same cost $c_k.$  We refer to such edges as being in the  {\bf $k$-th stripe,} and we describe $k$ as the {\bf length} of the stripe.  Classic algorithms and bounds for circulant TSP depend only on the ordering of the stripes with respect to their costs.

\begin{definition}
Let $S\subset \{1, ..., d\}.$ The {\bf circulant graph $C\langle S\rangle$} is the (simple, undirected, unweighted) graph including exactly the edges associated  with the stripes $S$.  I.e., the graph with adjacency matrix $$A=(a_{ij})_{i, j=1}^n, \hspace{5mm} a_{ij} = \begin{cases} 1, & (i-j) \bmod n \in S \text{ or } (j-i) \bmod n \in S \\ 0, & \text{ else.} \end{cases}$$
\end{definition}
\noindent For a set of stripes $S$, the graph $C\langle S\rangle$ includes exactly the edges associated with those stripes.  Note that the adjacency matrix of a circulant graph is a symmetric circulant matrix (see Equations (\ref{eq:circMat}) and (\ref{eq:ScircMat})).

 Given such an input to circulant TSP, we associate a permutation $\phi:[d]\rar[d]$ that sorts the stripes in order of nondecreasing cost as well as a sequence that encodes the connectivity of $C\langle \{\phi(1), ..., \phi(k)\}\rangle$ for $1\leq k\leq d.$

\begin{definition}[Van der Veen, Van Dal, and Sierksma \cite{VDV91}]
Consider an instance of circulant TSP with edge costs $c_1, ..., c_d.$  A {\bf stripe permutation} $\phi: [d]\rar [d]$ is a permutation such that $c_{\phi(1)}\leq c_{\phi(2)} \leq \cdots \leq c_{\phi(d)}.$  The {\bf $g$-sequence associated to $\phi$} is $g^{\phi}=(g_0^{\phi}, g_1^{\phi}, ..., g_d^{\phi}),$ recursively defined by $$g_i^{\phi}=\begin{cases} n, & i=0 \\ \gcd\left(\phi(i), g_{i-1}^{\phi}\right), & \text{ else.} \end{cases}$$
\end{definition}
\noindent Proposition \ref{prop:Ham} will allow us to interpret $g_i^{\phi}$ as the number of components of  $C\langle \{\phi(1), ..., \phi(i)\}\rangle,$ the graph of all edges from the cheapest $i$ stripes.  See, e.g., Figure \ref{fig:CKNotation}.

Note that, if edge costs are not distinct for a given instance of circulant TSP, there may be multiple associated stripe permutations.  In this case, we will take $\phi$ to be an arbitrary stripe permutation sorting the costs.   In  Van der Veen, Van Dal, and Sierksma \cite{VDV91}, the $g$-sequence is denoted as $\left(\mathcal{G}\mathcal{C}\mathcal{D}(\phi(0)), ..., \mathcal{G}\mathcal{C}\mathcal{D}(\phi(d))\right)$ with $\phi(0):=n.$  In Greco and Gerace \cite{Grec07}, $\phi$ is referred to as a presentation.

An early result from  Burkard and Sandholzer \cite{Burk91} characterizes when Hamiltonian cycles exist in circulant graphs: Hamiltonian cycles exist whenever the an (undirected) circulant graph is connected.
\begin{proposition}[Burkard and Sandholzer \cite{Burk91}] \label{prop:Ham} 
Let $\{a_1, ..., a_t\}\subset [d]$ and let $\mathcal{G}=\gcd(n, a_1, ..., a_t).$ The circulant graph $C\langle \{a_1, ..., a_t\} \rangle$ has $\mathcal{G}$ components. The $i$th component, for $0\leq i\leq \mathcal{G}-1$, consists of $n/\mathcal{G}$ nodes $$\{i+\lambda \mathcal{G} \bmod n: 0\leq \lambda \leq \f{n}{\mathcal{G}}-1\}.$$  $C\langle \{a_1, ..., a_t\} \rangle$  is Hamiltonian if and only if $\mathcal{G}=1.$
\end{proposition}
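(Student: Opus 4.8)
The plan is to identify the vertex set $[n]$ with the cyclic group $\Z_n$ and to view $C\langle\{a_1,\dots,a_t\}\rangle$ as the Cayley graph of $\Z_n$ with connection set $\{\pm a_1,\dots,\pm a_t\}$, so that two vertices $u,v$ are adjacent precisely when $u-v\equiv_n \pm a_k$ for some $k$. The component count and its explicit description should then fall out of elementary group theory, the ``only if'' half of the Hamiltonicity claim is immediate, and the ``if'' half will require an explicit cycle construction, which is where the real work lies.

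For the component structure, I would show that the connected component of a vertex $i$ is exactly the coset $i+H$, where $H\leq\Z_n$ is the subgroup generated by $a_1,\dots,a_t$. Any walk starting at $i$ changes the current vertex by a signed sum of the $a_k$, so every reachable vertex lies in $i+H$; conversely, since $\gcd(n,a_1,\dots,a_t)=\mathcal{G}$, B\'ezout's identity writes $\mathcal{G}$ as an integer combination of $n,a_1,\dots,a_t$, whence $\mathcal{G}\in H$ and therefore $H=\langle\mathcal{G}\rangle=\{0,\mathcal{G},2\mathcal{G},\dots\}$, a set of size $n/\mathcal{G}$. Walking repeatedly by $\mathcal{G}$ shows $i+H$ lies in the component of $i$, so the components are precisely the $\mathcal{G}$ cosets $\{i+\lambda\mathcal{G}\bmod n:0\leq\lambda\leq n/\mathcal{G}-1\}$ for $0\leq i\leq\mathcal{G}-1$, each of size $n/\mathcal{G}$, as claimed. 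In particular, if $\mathcal{G}>1$ the graph has at least two components and hence no Hamiltonian cycle, giving the easy direction.

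For the converse, assume $\mathcal{G}=1$ and build a Hamiltonian cycle by induction on the stripes, maintaining a spanning $2$-factor (a vertex-disjoint union of cycles covering $\Z_n$). For the base case, pick a stripe $a\neq n/2$; its edges form $\gcd(n,a)$ disjoint cycles $i,\,i+a,\,i+2a,\dots$ each of length $n/\gcd(n,a)$ (when $n$ is even the stripe $n/2$ is only a perfect matching and must be handled as a minor special case, but it cannot be the only stripe in a connected instance). Writing $g_k:=\gcd(n,a_1,\dots,a_k)$, I would maintain the invariant that after processing $a_1,\dots,a_k$ the current $2$-factor has exactly $g_k$ cycles, one per residue class modulo $g_k$. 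Incorporating a new stripe $a_{k+1}$ links the cycle indexed by residue $r$ to the one indexed by $r+a_{k+1}\pmod{g_k}$, and the orbits of this shift partition the $g_k$ cycles into $\gcd(g_k,a_{k+1})=g_{k+1}$ groups, each of which I would splice into a single cycle; iterating until $g_t=\mathcal{G}=1$ yields one spanning cycle.

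The main obstacle is the splicing step, and translation invariance is what makes it go through. If a stripe-$a_{k+1}$ edge $\{v,\,v+a_{k+1}\}$ joins two distinct cycles $C_1\ni v$ and $C_2\ni v+a_{k+1}$ of the current $2$-factor, and $\{v,\,v+a_j\}$ is an edge of $C_1$, then translating by $a_{k+1}$ produces the parallel crossing edge $\{v+a_j,\,v+a_j+a_{k+1}\}$; deleting the within-cycle edges $\{v,\,v+a_j\}$ and $\{v+a_{k+1},\,v+a_j+a_{k+1}\}$ while adding the two crossing edges $\{v,\,v+a_{k+1}\}$ and $\{v+a_j,\,v+a_j+a_{k+1}\}$ merges $C_1$ and $C_2$ into one cycle without disturbing any other vertex. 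I expect the delicate part to be organizing these merges consistently: one must choose the crossing edges so that, within each orbit of cycles, the successive splices can be carried out without conflict and leave a single cycle, thereby driving the component count down from $g_k$ to exactly $g_{k+1}$ at every stage. Preserving a suitable ``translate of a common base cycle'' structure across the induction is the cleanest way I see to guarantee that the required parallel edges are always available.
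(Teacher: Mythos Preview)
The paper does not actually prove Proposition~\ref{prop:Ham}; it is quoted from Burkard and Sandholzer \cite{Burk91} and used as a black box throughout. So there is no ``paper's own proof'' to compare against. That said, your plan is correct and is essentially the standard argument. The component description via the Cayley-graph/coset picture is complete as written: the subgroup generated by $a_1,\dots,a_t$ in $\Z_n$ is $\langle\mathcal G\rangle$ by B\'ezout, and the components are its cosets. The ``only if'' half of Hamiltonicity is immediate from disconnectedness.

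For the ``if'' half, your inductive splicing scheme is exactly the idea behind Burkard and Sandholzer's original construction, and it is also what underlies the explicit Hamiltonian-path and $2$-approximation constructions the present paper sketches in its appendix (Figures~\ref{fig:HamPath}--\ref{fig:PC3}). The one place your write-up is still a sketch is the point you yourself flag: after a round of splices the resulting cycles must again be translates of a common base cycle, so that the parallel crossing edges needed at the next stage are guaranteed to exist. This is not automatic---an arbitrary choice of splice locations can destroy translation invariance---but it can be enforced by performing the same splice pattern in each orbit (equivalently, by building the cycle on one residue class modulo $g_{k+1}$ and then translating it). Once you make that choice explicit, the induction goes through and the argument is complete. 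The $n/2$-stripe caveat you note is handled correctly: if $\mathcal G=1$ and $n/2$ is the only stripe then $n=2$, which is degenerate.
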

Set  $$\ell :=  \min \{i: 1\leq i\leq d, g_i^{\phi}=1\}.$$ By Proposition \ref{prop:Ham}, the graph $C\langle\{\phi(1), ..., \phi(\ell-1)\}\rangle$ is not Hamiltonian, while $C\langle\{\phi(1), ..., \phi(\ell)\}\rangle$ is.  Hence any Hamiltonian tour uses an edge of cost at least $c_{\phi(\ell)},$ and tours can be constructed where $c_{\phi(\ell)}$ is the most expensive edge.  Thus this proposition not only resolves Hamiltonicity in circulant graphs, but it also resolves bottleneck TSP in circulant graphs.  In bottleneck TSP, the objective is to find a Hamiltonian tour for which the cost of the most expensive edge is minimized. Burkard and Sandholzer \cite{Burk91} use Proposition \ref{prop:Ham} to give a constructive algorithm for bottleneck TSP on circulant instances.  We will use  Proposition \ref{prop:Ham} to partition the vertices of circulant graphs.

 Moreover, Proposition \ref{prop:Ham} immediately gives rise to an easily solvable case of circulant TSP: if there exists a stripe permutation $\phi$ such that  $g^{\phi}_1=1,$ or equivalently, the length  $\phi(1)$ of a cheapest stripe is relatively prime to $n$.  For example, if $n$ is prime, circulant TSP is easily solvable: you obtain a Hamiltonian tour by following edges of the cheapest stripe; after $n$ edges you will have visited every node and returned to the start.  These observations were first made in Garfinkel \cite{Gar77}.

Proposition \ref{prop:Ham} can  be used to solve the minimum-cost Hamiltonian path problem on circulant instances.   
\begin{proposition}[Bach, Luby, and Goldwasser, cited in Gilmore, Lawler, and Shmoys \cite{Gil85}] \label{prop:HP}
Let $c_1, ..., c_d$ be the edge costs of a circulant instance and let $\phi$ be an associated stripe permutation.  The minimum-cost Hamiltonian path has cost $$\sum_{i=1}^{\ell} (g_{i-1}^{\phi}-g_i^{\phi})c_{\phi(i)}.$$
\end{proposition}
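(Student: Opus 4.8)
The plan is to prove the formula by matching a lower bound with an explicit construction. For the lower bound I will observe that any Hamiltonian path on $[n]$ is a spanning tree of $K_n$, so its cost is at least that of a minimum spanning tree (MST), and then compute the MST cost and show it equals $\sum_{i=1}^{\ell}(g_{i-1}^{\phi}-g_i^{\phi})c_{\phi(i)}$. For the upper bound I will exhibit a Hamiltonian path achieving exactly this cost, built by recursively gluing together sub-paths living on the components produced by the cheapest stripes.

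For the lower bound, run Kruskal's algorithm, processing edges stripe by stripe in the order $\phi(1), \phi(2), \ldots$ (ties in cost are irrelevant to the MST value). After all edges of stripes $\phi(1), \ldots, \phi(i)$ have been processed, Kruskal's forest induces the same vertex partition as $C\langle\{\phi(1), \ldots, \phi(i)\}\rangle$, which by Proposition \ref{prop:Ham} has $g_i^{\phi}$ components; hence the forest has $n - g_i^{\phi}$ edges at that point. Consequently exactly $(n-g_i^{\phi})-(n-g_{i-1}^{\phi}) = g_{i-1}^{\phi}-g_i^{\phi}$ edges of cost $c_{\phi(i)}$ enter the tree at stage $i$, and the forest first becomes a single spanning tree at stage $\ell$, since $g_{\ell}^{\phi}=1$. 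Summing the contributions yields an MST of cost $\sum_{i=1}^{\ell}(g_{i-1}^{\phi}-g_i^{\phi})c_{\phi(i)}$, a lower bound on every Hamiltonian path; note $\sum_{i=1}^{\ell}(g_{i-1}^{\phi}-g_i^{\phi}) = n-1$, as a spanning tree requires.

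For the upper bound I will construct, by induction on $i$, a Hamiltonian path on each of the $g_i^{\phi}$ components of $C\langle\{\phi(1), \ldots, \phi(i)\}\rangle$ using only stripes $\phi(1), \ldots, \phi(i)$ and, in aggregate, exactly $g_{j-1}^{\phi}-g_j^{\phi}$ edges of each stripe $\phi(j)$. The base case $i=0$ gives $n$ trivial one-vertex paths. Since $g_i^{\phi}=\gcd(\phi(i), g_{i-1}^{\phi})$ divides $g_{i-1}^{\phi}$, each stage-$i$ component (a residue class modulo $g_i^{\phi}$, by Proposition \ref{prop:Ham}) is a union of $t:=g_{i-1}^{\phi}/g_i^{\phi}$ stage-$(i-1)$ components, and I will glue the corresponding $t$ sub-paths into one path using $t-1$ edges of stripe $\phi(i)$. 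The enabling arithmetic fact is that $g_i^{\phi}=\gcd(\phi(i), g_{i-1}^{\phi})$ forces $\phi(i)$ to have additive order $t$ modulo $g_{i-1}^{\phi}$, so adding $\phi(i)$ repeatedly cyclically visits all $t$ sub-cosets before returning, making the needed connecting edges available. This contributes $g_i^{\phi}(t-1) = g_{i-1}^{\phi}-g_i^{\phi}$ edges of stripe $\phi(i)$, matching the count above, and after stage $\ell$ produces a single Hamiltonian path on $[n]$ of the claimed cost.

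The main obstacle is the gluing step: I must choose the endpoints of the stage-$(i-1)$ sub-paths consistently so that the stripe-$\phi(i)$ edges chain the $t$ sub-paths into one path, rather than closing a cycle or leaving them disconnected. I expect to exploit the translation symmetry of circulant graphs, taking the path on each component to be a fixed translate of the path on one canonical component, which should let me track endpoints uniformly and reduce the verification to the single statement that $+\phi(i)$ generates the order-$t$ cyclic action on the sub-cosets. This endpoint bookkeeping is the delicate part of the argument; the remaining counting then closes the gap between the construction and the MST lower bound, giving the claimed value.
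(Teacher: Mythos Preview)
Your proposal is correct and matches the paper's own argument closely: the lower bound via Kruskal's algorithm and the observation that a Hamiltonian path is a spanning tree is exactly what the paper does, and your inductive gluing construction is the same recursive structure that the paper obtains by invoking the nearest-neighbor heuristic (cf.\ their Figure illustrating how the heuristic fully traverses one component of $C\langle\{\phi(1),\ldots,\phi(i)\}\rangle$, then uses a single $\phi(i{+}1)$-edge to jump to the next). The only cosmetic difference is that the paper cites the nearest-neighbor heuristic as producing the path, whereas you build it explicitly by translated sub-paths; your anticipated ``delicate'' endpoint bookkeeping is handled by alternating the traversal direction of consecutive translated sub-paths.
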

\noindent We sketch the proof in Appendix \ref{ap:HP}.

Proposition \ref{prop:HP} yields a natural lower bound on the optimal solution to circulant TSP instances: delete the most expensive edge of a Hamiltonian tour (of cost at least $c_{\phi(\ell)}$), and compare  the resultant Hamiltonian path to a minimum-cost Hamiltonian path.  
\begin{proposition}[Van der Veen, Van Dal, and Sierksma \cite{VDV91}]\label{prop:VDV}
Let $c_1, ..., c_d$ be the edge costs of a circulant instance and let $\phi$ be an associated stripe permutation.  
Any Hamiltonian tour costs at least
 $$\VDV:=\left(\sum_{i=1}^{\ell} (g_{i-1}^{\phi}-g_i^{\phi})c_{\phi(i)}\right) + c_{\phi(\ell)}.$$
 \end{proposition}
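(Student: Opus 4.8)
The plan is to follow the strategy sketched immediately before the statement: take an arbitrary Hamiltonian tour, argue that it must contain at least one edge of cost at least $c_{\phi(\ell)}$, delete such an edge to obtain a Hamiltonian path, and then invoke the exact formula for the minimum-cost Hamiltonian path from Proposition \ref{prop:HP}. Thus the whole argument reduces to combining the structural fact of Proposition \ref{prop:Ham} with the path bound of Proposition \ref{prop:HP}.

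Let $T$ be any Hamiltonian tour. First I would establish that $T$ uses an edge from some stripe $\phi(j)$ with $j\geq \ell$, and hence an edge of cost at least $c_{\phi(\ell)}$. Indeed, by the definition of $\ell$ we have $g_{\ell-1}^{\phi}\geq 2$, so Proposition \ref{prop:Ham} tells us that the circulant graph $C\langle\{\phi(1),\ldots,\phi(\ell-1)\}\rangle$ has at least two connected components and is therefore disconnected. Since a Hamiltonian tour is a connected spanning subgraph of $K_n$, it cannot be contained in the edge set of $C\langle\{\phi(1),\ldots,\phi(\ell-1)\}\rangle$; it must use at least one edge belonging to a stripe $\phi(j)$ with $j\geq \ell$. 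Because $\phi$ sorts the stripes by nondecreasing cost, any such edge has cost $c_{\phi(j)}\geq c_{\phi(\ell)}$.

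Next I would delete one such edge $e$ from $T$. The result is a spanning path on all $n$ vertices, i.e. a Hamiltonian path $P$, with $\text{cost}(P)=\text{cost}(T)-c_e\leq \text{cost}(T)-c_{\phi(\ell)}$. By Proposition \ref{prop:HP}, every Hamiltonian path costs at least $\sum_{i=1}^{\ell}(g_{i-1}^{\phi}-g_i^{\phi})c_{\phi(i)}$, and in particular $\text{cost}(P)$ is bounded below by this quantity. Rearranging gives
\begin{equation*}
\text{cost}(T)\;\geq\; \text{cost}(P)+c_{\phi(\ell)}\;\geq\;\left(\sum_{i=1}^{\ell}(g_{i-1}^{\phi}-g_i^{\phi})c_{\phi(i)}\right)+c_{\phi(\ell)}\;=\;\VDV,
\end{equation*}
which is the desired bound.

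The proof is short once Propositions \ref{prop:Ham} and \ref{prop:HP} are in hand, so I do not anticipate a serious obstacle; the one point deserving care is the first step, namely verifying that the deleted edge genuinely has cost at least $c_{\phi(\ell)}$ rather than merely that the tour leaves the subgraph $C\langle\{\phi(1),\ldots,\phi(\ell-1)\}\rangle$. Tracking possible ties among the $c_{\phi(i)}$ is the only subtlety, and I would handle it by phrasing everything in terms of stripe indices under the fixed permutation $\phi$ rather than in terms of strict cost comparisons, so that membership of the edge in a stripe of index at least $\ell$ immediately forces cost at least $c_{\phi(\ell)}$.
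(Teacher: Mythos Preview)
Your proposal is correct and follows exactly the approach the paper sketches in the paragraph preceding the proposition: delete from the tour an edge of cost at least $c_{\phi(\ell)}$ (guaranteed to exist since $C\langle\{\phi(1),\ldots,\phi(\ell-1)\}\rangle$ is disconnected by Proposition~\ref{prop:Ham}), and bound the resulting Hamiltonian path from below using Proposition~\ref{prop:HP}. Your handling of ties via stripe indices is a nice touch, but otherwise there is nothing to add.
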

\noindent $\VDV$  is the aforementioned combinatorial lower bound for circulant TSP.

If there are multiple stripe permutations associated with an instance (i.e., the $c_i$ are not all distinct), the lower bound is independent of the stripe permutation chosen.  The lower bound is, moreover,  tight as can be shown by considering any instance where the cheapest stripe has length relatively prime to $n$.  For example the lower bound is tight for any instance where $\phi(1)=1.$

De Klerk and Dobre \cite{Klerk11} compare the $\VDV$ lower bound to several other well-known TSP bounds.  In a series of numerical experiments, they provide evidence to conjecture that the $\VDV$ lower bound is exactly equal to the value of the optimal solution to the  subtour LP (see Equation (\ref{eq:SLP})).

\begin{conjecture}[De Klerk and Dobre \cite{Klerk11}]  \label{conj:DK}
Let $c_1, ..., c_d$ be the edge costs of a circulant instance and let $\phi$ be an associated stripe permutation.  Let  $\LPOPT$  denote the optimal value of the subtour LP and $\VDV$ denote the value of the lower bound in Proposition \ref{prop:VDV}.  Then
$$\VDV= \LPOPT.$$
\end{conjecture}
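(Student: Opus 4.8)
The plan is to prove the inequality $\LPOPT \le \VDV$; combined with the reverse inequality $\LPOPT \ge \VDV$ already established by De Klerk and Dobre \cite{Klerk11}, this yields equality. To obtain the upper bound I will exhibit an explicit feasible solution $x$ to the subtour LP (\ref{eq:SLP}) whose cost equals $\VDV$. The natural candidate, guided by the form of $\VDV$ and Proposition \ref{prop:VDV}, is to spread weight \emph{uniformly over whole stripes}: set $x_e = w_i$ on every edge of stripe $\phi(i)$ for $1 \le i \le \ell$ and $x_e = 0$ on stripes $\phi(\ell+1), \ldots, \phi(d)$. For a stripe of length $\phi(i) \ne \frac n2$ I will take $w_i = (g_{i-1}^\phi - g_i^\phi)/n$ (and $w_\ell = g_{\ell-1}^\phi/n$ to absorb the extra $+c_{\phi(\ell)}$ term of $\VDV$), doubling these values on the length-$\frac n2$ stripe (if present) to compensate for the fact that it contains only $\frac n2$ edges and is incident to each vertex only once.

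With this candidate the easy checks come first. Each edge of a stripe $\phi(i) \ne \frac n2$ meets a vertex $v$ in exactly two edges, so the degree at $v$ telescopes to $\frac 2n\big(\sum_{i=1}^{\ell-1}(g_{i-1}^\phi - g_i^\phi) + g_{\ell-1}^\phi\big) = \frac{2g_0^\phi}{n} = 2$ using $g_0^\phi = n$; the $\frac n2$-stripe doubling is designed precisely so its single incident edge contributes the same amount, keeping every degree constraint tight. The bounds $0 \le w_i \le 1$ follow from $g^\phi$ being nonincreasing together with $g_{i-1}^\phi \le n$ (and, when $\phi(\ell) = \frac n2$, from $g_{\ell-1}^\phi \le \frac n2$, which holds since $g_{\ell-1}^\phi \mid n$ and $\gcd(\frac n2, g_{\ell-1}^\phi) = g_\ell^\phi = 1$). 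Finally the cost $\sum_{i=1}^\ell w_i\,(\#\text{edges in stripe }\phi(i))\,c_{\phi(i)}$ collapses to $\sum_{i=1}^{\ell-1}(g_{i-1}^\phi - g_i^\phi)c_{\phi(i)} + g_{\ell-1}^\phi c_{\phi(\ell)} = \VDV$, since $(g_{\ell-1}^\phi - g_\ell^\phi)c_{\phi(\ell)} + c_{\phi(\ell)} = g_{\ell-1}^\phi c_{\phi(\ell)}$.

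The heart of the argument---and the step I expect to be the main obstacle---is verifying the exponentially many subtour elimination constraints $\sum_{e \in \delta(S)} x_e \ge 2$ for every proper nonempty $S \subset V$. Here I will lean on the number-theoretic structure of Proposition \ref{prop:Ham}: the components of $\Gamma_i := C\langle\{\phi(1),\ldots,\phi(i)\}\rangle$ are exactly the $g_i^\phi$ cosets of the subgroup $\langle g_i^\phi\rangle \le \Z_n$, and each single stripe $\phi(i)$ decomposes into $\gcd(n,\phi(i))$ cycles on which any cut is crossed an even number of times, at least twice whenever the cycle is split. Given a cut $S$, let $r$ be the largest index for which $S$ is a union of components of $\Gamma_r$; then $\phi(1),\ldots,\phi(r)$ contribute nothing to $\delta(S)$, while $S$ genuinely splits some cycle of $\phi(r+1)$. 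I plan to pass to the quotient $\Z_n/\langle g_r^\phi\rangle \cong \Z_{g_r^\phi}$, in which the tail $g_r^\phi, g_{r+1}^\phi, \ldots, g_\ell^\phi$ of the $g$-sequence reappears verbatim and the induced solution is the analogous construction on a smaller instance; a counting identity shows the cut value is preserved under projection, reducing every cut to the ``generic'' case $r = 0$.

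The delicate part will be this base case: showing that once $S$ is no longer aligned with any cheap stripe, the contributions of the splitting stripes $\phi(r+1),\ldots,\phi(\ell)$ accumulate to at least $2$. A single split cycle guarantees only a contribution of $a_{r+1} := 2(g_r^\phi - g_{r+1}^\phi)/n$, which can be strictly less than $2$, so the argument must combine the crossings forced across several stripes and track how the components of $\Gamma_i$ merge---and how split components propagate---as $i$ grows from $r$ to $\ell$. Handling the length-$\frac n2$ stripe, whose cycles have length two and are crossed only once, will require the compensating factor built into $w_i$ to be threaded carefully through this accounting. Once the subtour constraints are established, $x$ is feasible of cost $\VDV$, giving $\LPOPT \le \VDV$ and hence Conjecture \ref{conj:DK}.
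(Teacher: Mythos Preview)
Your candidate solution and the verifications of the degree constraints, the box constraints $0\le x_e\le 1$, and the objective value are exactly what the paper does (Theorem~\ref{thm:main} and Lemma~\ref{lem:deg}); even the doubling convention for the length-$n/2$ stripe matches. The divergence is entirely in how the subtour constraints are checked.

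Your quotient reduction to $\Z_{g_r^\phi}$ is correct and is a pleasant organizing device---it lets you induct on $n$ and assume $S$ already splits some $\phi(1)$-cycle---but it only repackages the problem: the base case $r=0$ still carries essentially all of the content, and your plan for it (``combine the crossings forced across several stripes and track how the components of $\Gamma_i$ merge'') is too vague to constitute a strategy. The paper does not quotient; it argues directly by taking a minimal counterexample $S^*$ and locating the smallest $j$ with $S^*\subset C^{j+1}$ but $S^*$ not contained in any single $C^j$. The key structural insight, which your plan is missing, is Lemma~\ref{lem:mergeCi}: contracting each $C^j$ inside a given $C^{j+1}$ yields a directed \emph{cycle} on $g_j/g_{j+1}$ nodes, so the total $\phi(j{+}1)$-weight among any $s$ of those $C^j$ is at most $s-1$ (Lemma~\ref{lem:MultCi}). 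This is exactly what makes the ``accumulation across stripes'' work---it telescopes to $x(E(S^*))\le |S^*|-1$---and a short three-case analysis (Proposition~\ref{prop:ST}) handles the residual situation where $S^*$ meets every $C^{\ell-1}$. Without an analogue of this cycle-merging lemma, I do not see how your base case closes; you have correctly identified where the difficulty lies but not yet the lever that resolves it.
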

\noindent Our first main result will be to prove this conjecture.

  De Klerk and Dobre \cite{Klerk11} provide further evidence for this conjecture by showing the following.
\begin{theorem}[De Klerk and Dobre \cite{Klerk11}]  \label{thm:DK}
Let $c_1, ..., c_d$ be the edge costs of a circulant instance and let $\phi$ be an associated stripe permutation.  Let  $\LPOPT$  denote the optimal value of the subtour LP and $\VDV$ denote the value of the lower bound in Proposition \ref{prop:VDV}.  Then:
$$\VDV\leq \LPOPT.$$
\end{theorem}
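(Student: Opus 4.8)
The plan is to establish the bound by weak LP duality in its most hands-on form: aggregate the subtour-elimination and degree constraints of (\ref{eq:SLP}) with suitable nonnegative multipliers so as to produce the constant $\VDV$ as a lower bound on $\sum_e c_e x_e$. First I would recast $\VDV$ in an ``incremental'' form. Set $c_{\phi(0)}:=0$ and $\Delta_m:=c_{\phi(m)}-c_{\phi(m-1)}\geq 0$, the nonnegativity coming from the fact that $\phi$ sorts the stripes by nondecreasing cost. Abel summation applied to the defining expression for $\VDV$ in Proposition \ref{prop:VDV}, using $g_0^{\phi}=n$ and $g_{\ell}^{\phi}=1$, collapses it to
\begin{equation*}
\VDV=\sum_{m=1}^{\ell}\Delta_m\, g_{m-1}^{\phi}.
\end{equation*}
Verifying this telescoping identity is routine and I would record it as a preliminary observation.

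Next, fix any feasible $x$ for (\ref{eq:SLP}) and let $X_j:=\sum_{e\in\text{ stripe }\phi(j)}x_e$, so that $\sum_e c_e x_e=\sum_{j=1}^{d}c_{\phi(j)}X_j$. Because $c_{\phi(j)}\geq\sum_{m=1}^{\min(j,\ell)}\Delta_m$ for every $j$ and all $X_j,\Delta_m\geq 0$, interchanging the order of summation gives
\begin{equation*}
\sum_e c_e x_e\;\geq\;\sum_{m=1}^{\ell}\Delta_m\sum_{j\geq m}X_j .
\end{equation*}
Here $\sum_{j\geq m}X_j$ is exactly the total $x$-weight carried by edges lying outside the stripes $\phi(1),\dots,\phi(m-1)$. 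Comparing with the previous display, it suffices to prove
\begin{equation*}
\sum_{j\geq m}X_j\;\geq\;g_{m-1}^{\phi},\qquad m=1,\dots,\ell.
\end{equation*}

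This single inequality is where the combinatorial structure of circulant graphs enters, and I expect it to be the main obstacle. Let $S_1,\dots,S_{g_{m-1}^{\phi}}$ be the connected components of $G_{m-1}:=C\langle\{\phi(1),\dots,\phi(m-1)\}\rangle$, which by Proposition \ref{prop:Ham} partition $V$ into $g_{m-1}^{\phi}$ cosets. Every edge of $G_{m-1}$ stays within a single component, so each edge of $\delta(S_t)$ belongs to some stripe $\phi(j)$ with $j\geq m$; hence $\sum_{t}\sum_{e\in\delta(S_t)}x_e=2\sum_{e\ \text{crossing}}x_e\leq 2\sum_{j\geq m}X_j$, where each crossing edge contributes to the cuts of the two distinct components containing its endpoints. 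For $2\leq m\leq\ell$ minimality of $\ell$ forces $g_{m-1}^{\phi}>1$, so each $S_t$ is a nonempty proper subset and its subtour constraint yields $\sum_{e\in\delta(S_t)}x_e\geq 2$; summing over the $g_{m-1}^{\phi}$ components gives $2\sum_{j\geq m}X_j\geq 2g_{m-1}^{\phi}$. The boundary case $m=1$ is immediate, since $G_0$ has $n$ singleton components and summing the degree constraints gives $\sum_{j\geq 1}X_j=\sum_e x_e=n=g_0^{\phi}$.

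Chaining the three displays proves $\VDV\leq\sum_e c_e x_e$ for every feasible $x$, and minimizing over $x$ yields $\VDV\leq\LPOPT$. The delicate points, beyond the crux inequality itself, are bookkeeping: the increments $\Delta_m$ must be nonnegative for the cost-lowering step to point the correct way; the components of $G_{m-1}$ must be genuine subtour sets, which is exactly why the argument only reaches level $m-1<\ell$ (at $g^\phi=1$ a component is all of $V$ and carries no valid constraint); and the factor of $2$ from double-counting crossing edges must be matched against the $2$ on the right-hand side of each subtour and degree constraint. Everything else is routine.
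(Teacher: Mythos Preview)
Your argument is correct and is essentially the same weak-duality approach the paper attributes to De Klerk and Dobre: they exhibit a feasible dual solution (for the relaxation without degree constraints) of value $\VDV$, while you carry out the equivalent primal-side aggregation, putting implicit multipliers $\tfrac{\Delta_m}{2}$ on the subtour constraints for the $g_{m-1}^{\phi}$ components of $C\langle\{\phi(1),\dots,\phi(m-1)\}\rangle$. The Abel-summation rewrite $\VDV=\sum_{m=1}^{\ell}\Delta_m\,g_{m-1}^{\phi}$ and the key cut inequality $\sum_{j\geq m}X_j\geq g_{m-1}^{\phi}$ are exactly what drive their dual feasibility, so the two presentations coincide; note too that your $m=1$ step only needs the singleton subtour constraints (giving $\sum_e x_e\geq n$), so your proof in fact also yields $\VDV\leq\RPOPT$, matching the paper's stronger intermediate conclusion.
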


To prove this result, de Klerk and Dobre \cite{Klerk11} relax the subtour LP by dropping the degree constraints.  Denote by $\RPOPT$ the value of an optimal solution to this LP, so that:
$$\begin{array}{l l l l}
\RPOPT= & \min & \sum_{e\in E} c_e x_e & \\
& \text{subject to} & \sum_{e\in \delta(S)} x_e \geq 2, & S\subset V: S\neq \emptyset, S\neq V \\
& & 0\leq x_e \leq 1, & e\in E,
\end{array} $$
and $$\RPOPT \leq \LPOPT.$$  Any feasible solution to the dual of this relaxed LP thus also provides a lower bound on $\LPOPT.$  
De Klerk and Dobre \cite{Klerk11} provide a feasible solution to this dual of value equal to $\VDV,$ thus showing $$\VDV\leq \RPOPT\leq \LPOPT.$$

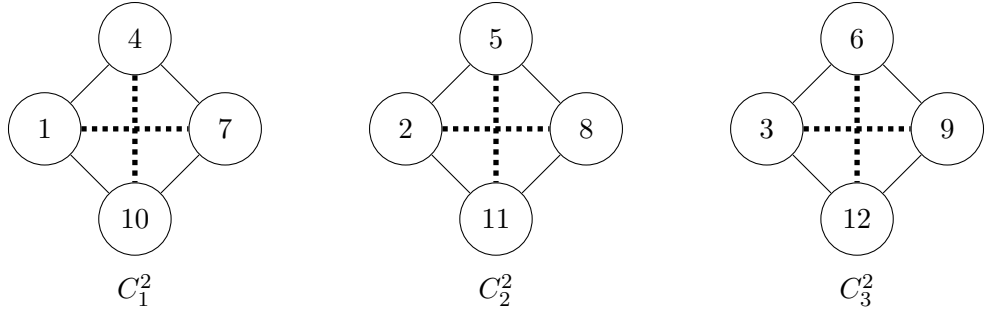
\begin{figure}[t]
\centering

\begin{tikzpicture}[scale=0.6]
\tikzset{vertex/.style = {shape=circle,draw,minimum size=2.5em}}
\tikzset{edge/.style = {->,> = latex'}}
\tikzstyle{decision} = [diamond, draw, text badly centered, inner sep=3pt]
\tikzstyle{sq} = [regular polygon,regular polygon sides=4, draw, text badly centered, inner sep=3pt]
\node[vertex] (a) at  (-10, 0) {$1$};
\node[vertex] (b) at  (-6, 0) {$7$};
\node[vertex] (c) at  (-8, 2) {$4$};
\node[vertex] (d) at  (-8, -2) {$10$};

\node[vertex] (a1) at  (-2, 0) {$2$};
\node[vertex] (b1) at  (2, 0) {$8$};
\node[vertex] (c1) at  (0, 2) {$5$};
\node[vertex] (d1) at  (0, -2) {$11$};

\node[vertex] (a2) at  (6, 0) {$3$};
\node[vertex] (b2) at  (10, 0) {$9$};
\node[vertex] (c2) at  (8, 2) {$6$};
\node[vertex] (d2) at  (8, -2) {$12$};
\draw[dotted,line width=2pt] (a) -- (b);
\draw[dotted,line width=2pt] (c) -- (d);
\draw (c) -- (a);
\draw (d) -- (a);
\draw (d) -- (b);
\draw (c) -- (b);

\draw[dotted,line width=2pt] (a1) -- (b1);
\draw[dotted,line width=2pt] (c1) -- (d1);
\draw (c1) -- (a1);
\draw (d1) -- (a1);
\draw (d1) -- (b1);
\draw (c1) -- (b1);

\draw[dotted,line width=2pt] (a2) -- (b2);
\draw[dotted,line width=2pt] (c2) -- (d2);
\draw (c2) -- (a2);
\draw (d2) -- (a2);
\draw (d2) -- (b2);
\draw (c2) -- (b2);

\draw  (-8, -3) node[below] {$C_1^2$};
\draw  (0, -3) node[below] {$C_2^2$};
\draw  (8, -3) node[below] {$C_3^2$};
\end{tikzpicture}

\caption{The graph $C\langle\{6, 3\}\rangle$ for $n=12.$  If $\{\phi(1), \phi(2)\}=\{3, 6\}$, the three components are $C^2_1, C^2_2,$ and $C^2_3.$  Dashed edges are of length 6.  In this example, $g_2=3.$ } \label{fig:CKNotation}
\end{figure}

Theorem \ref{thm:DK} leads to a bound on the  integrality gap of the subtour LP on circulant instances.  The integrality gap  represents the worst-case ratio of the original problem's optimal solution to the relaxation's optimal solution.
\begin{definition} Let $\TSPOPT(c_1, ..., c_d)$  denote the value of the optimal solution to the TSP for a circulant TSP instance with stripe costs $c_1, ..., c_d,$ and let 
$\LPOPT(c_1, ..., c_d)$ denote the value of the optimal solution of the subtour LP for the same circulant instance.   The {\bf integrality gap} for the subtour LP on circulant instances is  $$\sup_{(c_1, ..., c_d) \in \R^d_{\geq 0}} \f{\TSPOPT(c_1, ..., c_d)}{\LPOPT(c_1, ..., c_d)}.$$ 
\end{definition}  This ratio is bounded below by 1, since the subtour LP is a relaxation of the TSP.  
For metric (but not necessarily circulant) instances, Wolsey \cite{Wol80},  Cunningham \cite{Cun86}, and Shmoys and Williamson \cite{Shm90} show that the integrality gap of the subtour LP is at most $\f{3}{2}.$  
Theorem \ref{thm:DK} can also be used to show that, on circulant (but not necessarily metric) instances, the subtour LP also has a bounded integrality gap. 
\begin{theorem}\label{thm:int}
The integrality gap of the subtour LP restricted to circulant TSP instances is at most 2.  That is,
 $$\sup_{(c_1, ..., c_d) \in \R^d_{\geq 0}} \f{\TSPOPT(c_1, ..., c_d)}{\LPOPT(c_1, ..., c_d)}\leq 2,$$ 
\end{theorem}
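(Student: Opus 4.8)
The plan is to bound $\TSPOPT$ from above by $2\,\VDV$ and then invoke Theorem \ref{thm:DK}. Since that theorem gives $\VDV \le \LPOPT$, and trivially $\VDV \le \LPOPT \le \TSPOPT$, once I exhibit for an arbitrary circulant instance a single Hamiltonian tour of cost at most $2\,\VDV$ I immediately obtain
$$\frac{\TSPOPT}{\LPOPT} \le \frac{\TSPOPT}{\VDV} \le \frac{2\,\VDV}{\VDV} = 2,$$
and taking the supremum over all nonnegative cost vectors proves the theorem (when $\VDV = 0$ the instance is degenerate and the ratio is read as $1$). So the entire content is the construction of a tour of cost at most $2\,\VDV$.

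To build such a tour I would use the component structure of Proposition \ref{prop:Ham} together with the Hamiltonian-path formula of Proposition \ref{prop:HP}. Write $g := g_{\ell-1}^{\phi}$; then $C\langle\{\phi(1),\dots,\phi(\ell-1)\}\rangle$ has exactly $g$ components, each a translate of the others (a coset of $\langle g\rangle$ in $\Z_n$) and each a connected circulant graph on $n/g$ vertices, hence Hamiltonian by Proposition \ref{prop:Ham}. Taking a minimum-cost Hamiltonian path inside each component and summing, the $g$-sequence accounting (using that $g_i^{\phi} = g\cdot g_i'^{\,\phi}$, where $g'$ is the $g$-sequence of the rescaled sub-instance, for $i \le \ell-1$) shows these paths together cost exactly $\sum_{i=1}^{\ell-1}(g_{i-1}^{\phi}-g_i^{\phi})c_{\phi(i)} = P - (g-1)c_{\phi(\ell)}$, where $P$ is the global minimum Hamiltonian-path cost of Proposition \ref{prop:HP}. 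Because $\gcd(\phi(\ell),g) = g_\ell^{\phi} = 1$, the stripe-$\phi(\ell)$ edges link the $g$ components cyclically, so I would stitch the component paths into one Hamiltonian cycle using only stripe-$\phi(\ell)$ edges, traversing the components in the order $0, \phi(\ell), 2\phi(\ell), \dots \pmod g$ and alternating the direction of traversal so that each stitching edge joins a path endpoint to its $\phi(\ell)$-translate.

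The main obstacle, and the source of the factor $2$, is closing this walk into a genuine Hamiltonian cycle that uses only stripes of cost at most $c_{\phi(\ell)}$. The alternating forward/backward traversal keeps every interior stitching edge a stripe-$\phi(\ell)$ edge, but the final edge returning to the start need not align: its validity depends on the parity of $g$ and on the endpoints of the chosen component paths, and a careless closing could force an arbitrarily expensive stripe, destroying any constant bound. I would handle this by choosing the component paths (reversing or re-rooting some of them, or spending a bounded number of extra stripe-$\phi(\ell)$ edges as needed) so that the cycle closes within the stripes $\phi(1),\dots,\phi(\ell)$. Charging the interior edges to $P$ and the stripe-$\phi(\ell)$ stitching and closing edges to $c_{\phi(\ell)}$, the total stays within $2P + 2c_{\phi(\ell)} = 2\,\VDV$; verifying that the overhead never exceeds this budget — in particular the parity analysis for closing the cycle, and the guarantee that no stripe more expensive than $\phi(\ell)$ is ever used — is the crux, and the budget is genuinely tight on the instances of Section \ref{sec:main2}.

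As a shorter alternative I would note that the heuristics of Van der Veen, Van Dal, and Sierksma \cite{VDV91} (for distinct costs) and Gerace and Greco \cite{Ger08} (in general) already produce tours whose cost is analyzed against the $\VDV$ bound; if their guarantee is read as ``tour $\le 2\,\VDV$,'' then together with $\VDV \le \LPOPT$ it yields $\TSPOPT \le 2\,\VDV \le 2\,\LPOPT$ at once. The only point to confirm on this route is that their analysis is indeed relative to the $\VDV$ lower bound and not merely to $\TSPOPT$ itself, since a bound of the form $\text{tour} \le 2\,\TSPOPT$ would run in the wrong direction for our purpose.
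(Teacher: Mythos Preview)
Your proposal is correct and matches the paper's approach: the paper's proof is precisely your ``shorter alternative,'' combining $\VDV \le \LPOPT$ from Theorem~\ref{thm:DK} with the Gerace--Greco \cite{Ger08} result that one can construct a Hamiltonian tour of cost at most $2\cdot\VDV$, which resolves the concern you flag in your last sentence. Your direct construction sketch is along the right lines (and roughly mirrors what the paper's appendix details), but as you yourself note, the parity analysis for closing the cycle is not yet carried out; the paper sidesteps this by citing \cite{Ger08} and relegating the construction details to an appendix.
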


\begin{proof}
Consider any circulant instance.  Let $\TSPOPT$ denote the value of the optimal solution to the TSP on this instance, $\LPOPT$ denote the value of the optimal solution to the subtour LP on this instance, and let $\VDV$ denote the value of the  Van der Veen, Van Dal, and Sierksma \cite{VDV91} lower bound on this instance.  By Theorem \ref{thm:DK},
 $$ \f{\TSPOPT}{\LPOPT}\leq \f{\TSPOPT}{\VDV}.$$
 Theorem 6.3 in Gerace and Greco \cite{Ger08} argues that $\f{\TSPOPT}{\VDV}\leq 2,$ by constructing Hamiltonian tours of cost at most $2\cdot\VDV.$ See Appendix \ref{ap} for details on this construction. \hfill 
\end{proof}

\section{A Combinatorial Interpretation of the Subtour LP}\label{Main}

In this section, we prove our first main result, answering Conjecture \ref{conj:DK}.  Recall that $$\ell =  \min \{i: 1\leq i\leq d, g_i^{\phi}=1\}.$$
\begin{theorem} \label{thm:main}
Let $c_1, ..., c_d$ be the edge costs of a circulant instance and let $\phi$ be an associated stripe permutation.  Let  $\LPOPT$  denote the optimal value of the subtour LP and let $\VDV$ denote the value of the lower bound in Proposition \ref{prop:VDV}.  Then:
$$\VDV = \LPOPT.$$  Moreover, an optimal solution to the subtour LP is achieved by setting, for $1\leq i\leq d$, the weight on  every edge $e$ of length $\phi(i)$ to be
  $$x_e = \begin{cases}
 \f{g^{\phi}_{i-1}-g_i^{\phi}}{n} ,& i\neq \ell, \phi(i)\neq \f{n}{2}\\
  2\f{g^{\phi}_{i-1}-g^{\phi}_i}{n} ,& i\neq \ell, \phi(i)= \f{n}{2} \\
 \f{g^{\phi}_{i-1}}{n} ,& i=\ell, \phi(i)\neq \f{n}{2} \\
 2\f{g^{\phi}_{i-1}}{n} ,& i= \ell, \phi(i)= \f{n}{2}. \\
  \end{cases}$$
\end{theorem}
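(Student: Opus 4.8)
The plan is to lean on Theorem~\ref{thm:DK}, which already supplies $\VDV \le \LPOPT$, so that only the reverse inequality $\LPOPT \le \VDV$ remains. I would obtain this by verifying that the explicitly displayed vector $x$ is feasible for the subtour LP and has objective value exactly $\VDV$. Since the subtour LP is a minimization, exhibiting a feasible point of value $\VDV$ forces $\LPOPT \le \VDV$; combined with Theorem~\ref{thm:DK} this yields $\VDV = \LPOPT$ and simultaneously certifies $x$ as optimal. Thus the whole argument reduces to two tasks: (i) feasibility of $x$, and (ii) the objective identity $\sum_{e} c_e x_e = \VDV$.

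Task (ii) and the ``easy'' half of (i) are routine telescoping computations that I would dispatch first. For the bound $0 \le x_e \le 1$, nonnegativity is immediate from $g_i^{\phi} \mid g_{i-1}^{\phi}$ (so $g_{i-1}^{\phi} \ge g_i^{\phi}$), and the upper bound follows from $g_{i-1}^{\phi} \mid n$, with a short separate check in the doubled cases $\phi(i) = n/2$ (here $\phi(i)=n/2$ forces $g_{i-1}^{\phi} \le n/2$ once $i>1$, and gives equality $x_e=1$ when $i=1$). For the degree constraints, the key observation is that each level $i<\ell$ contributes exactly $2(g_{i-1}^{\phi}-g_i^{\phi})/n$ to every vertex \emph{whether or not} $\phi(i)=n/2$: a non-half stripe gives two incident edges of weight $w_i$, while the half stripe gives one incident edge of weight $2w_i$. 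Summing these contributions together with the level-$\ell$ term $2g_{\ell-1}^{\phi}/n$ and telescoping $\sum_{i=1}^{\ell-1}(g_{i-1}^{\phi}-g_i^{\phi}) + g_{\ell-1}^{\phi} = g_0^{\phi} = n$ yields exactly $2$. The same cancellation makes the objective clean: the total mass on stripe $\phi(i)$ is $g_{i-1}^{\phi}-g_i^{\phi}$ (respectively $g_{\ell-1}^{\phi}$ at $i=\ell$), independent of the $n/2$ doubling, so $\sum_e c_e x_e = \sum_{i=1}^{\ell-1}(g_{i-1}^{\phi}-g_i^{\phi})c_{\phi(i)} + g_{\ell-1}^{\phi} c_{\phi(\ell)}$, which equals $\VDV$ after substituting $g_\ell^{\phi}=1$.

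The crux is the subtour elimination constraints $\sum_{e \in \delta(S)} x_e \ge 2$ for all proper nonempty $S$. Fix such an $S$, write $H_i := C\langle\{\phi(1),\dots,\phi(i)\}\rangle$ (whose $g_i^{\phi}$ components are described explicitly by Proposition~\ref{prop:Ham}), and let $f_i$ denote the number of length-$\phi(i)$ edges in $\delta(S)$, so that the cut value is $\sum_{i=1}^{\ell} w_i f_i$ for the per-edge weights $w_i$ above. The plan is to lower-bound each $f_i$ using the component structure of $H_i$: every component of $H_i$ is itself a connected circulant graph and hence $2$-edge-connected, \emph{except} that the length-$n/2$ stripe contributes a perfect matching whose components are joined by a single edge. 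Letting $j^\ast$ be the first level at which $S$ fails to be a union of components of $H_{j^\ast}$ (so $f_i=0$ for $i<j^\ast$), I would bound the crossing edges contributed by each split component at each level $\ge j^\ast$ and assemble them through the exact summation-by-parts identity $\sum_{i=1}^{\ell} w_i f_i = w_\ell \kappa_\ell + \sum_{i=1}^{\ell-1}(w_i - w_{i+1})\kappa_i$, where $\kappa_i := \sum_{j \le i} f_j$ counts all edges of $H_i$ crossing $\delta(S)$.

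The main obstacle is precisely this cross-level combination. A naive argument that bounds the cut by its contribution at a single level is \emph{insufficient}: using only that some component is split at level $j^\ast$ gives a crossing of weight $2w_{j^\ast}$, which is strictly below $2$ in general (e.g.\ for a singleton $S=\{v\}$ the cut equals the degree $2$, but $2w_{j^\ast}<2$ whenever the cheapest gcd-reducing stripe is not $n/2$). One must therefore track how crossing edges accumulate over \emph{all} levels $j^\ast \le i \le \ell$, accounting both for components that are freshly split and for those that remain split from lower levels, and verify that the worst case is exactly $2$. The length-$n/2$ stripe is the persistent special case threaded through this analysis: it breaks $2$-edge-connectivity (a split costs only one crossing edge), but its doubled weight exactly compensates, so that each component-merge still contributes the ``right'' amount $2(g_{i-1}^{\phi}-g_i^{\phi})/n$ to the relevant cut. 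An appealing alternative, which I would keep in reserve, is to induct on the gcd tower $n=g_0^{\phi}>\cdots>g_\ell^{\phi}=1$ by contracting the components of $H_1$ to obtain a smaller circulant instance on $\mathbb{Z}_{g_1^{\phi}}$, reducing a general cut to an intra-component cut (handled by $2$-edge-connectivity of a single cyclic stripe) plus a cut in the quotient instance (handled by the inductive hypothesis).
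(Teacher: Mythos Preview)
Your global plan is exactly the paper's: invoke Theorem~\ref{thm:DK} for $\VDV\le\LPOPT$, then exhibit $x$ as a feasible point of value $\VDV$. Your handling of the box constraints, the degree constraints (telescoping to $2g_0^\phi/n=2$), and the objective identity is correct and mirrors Lemma~\ref{lem:deg} and the surrounding remarks.

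The gap is in the subtour-elimination step. Your summation-by-parts framework is sound, but the only lower bound you offer on $\kappa_i$ is $2$-edge-connectivity of each split component of $H_i$, which gives $\kappa_i\ge 2$ for $i\ge j^\ast$ and $\kappa_i=0$ for $i<j^\ast$. Plugging this into $w_\ell\kappa_\ell+\sum_{i<\ell}(w_i-w_{i+1})\kappa_i$ yields only $2w_{j^\ast}=2(g^\phi_{j^\ast-1}-g^\phi_{j^\ast})/n<2$, strictly short of the target. Concretely, for $S=C_k^{j^\ast-1}$ (a single component of $H_{j^\ast-1}$) one has $\kappa_j=2(j-j^\ast+1)\,n/g^\phi_{j^\ast-1}$, far larger than $2$; it is this growth across levels that makes the cut equal $2$, and your sketch does not supply a bound that captures it. You flag the difficulty yourself (``the main obstacle is precisely this cross-level combination''), but do not resolve it.

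The paper closes this gap by a genuinely different mechanism. It works with the dual quantity $x(E(S))$ and argues by minimal counterexample: pick a smallest $S^*$ with $x(E(S^*))>|S^*|-1$, locate the unique level $j$ with $S^*\subset C^{j+1}_k$ but $S^*$ not inside any single $C^j$, and decompose $S^*$ along its intersections with the $C^j$. The crucial structural input (absent from your outline) is Lemma~\ref{lem:mergeCi}: contracting the $C^j$'s inside $C^{j+1}$ yields a \emph{directed cycle} under stripe $\phi(j+1)$, so an $s$-subset of those contracted vertices spans at most $s-1$ arcs. This gives the sharp bound $\sum_{i_1\ne i_2} x(\delta^+(C^j_{i_1},C^j_{i_2}))\le s-1$ (Lemma~\ref{lem:MultCi}), after which three short cases finish. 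Your reserve plan---induct on the gcd tower via the quotient $\mathbb{Z}_n\to\mathbb{Z}_{g_1^\phi}$---is viable and closer in spirit to this decomposition, but the inductive merge step would still need the same cycle-structure observation to control the cross-component weight.
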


The explicit $x_e$ values given in Theorem \ref{thm:main} spread out the weight placed by the Van der Veen, Van Dal, and Sierksma \cite{VDV91} bound,
 $$\VDV=\left(\sum_{i=1}^{\ell} (g_{i-1}^{\phi}-g_i^{\phi})c_{\phi(i)}\right) + c_{\phi(\ell)}.$$  The coefficient of 
$c_{\phi(i)}$ is spread uniformly over all edges of length $\phi(i).$  For $n$ even and $\phi(i)=d=n/2,$ there are only $\f{n}{2}$ such edges; otherwise there are $n$ edges.   
As a result, we remark the following.
  
  \begin{remark}
Let $x$ be defined as in Theorem \ref{thm:main}.  Then $$\sum_{e\in E} c_e x_e = \VDV.$$
\end{remark}

Note also that the solution places zero weight on edges of length $\phi(\ell+1), ..., \phi(d)$ as well as zero weight on edges of any length $\phi(i)$ such that $g_i^{\phi}=g_{i-1}^{\phi}.$  The optimal solution $x$, therefore, only depends on the relative ordering of edge costs $\phi$, and specifically, those stripes $\phi(i)$ for which $C\langle\{\phi(1), ..., \phi(i)\}\rangle$ has fewer components than $C\langle\{\phi(1), ..., \phi(i-1)\}\rangle$

  To simplify our work that follows, we assume that the edges are ordered so that \begin{equation}\label{eq:gdist}
  g_0^{\phi}>g_1^{\phi}>\cdots>g_{\ell}^{\phi}=1.
  \end{equation}    We can make this assumption without loss of generality: If $g_i^{\phi}=g_{i-1}^{\phi}$ for $i<\ell,$ then zero weight is placed on any edge of length $\phi(i)$ by both the Van der Veen, Van Dal, and Sierksma \cite{VDV91} bound and in the edge weights in Theorem \ref{thm:main}.  Both the   Van der Veen, Van Dal, and Sierksma \cite{VDV91} bound and the subtour LP solution we find in Theorem \ref{thm:main} thus remain the same on an instance where $c_{\phi(i)}$ is increased beyond $c_{\phi(\ell)}$.  By applying this argument iteratively, we can obtain an instance of circulant TSP for which the $g$-sequence is strictly decreasing until it reaches $1$, and  which the Van der Veen, Van Dal, and Sierksma \cite{VDV91} bound and the subtour LP  treat equivalently.

     For $0\leq i \leq \ell-1$ and $1\leq  k \leq g_i,$ we use $C_k^i$ to denote the vertex set of the $k$th connected component of the graph $C\langle\{\phi(1), ..., \phi(i)\}\rangle.$   Note that $C_k^i$ and $C_{k'}^i$ are isomorphic. See Figure \ref{fig:CKNotation}.  We let $C^i$ denote an arbitrary representative of $C_1^i, ..., C_{g_i}^i.$ 
     
  Our proof of Theorem \ref{thm:main} involves several steps.  In Lemma \ref{lem:deg}, we show that the solution $x$ posited satisfies the degree constraints.  We then characterize the components $C^i$ for $1\leq i\leq \ell-1$ as maximally dense:  in Lemma \ref{lem:Ci2} we show they satisfy the subtour elimination constraints with equality.  To complete the proof, we look at arbitrary subsets $S\subset V$ in Proposition \ref{prop:ST}. 
  
  Throughout the proof, we suppress the dependence of $g^{\phi}$ on $\phi$ to simplify notation.  It will be helpful to  treat our graph as a directed graph.    Each edge from the $i$th stripe, $i\neq n/2$, is directed $(v, v+i)$ (with the convention that $v+i$ is taken mod $n$).  If $n$ is even, we treat each edge of length $n/2$ incident to $v$ as two directed edges, $(v, v+(n/2))$ and $(v+(n/2), v)$, each of which is assigned half the weight of an edge with length $n/2.$ Thinking of our graph in this way means that every vertex $v$ is incident to exactly two edges from each stripe $i=1, ..., d$, with one edge directed into $v$ and one edge directed out of $v$.  That is, the edges of stripe $\phi(i)$ form a cycle cover on $V$.  Moreover, this simplifies the number of cases for $x_e$ since, if $n$ is even and $\phi(i)=n/2,$ we still spread the weight over $n$ edges; the weight on every edge $e$ of length $\phi(i)$ is then:
  $$x_e = \begin{cases}
 \f{g^{}_{i-1}-g_i^{}}{n} ,& i\neq \ell\\
 \f{g^{}_{i-1}}{n} ,& i=\ell. \\
  \end{cases}$$
   We fix $x\in\R^{E}$ to be the edge-weight vector with these weights.   
   
 For a set of edges $F\subset E,$ $x(F)$ denotes the total weight of edges in $F$: $\sum_{e\in F}x_e$.  We treat $\delta(S)$ as the set of all edges with exactly one endpoint in $S$, whether that edge is directed into or out of $S$.  Similarly, we treat $E(S)$ as the set of edges with both endpoints in $S,$ i.e. $E(S):=\{(i, j): i, j\in  S\}.$ For $A, B\subset V$, let $\delta^+(A, B):=\{e=(u, v): u\in A, v\in B\}$ denote the set of edges starting in $A$ and ending in $B$. We use $\sqcup$ to denote a disjoint union (i.e. a partition): $A=B\sqcup C$ means $A=B\cup C$ and $B\cap C=\emptyset.$   Finally, we use $\backslash$ for set-minus so that $A\backslash B=\{a\in A: a\notin B\}.$

  \begin{lemma}\label{lem:deg}
  For any vertex $v\in V$, $x(\delta(v))=2.$
  \end{lemma}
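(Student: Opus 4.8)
The plan is to exploit the directed reformulation of $x$ set up immediately before the lemma, under which every edge of stripe $\phi(i)$ carries the per-edge weight
$$x_e = \begin{cases} \f{g_{i-1}-g_i}{n}, & i \neq \ell \\ \f{g_{i-1}}{n}, & i = \ell, \end{cases}$$
and the edges of each stripe form a cycle cover of $V$. The key observation I would make first is that, in this directed picture, each vertex $v$ is incident to exactly two edges of stripe $\phi(i)$: the out-edge $(v, v+\phi(i))$ and the in-edge $(v-\phi(i), v)$. This holds uniformly in $i$, including the case $n$ even and $\phi(i)=\f{n}{2}$, where the single undirected edge at $v$ of length $\f{n}{2}$ has been split into the two half-weight directed edges $(v, v+\f{n}{2})$ and $(v+\f{n}{2}, v)$; one checks that each then carries weight exactly $\f{g_{i-1}-g_i}{n}$ (respectively $\f{g_{i-1}}{n}$ if $i=\ell$), matching the formula above. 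Consequently the contribution of stripe $\phi(i)$ to $x(\delta(v))$ is exactly $2x_e$, and crucially this quantity is independent of the choice of $v$, reflecting the circulant symmetry.

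With this in hand, the computation reduces to a single telescoping sum. Because we have assumed without loss of generality that $g_0 > g_1 > \cdots > g_\ell = 1$, and because the posited solution places zero weight on every stripe $\phi(\ell+1), \ldots, \phi(d)$, only the stripes $\phi(1), \ldots, \phi(\ell)$ contribute. Summing their contributions gives
$$x(\delta(v)) = 2\sum_{i=1}^{\ell} x_{e_i} = 2\left(\sum_{i=1}^{\ell-1} \f{g_{i-1}-g_i}{n} + \f{g_{\ell-1}}{n}\right).$$
The first sum telescopes to $\f{g_0 - g_{\ell-1}}{n}$, and adding the final term cancels $g_{\ell-1}$ and leaves $\f{g_0}{n} = \f{n}{n} = 1$, so that $x(\delta(v)) = 2$, as desired. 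Note that the special value $x_e = \f{g_{\ell-1}}{n}$ at $i=\ell$ (rather than $\f{g_{\ell-1}-g_\ell}{n}$) is exactly what supplies the extra $\f{1}{n}$ per edge needed to make the telescoping terminate at $\f{g_0}{n}=1$ instead of $\f{g_0-1}{n}$.

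I do not expect a genuine obstacle here: once the directed bookkeeping is fixed, the statement is a one-line telescoping identity, and no subtour or connectivity structure is needed. The only point requiring care is the even-$n$, length-$\f{n}{2}$ stripe, where one must confirm that splitting the undirected edge into two half-weight directed edges both preserves the stripe's total contribution to the degree and keeps the one-in/one-out accounting valid; verifying this is precisely what justifies collapsing the four-case formula of Theorem \ref{thm:main} into the uniform two-case formula used above, after which every stripe is treated identically.
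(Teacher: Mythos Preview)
Your proof is correct and follows essentially the same approach as the paper: use the directed reformulation so that each stripe contributes exactly two equally weighted edges at $v$, then telescope $\sum_{i=1}^{\ell-1}(g_{i-1}-g_i)+g_{\ell-1}=g_0=n$ to obtain $x(\delta(v))=2$. Your treatment of the $\phi(i)=n/2$ case via the half-weight directed split is exactly the bookkeeping the paper sets up just before the lemma.
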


\begin{proof}
Let $i<\ell$ and consider edges of length $\phi(i)$ incident to $v$.   There are two edges of weight $\f{g_{i-1}-g_i}{n}:$ $(v, v+\phi(i))$ and $(v, v-\phi(i))$, so  the total weight of edges of length $\phi(i)$ incident to $v$  is $2\f{g_{i-1}-g_i}{n}.$  Analogously, the weight of edges of length $\phi(\ell)$ incident to $v$ is $\f{2g_{\ell-1}}{n}.$  Thus
\begin{align*}
x(\delta(v)) 
= \sum_{i=1}^{\ell} \sum_{\substack{e\in \delta(v):\\  \text{length}(e)=\phi(i)}} x_e
= \f{2}{n}\left(\left( \sum_{i=1}^{\ell-1}\left( g_{i-1}-g_i\right) \right)+  g_{\ell-1}\right)
= \f{2}{n}g_0 
=2,
\end{align*}
since $g_0=n.$  \hfill 
\end{proof}

We next argue that, for a set of vertices $S=C_k^i$, the only edges within $E(S)$ that have nonzero weight are those of length $\phi(1), ..., \phi(i).$  

\begin{lemma} \label{lem:Ci}
Let $S=C_k^i$ where $0\leq i \leq \ell-1$ and $1\leq  k \leq g_i.$  Let $e\in E(S)$.  Then $x_e>0$ implies $e$ is an edge in stripes $\phi(1), ..., \phi(i).$  
\end{lemma}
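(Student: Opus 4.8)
The plan is to exploit the explicit description of the component $S = C_k^i$ furnished by Proposition \ref{prop:Ham}, together with the divisibility structure of the $g$-sequence. Recall that $g_i = \gcd(n, \phi(1), \dots, \phi(i))$, so by Proposition \ref{prop:Ham} the vertex set $S = C_k^i$ is exactly one residue class modulo $g_i$; that is, any two vertices of $S$ are congruent mod $g_i$.

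First I would reduce to the relevant stripes. Since $x_e > 0$, the edge $e$ must lie in one of the stripes $\phi(1), \dots, \phi(\ell)$: by the reduction (\ref{eq:gdist}) the $g$-sequence strictly decreases down to $g_\ell = 1$, so every stripe $\phi(j)$ with $j \le \ell$ carries positive weight while stripes $\phi(\ell+1), \dots, \phi(d)$ carry zero weight. Write $e$ as an edge of stripe $\phi(j)$ with $1 \le j \le \ell$; it joins some $v$ to $v + \phi(j) \bmod n$. Because both endpoints lie in $S$ and $S$ is a residue class mod $g_i$, subtracting gives $\phi(j) \equiv 0 \pmod{g_i}$, i.e. $g_i \mid \phi(j)$.

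The crux is then to turn $g_i \mid \phi(j)$ into the bound $j \le i$. I would use that the $g$-sequence is a divisibility chain, $g_\ell \mid g_{\ell-1} \mid \cdots \mid g_1 \mid g_0 = n$, since each $g_m = \gcd(g_{m-1}, \phi(m))$ divides $g_{m-1}$. Suppose toward a contradiction that $j > i$. Then $j - 1 \ge i$, so chaining the divisibilities yields $g_{j-1} \mid g_i$, and combined with $g_i \mid \phi(j)$ we get $g_{j-1} \mid \phi(j)$. Hence $g_j = \gcd(g_{j-1}, \phi(j)) = g_{j-1}$, contradicting the strict inequality $g_{j-1} > g_j$ guaranteed by (\ref{eq:gdist}) for $j \le \ell$. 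Therefore $j \le i$, so $e$ lies in one of the stripes $\phi(1), \dots, \phi(i)$, as claimed.

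I expect the only real content to be this last step: recognizing that ``both endpoints in the same mod-$g_i$ class'' forces $g_i \mid \phi(j)$, and that this divisibility is incompatible with $\phi(j)$ being a stripe added after the $g$-sequence has already dropped strictly below $g_i$. Everything else is bookkeeping with Proposition \ref{prop:Ham} and the recursive definition of the $g$-sequence.
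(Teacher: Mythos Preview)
Your argument is correct and is essentially the same as the paper's: both use Proposition~\ref{prop:Ham} to see that $e\in E(S)$ forces $g_i\mid \phi(j)$, then combine this with $g_{j-1}\mid g_i$ (for $j>i$) to obtain $g_j=\gcd(g_{j-1},\phi(j))=g_{j-1}$, which under~(\ref{eq:gdist}) rules out $i<j\le\ell$. The only cosmetic difference is that you first restrict to $j\le\ell$ via $x_e>0$ and then derive a contradiction, whereas the paper treats all $t>i$ at once and concludes $x_e=0$ directly.
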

\begin{proof}
By Proposition \ref{prop:Ham},  $S=\{v:v\equiv_{g_i}j\}$ for some $0\leq j\leq g_i-1.$  Consider an edge of $e=(v, v+\phi(t))\in E(S)$ of length $\phi(t)$ with $t>i.$  Then, since $e$ has both endpoints in $C^i_k,$ $\phi(t)=c \cdot g_i$ for some $c\in \N.$  Hence  $g_t = \gcd(g_{t-1}, \phi(t))=\gcd(g_{t-1}, c\cdot g_i)=g_{t-1},$ since $g_{t-1}$ divides $g_i,$ and so $x_e=0.$
 \hfill 
\end{proof}

Lemma \ref{lem:Ci} lets us now show that the $C_k^i$ are maximally dense.

\begin{lemma}\label{lem:Ci2}
Let $S=C_k^i$ for $0\leq i \leq \ell-1$ and $1\leq  k \leq g_i.$  Then $x(\delta(S))=2.$
\end{lemma}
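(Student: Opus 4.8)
The plan is to compute $x(\delta(S))$ by grouping the crossing edges according to their stripe, exploiting the directed picture set up before the lemma in which every vertex is incident to exactly one out-edge and one in-edge of each stripe $\phi(t)$. By Proposition~\ref{prop:Ham}, $S=C_k^i$ is a full residue class $\{v:v\equiv_{g_i}j\}$ of size $n/g_i$, so whether a stripe-$\phi(t)$ edge crosses $\partial S$ is decided purely by whether $\phi(t)\equiv 0 \pmod{g_i}$. I would therefore split the sum $x(\delta(S))$ over the stripe index $t$ and evaluate each stripe's contribution.

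First I would dispose of the cheap stripes. For $t\le i$ we have $g_i\mid\phi(t)$, since $g_i=\gcd(n,\phi(1),\dots,\phi(i))$ divides each such $\phi(t)$; hence $v+\phi(t)\equiv_{g_i}v$ and both endpoints of every such edge lie in $S$, so these stripes contribute nothing to $\delta(S)$. At the other end, the stripes $\phi(t)$ with $t>\ell$ carry weight $0$ under the normalization~(\ref{eq:gdist}) and may be ignored.

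The crux is the intermediate range $i<t\le\ell$, where I would argue that every directed stripe-$\phi(t)$ edge incident to $S$ actually leaves $S$. This is exactly where Lemma~\ref{lem:Ci} together with the strict-decrease assumption~(\ref{eq:gdist}) does the work: an internal edge $(v,v+\phi(t))\in E(S)$ of a stripe $t>i$ would force $\phi(t)=c\,g_i$ and hence $g_t=g_{t-1}$, contradicting $g_i>g_{i+1}>\cdots$. Since no stripe-$t$ edge ($t>i$) is internal, the one out-edge and one in-edge of stripe $t$ at each of the $n/g_i$ vertices of $S$ are all distinct and all cross $\partial S$, giving exactly $2n/g_i$ crossing directed edges, each of weight $\tfrac{g_{t-1}-g_t}{n}$ (for $t<\ell$) or $\tfrac{g_{\ell-1}}{n}$ (for $t=\ell$). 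I expect this step to be the main obstacle: one must verify that the crossing count is the full $2n/g_i$ (that out-edges from $S$ and in-edges to $S$ do not collide, which fails only for internal edges), and check that the length-$n/2$ stripe behaves uniformly under the directed-edge convention, so that the same count and per-edge weight apply.

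Finally I would sum the contributions and telescope:
\begin{equation*}
x(\delta(S)) = \frac{2}{g_i}\left(\sum_{t=i+1}^{\ell-1}(g_{t-1}-g_t) + g_{\ell-1}\right) = \frac{2}{g_i}\bigl((g_i-g_{\ell-1}) + g_{\ell-1}\bigr) = 2,
\end{equation*}
where the empty-sum case $i=\ell-1$ gives $\tfrac{2g_{\ell-1}}{g_i}=2$ directly, and the case $i=0$ reduces to Lemma~\ref{lem:deg}.
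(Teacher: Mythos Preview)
Your argument is correct, but it takes the complementary route to the paper's. The paper computes $x(E(S))$ rather than $x(\delta(S))$: by Lemma~\ref{lem:Ci} only the stripes $\phi(1),\dots,\phi(i)$ contribute inside $S$, and since all $n/g_i$ edges of each such stripe~$\phi(j)$ lie in $E(S)$ one gets
\[
x(E(S))=\sum_{j=1}^i\frac{n}{g_i}\cdot\frac{g_{j-1}-g_j}{n}=\frac{g_0-g_i}{g_i}=|S|-1,
\]
and then the identity $x(\delta(S))+2x(E(S))=2|S|$ (a consequence of Lemma~\ref{lem:deg}) yields $x(\delta(S))=2$ in one stroke. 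You instead sum the \emph{external} stripes $i<t\le\ell$ directly and telescope the other way. Your approach needs the extra observation that, for such $t$, not merely the positively-weighted but in fact \emph{all} stripe-$\phi(t)$ edges leave $S$ (which you correctly extract from the strict-decrease assumption~(\ref{eq:gdist})); the paper's route avoids this by pushing the work onto the already-established degree constraints. Both telescopes are equally short, and each makes the role of the $g$-sequence transparent from a different side of the cut.
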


\begin{proof}
By Lemma \ref{lem:Ci}, we can compute $x(E(S))$ by only summing up the weights of edges in the cheapest $i$ stripes. 
Consider any fixed $j$ with $j\leq i<\ell.$  There are $n$ total edges of length  $\phi(j)$ and, since $j\leq i$, none of these edges are in any $\delta(C^i)$. Thus each isomorphic component $C^i\in \{C_1^i, ..., C_{g_i}^i\}$  has $\f{n}{g_i}$ edges of length $\phi(j)$ in $E(C^i)$, and each edge has  weight $\f{g_{j-1}-g_j}{n}.$  Hence $$\sum_{\substack{e\in E(S): \\ \text{length}(e)=\phi(j)}} x_e = \f{n}{g_i}\f{g_{j-1}-g_j}{n}=\f{g_{j-1}-g_j}{g_i}.$$ 

We can now compute:
\begin{align*}
x(E(S)) &= \sum_{j=1}^i \sum_{e\in E(S): \text{length}(e)=\phi(j)} x_e \\
&= \f{1}{g_i} \sum_{j=1}^i (g_{j-1}-g_j)\\
&= \f{g_0-g_i}{g_i} \\
&= \f{n}{g_i}-1\\
&= |C_k^i|-1.
\end{align*}

 The lemma then follows because the degree constraints imply that $x(\delta(S))+2x(E(S))=2|S|,$ so that $x(\delta(S))=2.$ \hfill 
\end{proof}

We now want to extend Lemma \ref{lem:Ci2} to show that $x(\delta(S))\geq 2$ for any $S\subset V$, not just those corresponding to components connected by a set of cheapest stripes.  We will consider any set $S^*$ and partition it into its intersections with certain $C^j$, where $S^*\subset \bigcup_{i=1}^s C_i^j.$  Expanding
$$x(E(S^*)) =  \sum_{i=1}^s x(E(S^* \cap C^j_i))+ \sum_{\substack{1\leq i_1, i_2 \leq s \\ i_1\neq i_2}}  x(\delta^+(S^*\cap C_{i_1}^j, S^* \cap C_{i_2}^j)),$$
we will bound each term of the sum.  To do so, we will bound $x(\delta^+(S^*\cap C_{i_1}^j, S^* \cap C_{i_2}^j))$ by $x(\delta^+( C_{i_1}^j,  C_{i_2}^j)).$  Our first step is thus to understand the edges between distinct $C^j$.

Proposition \ref{prop:Ham} implies that the vertices in a component $C^{j}$ are defined as $\{v: v \bmod g_{j} = i\}$ for some fixed $1\leq i\leq \f{n}{g_j}.$  Because $g_{j+1}$ divides $g_j$, $u\equiv_{g_j} v$ means that $u\equiv_{g_{j+1}} v$: if $u, v$ are in the same $C^j$ then $u, v$ are in the same $C^{j+1}.$  Consequently,  the edges of stripe $\phi(j+1)$ merge  $C^j$ into a smaller number of $C^{j+1}.$  The facts that the $C^i$ are all isomorphic and that $C^i$ has $g_i$ components implies that $\f{g_j}{g_{j+1}}$ components $C^j$ get merged into each $C^{j+1}$. See, for example, Figure \ref{fig:CiMerge}.
 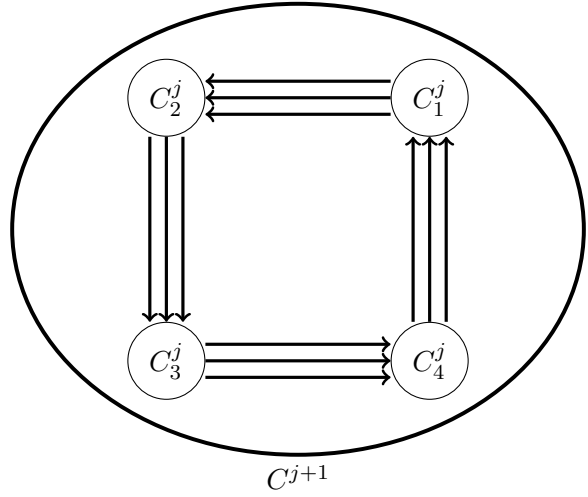
\begin{figure}[t]
  \tikzset{vertex/.style = {shape=circle,draw,minimum size=1.5em}}
\tikzset{edge/.style = {->,> = latex'}}
\tikzset{color1/.style = {fill=blue!65}}
\tikzset{set/.style={draw,circle,inner sep=0pt,align=center}}
\centering 
\begin{tikzpicture}[scale=1,transform shape]
    
        \draw[line width=1.5pt] (0.25, 1.75) ellipse (3.8cm and 3cm);
\draw  (0.25, -1.25) node[below] {$C^{j+1}$};

\node[vertex] (a) at  (2, 3.5) {$C_1^{j}$};
\node[vertex] (b) at  (-1.5, 3.5) {$C_2^{j}$};
\node[vertex] (c) at  (-1.5, 0) {$C_3^{j}$};
\node[vertex] (d) at  (2, 0) {$C_4^{j}$};

\draw [->, line width=1.2pt ]  (a) -- (b);
\draw [->, line width=1.2pt ]  ([yshift=0.22cm]a.west) -- node [above] {} ([yshift=0.22cm]b.east);
\draw [->, line width=1.2pt ]  ([yshift=-0.22cm]a.west) -- ([yshift=-0.22cm]b.east);

\draw [->, line width=1.2pt ]  (b) -- (c);
\draw [->, line width=1.2pt ]  ([xshift=0.22cm]b.south) -- node [above] {} ([xshift=0.22cm]c.north);
\draw [->, line width=1.2pt ]  ([xshift=-0.22cm]b.south) -- ([xshift=-0.22cm]c.north);

\draw [->, line width=1.2pt ]  (c) -- (d);
\draw [->, line width=1.2pt ]  ([yshift=0.22cm]c.east) -- node [above] {} ([yshift=0.22cm]d.west);
\draw [->, line width=1.2pt ]  ([yshift=-0.22cm]c.east) -- ([yshift=-0.22cm]d.west);

\draw [->, line width=1.2pt ]  (d) -- (a);
\draw [->, line width=1.2pt ]  ([xshift=0.22cm]d.north) -- node [above] {} ([xshift=0.22cm]a.south);
\draw [->, line width=1.2pt ]  ([xshift=-0.22cm]d.north) -- ([xshift=-0.22cm]a.south);
\end{tikzpicture}
\caption{The structure of edges from stripe $\phi(j+1)$ (marked by arrows) from Lemma \ref{lem:mergeCi}.  Here $\f{g_{j}}{g_{j+1}}=4.$ }\label{fig:CiMerge}
\end{figure}
Our next lemma describes the role of the edges from stripe $\phi(j+1)$ in this merging process.  It says that the subgraph of $C^{j+1}$ obtained by contracting each $C^j \subset C^{j+1}$ into a single vertex is a cycle.
\begin{lemma}\label{lem:mergeCi}
Suppose that $C^{j+1}=C^j_1 \sqcup \cdots \sqcup C^j_{\f{g_{j}}{g_{j+1}}}.$  Consider the directed graph $G'$ on $V'=\left[\f{g_{j}}{g_{j+1}}\right]$  where $(u, v) \in E'$ if and only if there is an edge of stripe $\phi(j+1)$ in $\delta^+(C_u^j, C_v^j).$ Then $G'$ is a directed cycle.
\end{lemma}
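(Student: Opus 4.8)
The plan is to show that the contracted directed graph $G'$ on the vertex set $V' = [g_j/g_{j+1}]$ is a directed cycle by establishing two things: that $G'$ is connected, and that every vertex of $G'$ has in-degree $1$ and out-degree $1$ (counting edges up to the multiplicity with which stripe-$\phi(j+1)$ edges run between distinct components). A connected digraph in which every vertex has in-degree and out-degree exactly $1$ is a single directed cycle, so these two facts suffice.

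Let me set up notation. By Proposition \ref{prop:Ham}, each $C^j_u$ is a residue class modulo $g_j$ inside the larger residue class $C^{j+1}$ modulo $g_{j+1}$; concretely, if $C^{j+1} = \{v : v \equiv_{g_{j+1}} r\}$, then its sub-components $C^j_u$ are the classes $\{v : v \equiv_{g_j} r + u\, g_{j+1}\}$ for $u = 0, 1, \dots, \frac{g_j}{g_{j+1}} - 1$. A stripe-$\phi(j+1)$ edge has the form $(v, v + \phi(j+1))$. First I would compute where such an edge sends a component: if $v \in C^j_u$, then $v + \phi(j+1)$ lies in the component indexed by $u + \frac{\phi(j+1)}{g_{j+1}} \pmod{\frac{g_j}{g_{j+1}}}$. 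The key arithmetic input is that $g_{j+1} = \gcd(g_j, \phi(j+1))$ divides $\phi(j+1)$, so $\frac{\phi(j+1)}{g_{j+1}}$ is a well-defined integer, and moreover $\gcd\!\left(\frac{\phi(j+1)}{g_{j+1}}, \frac{g_j}{g_{j+1}}\right) = 1$ since we divided out the full gcd. Call this integer step $q := \frac{\phi(j+1)}{g_{j+1}} \bmod \frac{g_j}{g_{j+1}}$.

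With this computation in hand, the edges of $G'$ are exactly the directed edges $u \mapsto u + q$ in $\Z / \frac{g_j}{g_{j+1}}\Z$. Every vertex has out-degree $1$ (the single image $u + q$) and in-degree $1$ (the single preimage $u - q$), so $G'$ is a disjoint union of directed cycles. Because $\gcd\!\left(q, \frac{g_j}{g_{j+1}}\right) = 1$, the map $u \mapsto u + q$ generates all of $\Z / \frac{g_j}{g_{j+1}}\Z$ as a single orbit, so the digraph is connected and hence a single directed cycle on all $\frac{g_j}{g_{j+1}}$ vertices. I would also note that $C^{j+1}$ is connected as an undirected graph (Proposition \ref{prop:Ham} with the full stripe set $\phi(1),\dots,\phi(j+1)$), which cross-checks that the contraction cannot split into several cycles.

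The main obstacle, and the step requiring the most care, is the coprimality claim $\gcd\!\left(\frac{\phi(j+1)}{g_{j+1}}, \frac{g_j}{g_{j+1}}\right) = 1$, which is what forces a single cycle rather than several. This is a standard fact — dividing two integers by their gcd yields coprime quotients, here applied to $g_j$ and $\phi(j+1)$ with $\gcd = g_{j+1}$ — but one must verify that $g_{j+1}$ is genuinely the gcd of $g_j$ and $\phi(j+1)$ (it is, by definition of the $g$-sequence) and be careful that the reduction $\bmod\, \frac{g_j}{g_{j+1}}$ preserves the coprimality. Everything else is routine modular bookkeeping: identifying components as residue classes, tracking the index shift induced by adding $\phi(j+1)$, and concluding that a connected in/out-degree-$1$ digraph is a cycle.
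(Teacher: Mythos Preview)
Your proof is correct and follows the same overall structure as the paper: establish that every vertex of $G'$ has in-degree and out-degree exactly $1$, so that $G'$ is a directed cycle cover, and then argue connectivity to conclude it is a single cycle.

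The one genuine point of difference is how connectivity is obtained. The paper simply observes that $C^{j+1}$ is, by definition, a connected component of $C\langle\{\phi(1),\dots,\phi(j+1)\}\rangle$, and since the only edges between distinct $C^j_u$ come from stripe $\phi(j+1)$, the contracted graph $G'$ must itself be connected; no further arithmetic is needed. You instead compute explicitly that the edges of $G'$ are given by the shift $u\mapsto u+q$ in $\Z/(g_j/g_{j+1})\Z$ with $q=\phi(j+1)/g_{j+1}$, and invoke the coprimality $\gcd(q,\,g_j/g_{j+1})=1$ (a consequence of $g_{j+1}=\gcd(g_j,\phi(j+1))$) to force a single orbit. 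Your route is a bit more work but yields the exact cyclic order of the $C^j_u$ inside $C^{j+1}$, information the paper does not need and does not extract; the paper's route is shorter because it leverages Proposition~\ref{prop:Ham} rather than re-deriving its content arithmetically.
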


\begin{proof}
First, suppose that $u\in V$ is such that $u\in C^j_i$ (so that $i\in V'$).  For any other $v\in V$ with $v\in C^j_i,$ we have $u\equiv_{g_j} v$ and so  $u+\phi(j+1)\equiv_{g_j} v+\phi(j+1).$  Hence, the vertex $i\in V'$ has a single outgoing edge.  Analogously $u-\phi(j+1)\equiv_{g_j} v-\phi(j+1)$ so that the vertex $i\in V'$ has a single incoming edge.  These facts establish that every vertex of $G'$ has a single outgoing edge and a single incoming edge and $G'$ is a directed cycle cover.  However, $G'$ must also be connected: $C^{j+1}$ is a connected component of the graph $C\langle\{\phi(1), ..., \phi(j+1)\}\rangle.$  The only connected, directed cycle cover is a directed cycle.
\hfill
\end{proof}

Lemma \ref{lem:mergeCi} allows us to bound the total weight of edges of stripe $\phi(j+1)$ going between some $C^j$ in a $C^{j+1}.$ 

\begin{lemma}\label{lem:MultCi}
Suppose that $C^j_1, ..., C^j_s \subset C^{j+1}$ with $1<s\leq \f{g_{j}}{g_{j+1}}$ and $j<\ell.$ Provided $j<\ell-1$ or $ s< \f{g_{j}}{g_{j+1}},$
$$ \sum_{\substack{1\leq i_1, i_2 \leq s \\ i_1\neq i_2}} x(\delta^+(C^j_{i_1}, C^j_{i_2}))  \leq s -1.$$

\end{lemma}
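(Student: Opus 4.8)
The plan is to identify which stripes can possibly carry an edge between two distinct components $C^j_{i_1}, C^j_{i_2}$ lying inside a common $C^{j+1}$, to reduce the stated sum to a counting problem on the directed cycle $G'$ of Lemma \ref{lem:mergeCi}, and then to separate the two cases $j<\ell-1$ and $j=\ell-1$, the latter being exactly the boundary case singled out by the hypothesis.

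First I would pin down the contributing stripes. An edge $(v, v+\phi(t))$ of stripe $\phi(t)$ has both endpoints in one $C^{j+1}$ iff $g_{j+1}\mid \phi(t)$, and crosses between two distinct $C^j$ iff $g_j\nmid\phi(t)$. Since $g_j=\gcd(n,\phi(1),\dots,\phi(j))$ divides every one of $\phi(1),\dots,\phi(j)$, any crossing edge must come from a stripe with index $t\geq j+1$. I then claim no index $t>j+1$ can qualify: if $g_{j+1}\mid\phi(t)$ then, using $g_{t-1}\mid g_{j+1}$, we get $g_{t-1}\mid\phi(t)$ and hence $g_t=\gcd(g_{t-1},\phi(t))=g_{t-1}$, contradicting the strict-decrease normalization $(\ref{eq:gdist})$ (and for $t>\ell$ the weight $x_e$ is zero in any case). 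Thus stripe $\phi(j+1)$ is the sole contributor, so the entire sum is carried by stripe-$\phi(j+1)$ edges.

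Next I would compute the weight across a single super-edge of $G'$. By Lemma \ref{lem:mergeCi}, contracting each $C^j\subset C^{j+1}$ to a vertex turns the stripe-$\phi(j+1)$ edges into a directed cycle on $m:=g_j/g_{j+1}$ vertices, and every one of the $\frac{n}{g_j}$ vertices of a given $C^j_u$ sends its unique outgoing stripe-$\phi(j+1)$ edge into the next component $C^j_v$. Hence $x(\delta^+(C^j_u,C^j_v))=\frac{n}{g_j}\,w$, where $w=\frac{g_j-g_{j+1}}{n}$ if $j+1\neq\ell$ and $w=\frac{g_j}{n}$ if $j+1=\ell$; this gives a per-super-edge weight of $1-\frac1m$ when $j+1<\ell$ and exactly $1$ when $j+1=\ell$, while every ordered pair $(i_1,i_2)$ that is non-adjacent in $G'$ contributes $0$. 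The stated sum therefore equals this per-super-edge weight times the number of directed edges of $G'$ both of whose endpoints lie among the $s$ chosen components.

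Finally, the counting on an $m$-cycle: selecting $s<m$ vertices leaves at least one gap, so the chosen vertices span at most $s-1$ internal (consecutive) directed edges, while selecting all $s=m$ vertices spans all $m$ edges. When $j+1<\ell$ this yields $(1-\frac1m)(s-1)\le s-1$ for $s<m$ and $(1-\frac1m)m=m-1=s-1$ for $s=m$, so the bound always holds. When $j+1=\ell$ the per-edge weight is $1$: for $s<m$ we again get at most $s-1$, but for $s=m$ we would get $m=s$, violating the bound. This closed-cycle case is precisely what the hypothesis ``$j<\ell-1$ or $s<g_j/g_{j+1}$'' rules out, and it is the main obstacle: the extra weight stripe $\phi(\ell)$ carries (the ``$+c_{\phi(\ell)}$'' term of $\VDV$) is exactly what pushes a fully selected cycle over the $s-1$ threshold.
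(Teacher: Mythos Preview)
Your proposal is correct and follows essentially the same approach as the paper: both arguments first reduce the sum to stripe $\phi(j+1)$ alone (you do this directly via the strict-decrease normalization, the paper invokes Lemma~\ref{lem:Ci}, but the underlying reasoning is identical), then use the cycle structure of Lemma~\ref{lem:mergeCi} to bound the number of contributing super-edges by $s-1$ (or $s$ when all components are selected), and finally multiply by the per-super-edge weight $1-\tfrac{g_{j+1}}{g_j}$ or $1$ depending on whether $j+1<\ell$ or $j+1=\ell$. The case split and the arithmetic match the paper's proof exactly.
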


\begin{proof}
By Lemma \ref{lem:Ci}, the only edges with both endpoints in $C^{j+1}$ with nonzero weight are those in stripes $\phi(1), ..., \phi(j+1).$  Moreover, any edge of stripe $\phi(i)$ with $i<j+1$ has both endpoints in the same $C^j$: $i\leq j$ implies $\phi(i)$ divides $g_j$, so $u+\phi(i)\equiv_{g_j} u;$  Proposition \ref{prop:Ham} implies that $u+\phi(i)$ and $u$ are in the same component of $C\langle\{\phi(1), ..., \phi(j)\}\rangle.$  Hence the only edges contributing to the sum $\sum_{1\leq i_1<i_2\leq s} x(\delta^+(C^j_{i_1}, C^j_{i_2})) $ are those from stripe $\phi(j+1).$ 

Consider the graph $G'$ from Lemma \ref{lem:mergeCi}, a cycle with vertices corresponding to $C^j_1, ..., C^j_{\f{g_{j}}{g_{j+1}}}.$  A subset of $s$ vertices of a cycle on ${\f{g_{j}}{g_{j+1}}}$ vertices contains at most $s-1+\bi_{\{s={\f{g_{j}}{g_{j+1}}}\}}$ edges, where $\bi_{\{\circ\}}$ denotes the indicator function that is $1$ if $\circ$ is true and $0$ otherwise.

Hence  at most $s-1+\bi_{\{s={\f{g_{j}}{g_{j+1}}}\}}$ terms in the sum $ \sum_{\substack{1\leq i_1, i_2 \leq s \\ i_1\neq i_2}} x(\delta^+(C^j_{i_1}, C^j_{i_2}))$ are nonzero.  Now consider any nonzero term $ x(\delta^+(C^j_{i_1}, C^j_{i_2}))\neq 0$.  Since the only edges contributing to this term are from stripe $\phi(j+1)$, we need only count the number of edges of stripe $\phi(j+1)$ starting in $C^j_{i_1}$ and ending in $C^j_{i_2}.$  There are $\f{n}{g_j}$ vertices in $C^j_{i_1}$, each of which has one outgoing edge of length $\phi(j+1)$ ending in  $C^j_{i_2}$. If $j<\ell -1$, each of these has weight $\f{g_{j}-g_{j+1}}{n}.$  Thus
\begin{align*}
 \sum_{\substack{1\leq i_1, i_2 \leq s \\ i_1\neq i_2}} x(\delta^+(C^j_{i_1}, C^j_{i_2})) &\leq \left(s-1+\bi_{\{s={\f{g_{j}}{g_{j+1}}}\}}\right)\f{n}{g_j}\f{g_{j}-g_{j+1}}{n}\\
&= \left(s-1+\bi_{\{s={\f{g_{j}}{g_{j+1}}}\}}\right)\left(1-\f{g_{j+1}}{g_j}\right).
\end{align*}
If $s\neq \f{g_{j}}{g_{j+1}}$, then the result follows because $\left(1-\f{g_{j+1}}{g_j}\right) \leq 1.$  Otherwise, when $s= \f{g_{j}}{g_{j+1}}$, the right hand side is $$\f{g_{j}}{g_{j+1}}\left(1-\f{g_{j+1}}{g_j}\right)=\f{g_{j}}{g_{j+1}}-1=s-1.$$
 
 The final case we must consider is when $j=\ell -1$ but $s< \f{g_{j}}{g_{j+1}}.$  In this case every edge of length $\phi(j+1)$ has weight $\f{g_{\ell-1}}{n}$ so that 
\begin{align*}
\sum_{1\leq i_1<i_2\leq s} x(\delta^+(C^j_{i_1}, C^j_{i_2})) &\leq \left(s-1+\bi_{\{s={\f{g_{j}}{g_{j+1}}}\}}\right)\f{n}{g_{\ell-1}}\f{g_{\ell-1}}{n}\\
&=(s-1)\f{n}{g_{\ell-1}}\f{g_{\ell-1}}{n}\\
&= s-1.
\end{align*}

\hfill
\end{proof}

\begin{proposition} \label{prop:ST}
Let $S\subset V$ ($2\leq |S|\leq n-2$).  Then $x(\delta(S))\geq 2.$
\end{proposition}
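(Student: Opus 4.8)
The plan is to convert the subtour inequality into a statement about edges inside $S$ and then induct on $|S|$. Summing the degree constraints of Lemma \ref{lem:deg} over $v\in S$ gives $x(\delta(S))+2x(E(S))=2|S|$, so $x(\delta(S))\ge 2$ is equivalent to $x(E(S))\le |S|-1$. I would prove this reformulated bound for every nonempty proper subset of $V$ by strong induction on $|S|$; the cases $|S|=1$ and $|S|=n-1$ are immediate (the latter from Lemma \ref{lem:deg}, since $\delta(S)=\delta(V\setminus S)$).

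For the inductive step, let $j^{*}$ be the smallest index for which $S$ lies inside a single component of $C\langle\{\phi(1),\dots,\phi(j^{*})\}\rangle$; as $|S|\ge 2$ we have $j^{*}\ge 1$, and at level $j^{*}-1$ the set $S$ meets $s\ge 2$ components $C^{j^{*}-1}_{i_1},\dots,C^{j^{*}-1}_{i_s}$ all sitting inside the one component $C^{j^{*}}$. Writing $S_i:=S\cap C^{j^{*}-1}_i$ and using the decomposition displayed just before the proposition, I would bound $\sum_i x(E(S_i))\le \sum_i(|S_i|-1)$ by the inductive hypothesis (each $|S_i|<|S|$), and bound $\sum_{i_1\ne i_2}x(\delta^{+}(S_{i_1},S_{i_2}))\le \sum_{i_1\ne i_2}x(\delta^{+}(C^{j^{*}-1}_{i_1},C^{j^{*}-1}_{i_2}))\le s-1$ by monotonicity of $x$ and Lemma \ref{lem:MultCi}. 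Adding gives $x(E(S))\le(|S|-s)+(s-1)=|S|-1$.

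The one configuration this misses is exactly the one excluded by Lemma \ref{lem:MultCi}, namely $j^{*}=\ell$ and $s=g_{\ell-1}$, i.e.\ $S$ meets every top-level component $C^{\ell-1}_1,\dots,C^{\ell-1}_{g_{\ell-1}}$; this is the main obstacle, since there the contracted graph $G'$ of Lemma \ref{lem:mergeCi} is a full directed cycle, the cross term only admits the weaker bound $s$, and the clean decomposition breaks. Here I would instead bound $x(\delta(S))$ directly, splitting it as $x(\delta(S))=\sum_i x_i(S_i)+x_{\mathrm{cross}}$, where $x_i(S_i)$ is the weight of the edges crossing $S_i$ that use only the internal stripes $\phi(1),\dots,\phi(\ell-1)$ and $x_{\mathrm{cross}}\ge 0$ collects the stripe-$\phi(\ell)$ edges crossing $S$. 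The key auxiliary fact is a within-component cut bound: for a proper nonempty subset $T$ of a component $C^{i}$ with $i\le\ell-1$, the weight of internal-stripe edges crossing $T$ is at least $1$. This follows by taking the largest level $r$ at which $T$ is a union of level-$r$ components, noting that stripe $\phi(r+1)$ then splits $T$ inside some $C^{r+1}$ and so cuts at least two super-edges of the associated cycle; each such super-edge carries internal weight $1-g_{r+1}/g_{r}\ge\tfrac12$, because $g_{r+1}\mid g_{r}$ with $g_{r+1}<g_{r}$ forces $g_{r+1}\le g_{r}/2$.

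Since $S\ne V$ and every $S_i\ne\emptyset$ in this case, at least one $S_i$ is a proper nonempty subset of its component; let $p$ count these. If $p\ge 2$, then $x(\delta(S))\ge\sum_{i\,\mathrm{proper}}x_i(S_i)\ge p\cdot 1\ge 2$. If $p=1$, the complement $V\setminus S$ lies entirely inside a single top-level component, whence $|V\setminus S|\le \tfrac{n}{g_{\ell-1}}-1<|S|$ (using $g_{\ell-1}\ge 2$), so the inductive hypothesis applied to $V\setminus S$ gives $x(\delta(S))=x(\delta(V\setminus S))\ge 2$. I expect the real work to lie in this top-level case: establishing the within-component cut bound rigorously (the divisibility step $g_{r+1}\le g_{r}/2$ is exactly what keeps each super-edge's weight from degenerating below $\tfrac12$, so that two of them already give $1$) and carefully checking the bookkeeping of the split $x(\delta(S))=\sum_i x_i(S_i)+x_{\mathrm{cross}}$, including the $n$ even, $\phi(\ell)=n/2$ normalization.
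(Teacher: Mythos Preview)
Your argument is correct, and its skeleton matches the paper's: reformulate as $x(E(S))\le|S|-1$, handle the generic case by decomposing $S$ along the component hierarchy and invoking Lemma~\ref{lem:MultCi}, and treat separately the exceptional configuration where $S$ meets every $C^{\ell-1}$. The divergence is in how that exceptional case is dispatched.

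The paper splits it into two subcases. If some $C^{\ell-1}\subset S$ (its Case~2), it passes to the complement, which misses that $C^{\ell-1}$ and so falls back into Case~1. If every $C^{\ell-1}$ is partially intersected (its Case~3), it proves a short abstract inequality (Claim~\ref{cm:ba}): whenever $C=A\sqcup B$ with $x(\delta(C))=2$ and $x(\delta(A)),x(\delta(B))\ge 2$, the internal cut $x(\delta^+(A,B))+x(\delta^+(B,A))$ is at least $1$. The hypotheses on $A=S\cap C^{\ell-1}_i$ and $B=C^{\ell-1}_i\setminus A$ are supplied by the already-resolved Case~1, so each component contributes at least $1$ to $x(\delta(S))$ and summing over $g_{\ell-1}\ge 2$ components finishes.

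Your split is instead by the count $p$ of components with proper intersection. Your $p=1$ branch is the complement trick, with the extra size check $|V\setminus S|<|S|$ to stay within the inductive hypothesis. Your $p\ge 2$ branch proves the per-component contribution $\ge 1$ from scratch via the within-component cut bound, using Lemma~\ref{lem:mergeCi} and the divisibility $g_{r+1}\mid g_r$, $g_{r+1}<g_r\Rightarrow g_{r+1}\le g_r/2$ to force each contracted super-edge to carry weight $\ge\tfrac12$. This is a genuinely different mechanism from Claim~\ref{cm:ba}: yours is structural and does not bootstrap from the generic case, while the paper's is a two-line cut identity that reuses Case~1. The paper's route is shorter and more modular; yours makes the number-theoretic reason for the bound explicit and would survive even without having Case~1 in hand.
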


  \begin{figure}[t]
\centering 
\tikzset{vertex/.style = {shape=circle,draw,minimum size=1.5em}}
\tikzset{edge/.style = {->,> = latex'}}
\tikzset{color1/.style = {fill=blue!65}}
\tikzset{set/.style={draw,circle,inner sep=0pt,align=center}}
\begin{tikzpicture}[scale=0.9,transform shape]
    \draw[line width=1.7pt] (1, -1) rectangle (4.5,2);    
\draw  (4.1, 1.85) node[below] {$S^*$};

        \draw[line width=1.5pt] (2, 1) ellipse (0.5cm and 0.8cm);
\draw  (2, 0.2) node[below] {$C^j$};

        \draw[line width=1.5pt] (3.5, -1.2) ellipse (0.5cm and 0.8cm);
\draw  (3.5, -2) node[below] {$C^j$};

        \draw[line width=1.5pt] (6, -1.2) ellipse (0.5cm and 0.8cm);
\draw  (6, -2) node[below] {$C^j$};


        \draw[line width=1.5pt] (2.75, 0) ellipse (2.5cm and 3.5cm);
        \draw  (2.75, -3.5) node[below] {$C^{j+1}_k$};
\end{tikzpicture}
\caption{$S^*$ and choice of $j$ in Proposition \ref{prop:ST}.  In this example, $s=2$ of the $C^j$ have nonempty intersections with $S^*$.}\label{fig:propST1}
\end{figure}
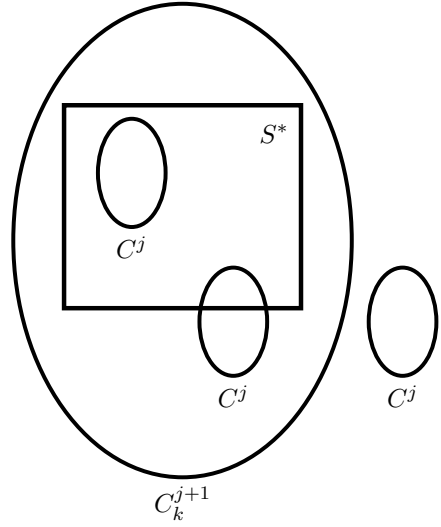

\begin{proof}

Using the fact that $x(\delta(S))+2x(E(S))=2|S|,$ it suffices to show that $x(E(S))\leq |S|-1$ for all $S$ (with $2\leq |S|\leq n-2$).  Suppose towards a contradiction that there is some $S^*$ with $x(E(S^*))>|S^*|-1$.  We consider three cases.

\noindent {\bf Case 1:} Suppose there exists such an $S^*$ that does not intersect with at least one $C^{\ell-1}.$  Then consider any such $S^*$ that is minimal by inclusion.   By Lemma \ref{lem:Ci2}, $S^* \neq C^i_k$ for any $0\leq i \leq \ell-1$ and $1\leq k\leq g_i$.  Since $S^*\subset C^{\ell}=V$ and the $C^i$ nest within the $C^{i+1}$, there are some $j$ and $k$ such that $S^* \subset C^{j+1}_k$ but $S^*$ is not contained in any single $C^j_1, ..., C^j_{g_j}$ (i.e. $j+1$ is the smallest value such that $S^*$ is properly contained in a $C^{j+1}$ which we denote $C^{j+1}_k)$.   See Figure \ref{fig:propST1}. 

Without loss of generality, suppose that the $C^j_i$ are labeled so that $C^j_1, ..., C_s^j $ have nonempty intersections with $S^*$ while $C_{s+1}^j, ..., C_{g_j}^j$ have empty intersections with $S^*.$    Note that, by choice of $j$, $C^j_1, ..., C_s^j \subset C^{j+1}_k$ so that $s\leq \f{g_j}{g_j+1}$.  We partition $S^*=\left(S^* \cap C_1^j\right) \bigsqcup \cdots \bigsqcup \left(S^* \cap C^j_s\right),$ so that 
\begin{align*}
x(E(S^*)) &=  \sum_{i=1}^s x(E(S^* \cap C^j_i))+  \sum_{\substack{1\leq i_1, i_2 \leq s \\ i_1\neq i_2}}  x(\delta^+(S^*\cap C_{i_1}^j, S^* \cap C_{i_2}^j)).
\intertext{By minimality of $S^*$}
x(E(S^*)) &\leq \sum_{i=1}^s \left(|S^*\cap C_i^j|-1\right)+  \sum_{\substack{1\leq i_1, i_2 \leq s \\ i_1\neq i_2}} x(\delta^+(S^*\cap C_{i_1}^j, S^* \cap C_{i_2}^j)) \\
&= |S^*|-s + \sum_{\substack{1\leq i_1, i_2 \leq s \\ i_1\neq i_2}}  x(\delta^+(S^*\cap C_{i_1}^j, S^* \cap C_{i_2}^j)).
\intertext{Expanding sets in the rightmost term}
x(E(S^*))&\leq |S^*|-s + \sum_{\substack{1\leq i_1, i_2 \leq s \\ i_1\neq i_2}}  x(\delta^+(C_{i_1}^j,C_{i_2}^j)).
\intertext{Note that our assumption that $S^*$ doesn't intersect with every single $C^{\ell-1}$ means that  Lemma \ref{lem:MultCi} applies, so that}
x(E(S^*)) &\leq |S^*|-s + (s-1) \\
&= |S^*|-1.
\end{align*}  
This contradicts our choice of $S^*$ as a counterexample.

The cases that remain are those where every single $S^*$  with $x(E(S^*))>|S^*|-1$ has $j=\ell-1$ (so that the smallest $C^{j+1}$ fully  containing $S^*$ is $C^{\ell}=V$) and $s=\f{g_{\ell-1}}{g_{\ell}}.$  This case means that $S^*$ has a nonempty intersection with every $C^{\ell-1}$. 

\noindent {\bf Case 2:} Suppose that there is some $C^{\ell-1}$ fully contained in $S^*$.  Then 
$2x(E(S^*))+x(\delta(S^*)) = 2|S^*|,$ so $x(E(S^*))>|S^*|-1$ implies $x(\delta(S^*))<2.$  Applying the same argument to $S^{*^c}:=V\backslash S^*$, we get $x(E(S^{*^c}))>|S^{*^c}|-1.$  But $S^{*^c}$ is entirely disjoint from at least one $C^{\ell-1}$, contradicting the assumption that case 1 does not apply.

  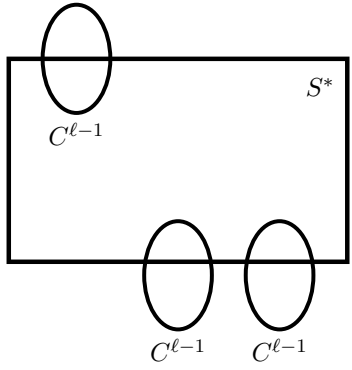
\begin{figure}[t]
\centering 
\tikzset{vertex/.style = {shape=circle,draw,minimum size=1.5em}}
\tikzset{edge/.style = {->,> = latex'}}
\tikzset{color1/.style = {fill=blue!65}}
\tikzset{set/.style={draw,circle,inner sep=0pt,align=center}}
\begin{tikzpicture}[scale=0.9,transform shape]
    \draw[line width=1.7pt] (1, -1) rectangle (6,2);    
\draw  (5.6, 1.85) node[below] {$S^*$};

        \draw[line width=1.5pt] (2, 2) ellipse (0.5cm and 0.8cm);
\draw  (2, 1.2) node[below] {$C^{\ell-1}$};

        \draw[line width=1.5pt] (3.5, -1.2) ellipse (0.5cm and 0.8cm);
\draw  (3.5, -2) node[below] {$C^{\ell-1}$};

        \draw[line width=1.5pt] (5, -1.2) ellipse (0.5cm and 0.8cm);
\draw  (5, -2) node[below] {$C^{\ell-1}$};
\end{tikzpicture}
\caption{Case 3 in the proof of Proposition \ref{prop:ST}.  $S^*$  intersects with every $C^{\ell-1}$ but does not fully contain any of the $C^{\ell-1}.$}\label{fig:c3}
\end{figure}

\noindent {\bf Case 3:} The only remaining case is that  $1\leq |S^*\cap C^{\ell-1}|< |C^{\ell-1}|$ for every $C^{\ell-1}.$  For this case we contradict that $x(E(S^*))>|S^*|-1$  by showing that $x(\delta(S^*))\geq 2.$  Since $g_{\ell-1}>g_{\ell}= 1,$ there are at least two $C^{\ell-1}$ and they are disjoint.  We will use the following claim to argue that each of them contributes at least $1$ to $x(\delta(S^*)).$

\begin{claim}\label{cm:ba}
Let $C$ be a set such that $x(\delta(C))=2$.  Suppose that $C=A\sqcup B$ where $x(\delta(A))\geq 2$ and $x(\delta(B))\geq 2.$  Then $x(\delta^+(A, B))+x(\delta^+(B, A))\geq 1.$
\end{claim}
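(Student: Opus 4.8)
The plan is to reduce the claim to a three-variable linear inequality by decomposing the three cuts $\delta(A)$, $\delta(B)$, and $\delta(C)$ into the same building blocks. The key structural point is that $A$, $B$, and $V\backslash C$ partition the vertex set, so every edge appearing in one of these cuts falls into exactly one of three aggregate groups. Concretely, I would set
\[
P := x(\delta^+(A,B)) + x(\delta^+(B,A)), \qquad
Q := x(\delta^+(A,V\backslash C)) + x(\delta^+(V\backslash C,A)),
\]
\[
R := x(\delta^+(B,V\backslash C)) + x(\delta^+(V\backslash C,B)),
\]
where $P$ is precisely the quantity the claim asks us to bound below by $1$.

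Next I would record the three exact identities that these decompositions yield. Since every edge with exactly one endpoint in $A$ has its other endpoint either in $B$ (inside $C$) or in $V\backslash C$, we get $x(\delta(A)) = P + Q$; by the symmetric argument, $x(\delta(B)) = P + R$. Finally, an edge in $\delta(C)$ has its $C$-endpoint in exactly one of $A$ or $B$ and its other endpoint in $V\backslash C$, while edges between $A$ and $B$ stay inside $C$ and so never cross $\delta(C)$; this gives $x(\delta(C)) = Q + R$. Substituting the hypotheses of the claim then turns these identities into $P + Q \geq 2$, $P + R \geq 2$, and $Q + R = 2$.

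The conclusion is immediate arithmetic: adding the first two inequalities gives $2P + (Q + R) \geq 4$, and substituting $Q + R = 2$ yields $2P \geq 2$, that is, $P \geq 1$, which is exactly the desired bound $x(\delta^+(A,B)) + x(\delta^+(B,A)) \geq 1$. There is no genuine analytic obstacle here; the only step requiring care is the bookkeeping in the decomposition, namely verifying that each of the three edge partitions is exact (no edge omitted or double-counted) under the directed-edge convention, including the half-weight treatment of length-$\frac{n}{2}$ edges. Once that accounting is confirmed, it is the tight equality $Q + R = 2$ coming from $x(\delta(C)) = 2$ that couples the two weaker bounds on $A$ and $B$ into the needed bound on the $A$--$B$ crossing weight.
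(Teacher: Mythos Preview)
Your proof is correct and is essentially identical to the paper's argument: the paper also expands $x(\delta(A))+x(\delta(B))$ into the $A$--$B$, $A$--$(V\backslash C)$, and $B$--$(V\backslash C)$ pieces, recognizes the latter two as $x(\delta(C))=2$, and rearranges to get $4\leq 2+2\bigl(x(\delta^+(A,B))+x(\delta^+(B,A))\bigr)$. The only cosmetic difference is that you name the three aggregates $P$, $Q$, $R$, whereas the paper leaves them inline.
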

This claim follows by expanding $\delta(A)$ and $\delta(B)$ and rearranging.
\begin{align*}
4 &\leq x(\delta(A)) + x(\delta(B))\\
&= \left( x(\delta^+(A, V\backslash C)) + x(\delta^+(V\backslash C, A)) + x(\delta^+(A, B))+x(\delta^+(B, A))\right) \\
& \hspace{10mm} + \left( x(\delta^+(B, V\backslash C)) + x(\delta^+(V\backslash C, B)) + x(\delta^+(A, B))+x(\delta^+(B, A))\right) \\
&=x (\delta(C))+2\left(x(\delta^+(A, B))+x(\delta^+(B, A))\right) \\
&= 2 + 2\left(x(\delta^+(A, B))+x(\delta^+(B, A))\right),
\end{align*}
from which the claim follows.

We now apply Claim \ref{cm:ba}.  Let  $C^{\ell-1}_i$ take the role of $C$, since by Lemma \ref{lem:Ci2} $x(\delta(C^{\ell-1}_i))=2.$  We partition $C^{\ell-1}_i=A\sqcup B$ where $A=S^*\cap C^{\ell-1}_i$ and $B=C^{\ell-1}_i\backslash A.$  Then $A, B\subset C^{\ell-1}_i$ and the fact that we are not in case 1 implies that $x(\delta(A))\geq 2$ and $x(\delta(B))\geq 2$ and the claim yields
$$x(\delta^+(S^*\cap C^{\ell-1}_i, C^{\ell-1}_i\backslash A))+x(\delta^+(C^{\ell-1}_i\backslash A, S^*\cap C^{\ell-1}_i)\geq 1.$$  All together,
\begin{align*}
x(\delta(S^*)) 
&\geq \sum_{i=1}^{g_{\ell-1}} x(\delta^+(S^*\cap C^{\ell-1}_i, C^{\ell-1}_i\backslash A))+x(\delta^+(C^{\ell-1}_i\backslash A, S^*\cap C^{\ell-1}_i) \\
&\geq \sum_{i=1}^{g_{\ell-1}}1 \\
&= g_{\ell-1} \geq 2. 
\end{align*}
Hence we contradict that $x(E(S^*)) >|S^*|-1$ and we have handled all cases.

\hfill
\end{proof}

\begin{proof}[Proof (Theorem \ref{thm:main})]
This proof follows immediately from Lemma \ref{lem:deg} and Proposition \ref{prop:ST}. \hfill 
\end{proof}

We note that Theorem \ref{thm:main}, together with the proof of Theorem \ref{thm:DK} in De Klerk and Dobre \cite{Klerk11}, indicate the following result.
\begin{corollary}
The degree constraints do not strengthen the subtour elimination LP for circulant TSP.   That is, letting $\RPOPT$ denote the value of an optimal solution to the subtour LP relaxation obtained by dropping the degree constraints, $$\RPOPT=\LPOPT=\VDV.$$
\end{corollary}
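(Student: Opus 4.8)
The plan is to obtain the corollary purely by chaining together three inequalities, two of which are already in hand. The key observation is that the proof of Theorem \ref{thm:DK} in De Klerk and Dobre \cite{Klerk11} establishes more than its stated conclusion: by dropping the degree constraints to form the relaxation of value $\RPOPT$ and then exhibiting a feasible solution to the \emph{dual} of that relaxation whose value equals $\VDV$, they actually prove the sharper chain
\[
\VDV \le \RPOPT \le \LPOPT,
\]
where the second inequality is immediate because $\RPOPT$ minimizes over a superset of the subtour-LP feasible region (dropping constraints from a minimization LP cannot increase the optimal value).

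First I would record this chain explicitly, taking care to cite the De Klerk--Dobre argument at the right granularity: it is the lower bound $\VDV \le \RPOPT$ (not merely $\VDV \le \LPOPT$) that their dual construction supplies. Then I would invoke Theorem \ref{thm:main}, which furnishes the reverse relation $\LPOPT = \VDV$. Substituting $\LPOPT = \VDV$ into the right end of the chain yields
\[
\VDV \le \RPOPT \le \LPOPT = \VDV,
\]
so all three quantities are squeezed between $\VDV$ and $\VDV$ and must coincide, giving $\RPOPT = \LPOPT = \VDV$.

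A more self-contained route to the upper bound $\RPOPT \le \VDV$ is also available, which I would mention as an alternative: the explicit vector $x$ constructed in Theorem \ref{thm:main} is itself feasible for the relaxed LP. Indeed, Proposition \ref{prop:ST} shows $x$ satisfies every subtour elimination constraint, the bounds $0 \le x_e \le 1$ hold by construction, and dropping the degree constraints only enlarges the region in which $x$ already lies. Since $\sum_{e\in E} c_e x_e = \VDV$, this gives $\RPOPT \le \VDV$ directly, which combined with the dual lower bound $\VDV \le \RPOPT$ again pins down $\RPOPT = \VDV$.

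There is no genuine obstacle here; the substantive content is entirely carried by Theorem \ref{thm:main}, and the only point requiring attention is to split the borrowed De Klerk--Dobre inequality at $\RPOPT$ so that the squeeze actually closes.
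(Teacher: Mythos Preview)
Your proposal is correct and essentially matches the paper's argument: the paper's own proof is exactly your ``alternative route,'' observing that the explicit $x$ from Theorem~\ref{thm:main} is feasible for the relaxed LP (so $\RPOPT\le\VDV$) while the De Klerk--Dobre dual construction gives $\VDV\le\RPOPT$. Your primary squeeze via $\LPOPT=\VDV$ is an equally valid reordering of the same ingredients.
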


\begin{proof}
Our proof of Theorem \ref{thm:main} shows that $\RPOPT\leq \VDV,$ while the proof of Theorem \ref{thm:DK} shows that $\VDV\leq \RPOPT.$ \hfill 
\end{proof}

\section{The Integrality Gap of the Subtour LP}\label{sec:main2}

Theorem \ref{thm:main} allows us to exactly characterize the integrality gap of the subtour LP on circulant instances by considering the Van der Veen, Van Dal, and Sierksma \cite{VDV91} bound. 
In this section we provide an example showing that this bound can be off by a factor of 2 asymptotically.  This, together with Theorem \ref{thm:int}, will imply our second main theorem.
\begin{theorem}\label{main2}
The integrality gap of the subtour LP restricted to circulant instances is exactly 2.
\end{theorem}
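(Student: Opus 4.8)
The plan is to establish that the integrality gap equals exactly $2$ by combining the upper bound already proved in Theorem \ref{thm:int} with a matching asymptotic lower bound. Since Theorem \ref{thm:int} gives $\sup \TSPOPT/\LPOPT \leq 2$, it remains only to exhibit a family of circulant instances on which the ratio $\TSPOPT/\LPOPT$ tends to $2$ as $n \to \infty$. By Theorem \ref{thm:main} we have $\LPOPT = \VDV$, so I can work entirely with the explicit combinatorial lower bound $\VDV$ rather than with the LP directly; this is the key simplification. Thus the goal reduces to finding instances where $\TSPOPT/\VDV \to 2$.

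The approach would be to choose a family of instances with very few finite stripe costs — ideally a two-stripe construction — so that both $\VDV$ and $\TSPOPT$ can be computed or tightly estimated by hand. First I would pick $n$ together with two stripe lengths $a$ and $b$ and a cost assignment (for instance, making two stripes cost $1$ and all others prohibitively expensive, or cost $0$ on some stripes) so that the $g$-sequence collapses quickly: I want $\gcd(n,a) = g_1 > 1$ but $\gcd(n,a,b) = 1$, forcing $\ell = 2$. With $\ell = 2$, the formula $\VDV = \sum_{i=1}^{\ell}(g_{i-1}-g_i)c_{\phi(i)} + c_{\phi(\ell)}$ has only two terms plus the bottleneck correction, making it explicitly computable in terms of $n$ and the chosen $\gcd$ values. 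The paper strongly hints that the extremal instances are exactly those underlying the crown inequalities of Naddef and Rinaldi \cite{Nad92}, so I would use that construction as the template. I would then compute $\TSPOPT$ for these instances by analyzing the actual cheapest Hamiltonian cycle: with only a few cheap stripes available, the optimal tour is forced to use a controlled number of the more expensive edges, and counting these gives a lower bound on $\TSPOPT$ that is roughly twice $\VDV$.

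The main obstacle I expect is lower-bounding $\TSPOPT$ rather than computing $\VDV$: while $\VDV$ follows mechanically from the $g$-sequence, showing that every Hamiltonian tour must cost nearly $2\cdot\VDV$ requires a genuine combinatorial argument about the structure of tours in the chosen circulant graph. Concretely, the subtour LP (equivalently $\VDV$) is a path-based bound that essentially pays for one spanning structure plus a single expensive closing edge, whereas an integral tour on these instances is forced to ``cross between components'' many times, each crossing incurring the expensive cost, and I must argue that these crossings cannot be avoided. I would formalize this by using Proposition \ref{prop:Ham} to understand the component structure of $C\langle\{\phi(1)\}\rangle$ (the $g_1 > 1$ components connected only by the cheap first stripe) and arguing that a Hamiltonian cycle, being a single connected cycle on all of $V$, must enter and leave each of these $g_1$ components, and that the only edges available to do so are the expensive ones, yielding $\Omega(g_1)$ expensive edges. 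Comparing this cost against $\VDV$ and choosing the parameters (e.g.\ $a = n/2$ or a suitable divisor) so that the ratio of expensive edges forced in a tour to those ``paid for'' by $\VDV$ approaches $2$ would complete the lower bound. Finally I would verify the limit explicitly, confirming $\lim_{n\to\infty} \TSPOPT/\LPOPT = 2$, which together with Theorem \ref{thm:int} pins the supremum at exactly $2$.
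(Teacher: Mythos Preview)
Your overall strategy is right---combine Theorem~\ref{thm:int} with a family of two-stripe instances on which $\TSPOPT/\VDV\to 2$---and the paper follows exactly this plan, taking $n=2^{k+1}$, cheapest stripe of length $d=n/2$ with cost $0$, second-cheapest of length $1$ with cost $1$, so that $g_1=d$, $g_2=1$, and $\VDV=d$.

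However, the lower-bound argument you sketch for $\TSPOPT$ has a real gap. You propose to argue that a Hamiltonian cycle must enter and leave each of the $g_1$ components of $C\langle\{\phi(1)\}\rangle$, forcing $\Omega(g_1)$ expensive crossing edges. That argument is correct but only yields $\TSPOPT\geq g_1$: each of the $g_1$ components contributes at least $2$ to the sum of boundary sizes, and each crossing edge lies in exactly two such boundaries, so there are at least $g_1$ expensive edges. Since $\VDV=g_1$ on these instances, you have recovered the LP bound and nothing more. No tuning of parameters helps here: the subtour LP \emph{is} a cut-based relaxation, so any undirected cut-counting argument for integral tours is already subsumed by $\LPOPT=\VDV$ and cannot produce a ratio exceeding $1$.

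What is missing is a genuinely integral obstruction. The paper supplies it by orienting the tour: with $n=2^{k+1}$ it lets $s_1$ and $s_{-1}$ count the $+1$- and $-1$-steps. A parity argument (the tour closes modulo $n$, and $d=n/2$ satisfies $2d\equiv_n 0$) forces $s_1-s_{-1}\equiv_n 0$; separately, one argues that the tour can reach at most $\max\{s_1,s_{-1}\}+1$ of the $d$ components of $C\langle\{d\}\rangle$. Combining these gives $s_1=s_{-1}\geq d-1$, hence $\TSPOPT\geq 2d-2$, and the ratio $(2d-2)/d\to 2$. The first of these two facts---that forward and backward unit steps must balance exactly---is invisible to any undirected cut argument and is the idea your outline is missing.
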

The example we use to prove this theorem is intimately related to the crown inequalities for the  TSP, as we discuss in Section \ref{sec:conc}.

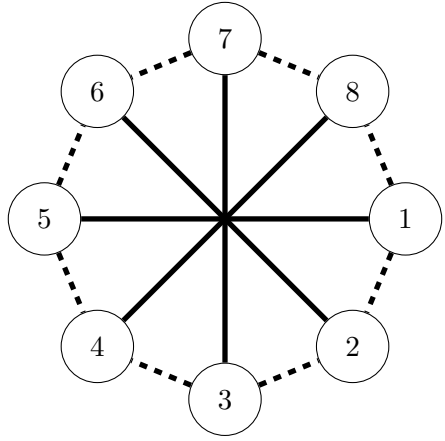
\begin{figure}[t]
\centering

\begin{tikzpicture}[scale=0.6]
\tikzset{vertex/.style = {shape=circle,draw,minimum size=2.5em}}
\tikzset{edge/.style = {->,> = latex'}}
\tikzstyle{decision} = [diamond, draw, text badly centered, inner sep=3pt]
\tikzstyle{sq} = [regular polygon,regular polygon sides=4, draw, text badly centered, inner sep=3pt]
\node[vertex] (a) at  (4, 0) {$1$};
\node[vertex] (b) at  (2.828, 2.828) {$8$};
\node[vertex] (c) at  (0, 4) {$7$};
\node[vertex] (d) at  (-2.828,2.828) {$6$};
\node[vertex] (e) at  (-4, 0) {$5$};
\node[vertex] (f) at  (-2.828, -2.828) {$4$};
\node[vertex] (g) at  (0, -4) {$3$};
\node[vertex] (h) at  (2.828, -2.828) {$2$};

\draw[dashed, line width=2pt] (a) to (b);
\draw[dashed, line width=2pt] (c) to (b);
\draw[dashed, line width=2pt] (c) to (d);
\draw[dashed, line width=2pt] (e) to (d);
\draw[dashed, line width=2pt] (e) to (f);
\draw[dashed, line width=2pt] (g) to (f);
\draw[dashed, line width=2pt] (g) to (h);
\draw[dashed, line width=2pt] (a) to (h);

\draw[line width=2pt] (e) to (a);
\draw[line width=2pt] (b) to (f);
\draw[line width=2pt] (c) to (g);
\draw[line width=2pt] (d) to (h);
\end{tikzpicture}
\caption{An example of a class of instances showing that the integrality gap of the subtour LP restricted to circulant instances is at least 2.  The dashed edges have weight $1/2$ and cost 1, while the full edges have weight 1 and cost 0.} \label{fig:intGap}
\end{figure}

\begin{proof}
Theorem \ref{thm:int} implies that the integrality gap is at most 2.  To prove the theorem it thus suffices to demonstrate an example where the Van der Veen, Van Dal, and Sierksma \cite{VDV91} bound is a factor of two away from the optimal TSP solution.  For such an example, we take $n=2^{k+1}$ so that $d=n/2=2^k$.  Suppose that $c_1=1, c_d=0,$ and $c_i>2^{k+1}$ otherwise.  Then $\phi(1)=d$ and $\phi(2)=1,$ so that $g^{\phi}_1=d,$ $g^{\phi}_i=1$ for $i\geq 2,$ and $\ell=1.$  By Theorem \ref{thm:main}, the optimal solution to the subtour LP has cost 
 $$\VDV=\left(\sum_{i=1}^{\ell} (g_{i-1}^{\phi}-g_i^{\phi})c_{\phi(i)}\right) + c_{\phi(\ell)}=d\cdot 0+d\cdot 1=d=2^k.$$  See Figure \ref{fig:intGap} for a picture of the corresponding subtour LP solution.

Now consider an optimal solution to the TSP.  It cannot use any edges other than those of lengths $1$ and $d$: we can find a tour of cost $2^{k+1}$ by just taking edges of length $1$ (i.e. $\{1, 2\}, \{2, 3\}, ..., \{n-1, n\}, \{n, 1\}$), while edges of any length other than $1$ or $d$ cost strictly greater than $2^{k+1}$.  Now consider any Hamiltonian cycle using only edges from these cheapest two stripes, and consider it as a directed cycle as in Proposition \ref{prop:ST}.  Suppose that it uses $s_1$ edges of length $1$ (where we interpret a directed edge $(u, u+1)$ as having length $1$), $s_{-1}$ edges of length $-1$ (where we interpret a directed edge $(u, u-1)$ as having length $-1$), and $n-s_1-s_{-1}$ edges of length $d=2^k.$  Because $n$ is even, there is no difference between an edge of length $d$ and $-d$: $v+d\equiv_n v-d.$

\begin{claim}\label{cm:modn}
Any Hamiltonian cycle satisfies $$s_1-s_{-1}\equiv_n 0.$$
\end{claim}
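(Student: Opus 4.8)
The plan is to track the net displacement accumulated while traversing the directed Hamiltonian cycle once around, working in $\Z_n$. Fix a starting vertex and follow the directed cycle; since it is a closed walk returning to its start, the sum of the signed displacements of its edges must be $\equiv_n 0$. An edge of length $1$ contributes $+1$, an edge of length $-1$ contributes $-1$, and---because $n$ is even and hence $d\equiv_n -d$---each of the $m:=n-s_1-s_{-1}$ edges of length $d$ contributes $d$ regardless of its orientation. This yields the single congruence
$$ s_1 - s_{-1} + m\,d \equiv_n 0. $$

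First I would use this to reduce the claim to showing $m\,d\equiv_n 0$, i.e.\ that the number $m$ of length-$d$ edges is even: if $m=2m'$, then $m\,d = m'\cdot(2\cdot 2^k) = m'\,n\equiv_n 0$, leaving exactly $s_1-s_{-1}\equiv_n 0$. The parity of $m$ is established by a $2$-coloring argument. Color each vertex by the parity of its label; since $d=2^k$ is even (note $k\geq 1$, as $d\neq 1$), every length-$d$ edge joins two vertices of the same color, while every length-$\pm 1$ edge joins vertices of opposite colors. A closed walk must return to a vertex of its original color, so the number of color-changing edges is even; that is, $s_1+s_{-1}$ is even. As $n$ is even, $m=n-(s_1+s_{-1})$ is then even as well, and the reduction above finishes the proof.

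The main obstacle is that the displacement congruence by itself is \emph{not} enough: it only gives $s_1-s_{-1}\equiv_n -m\,d$, and $m\,d\equiv_n 0$ when $m$ is even but $m\,d\equiv_n d\not\equiv_n 0$ when $m$ is odd. Thus the displacement argument alone leaves open the spurious possibility $s_1-s_{-1}\equiv_n d$, and it is precisely the parity argument on $m$ (equivalently, on $s_1+s_{-1}$) that rules this out and pins the value to $0$. Everything else is a routine modular computation.
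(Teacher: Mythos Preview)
Your proof is correct and follows the same overall strategy as the paper: write down the displacement congruence $s_1-s_{-1}+md\equiv_n 0$ (with $m=n-s_1-s_{-1}$) and then reduce everything to showing that $s_1+s_{-1}$ is even, so that $md\equiv_n 0$. The only difference is in how that parity fact is obtained. The paper bootstraps it from the congruence itself: since $n$ and $2^k$ are both even, the right-hand side $2^k(s_1+s_{-1})\bmod n$ is even, forcing $s_1-s_{-1}$ to be even, and then $s_1+s_{-1}=(s_1-s_{-1})+2s_{-1}$ is even as well. Your $2$-coloring argument is a cleaner and more direct route to the same conclusion: it is simply the observation that the vertex set is bipartitioned by parity, the $\pm 1$ edges cross the bipartition while (since $k\ge 1$) the length-$d$ edges do not, and any closed walk crosses a bipartition an even number of times. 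Both arguments are short; yours avoids the slightly roundabout step of first proving $s_1-s_{-1}$ even in order to conclude $s_1+s_{-1}$ even.
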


With the notation above, a Hamiltonian tour uses $n-s_1-s_{-1}$ edges of length $d.$  Since it starts and ends at the same vertex,\begin{equation}\label{eq:modn} 2^k(n-s_1-s_{-1})+s_1-s_{-1}\equiv_n0 \implies s_1-s_{-1}\equiv_n 2^k(s_1+s_{-1}).\end{equation}  Since $n$ and $2^k$ are even, $(2^k(s_1+s_{-1})) \bmod n$ is even; for the left and right sides to have the same parity,  $s_1-s_{-1}$ must therefore also be even. Moreover $$s_1+s_{-1}=s_1-s_{-1}+2s_{-1}$$ so that $s_1+s_{-1}$ is the sum of two even numbers and is therefore even.
Consider again Equation (\ref{eq:modn}).
Since $s_1+s_{-1}$ is even and $2^k=\f{n}{2},$ $$2^k(s_1+s_{-1})\equiv_n 0.$$ Thus Equation (\ref{eq:modn}) implies $$s_1-s_{-1}\equiv_n 0,$$ and Claim \ref{cm:modn} follows.

Since $s_1, s_{-1}\in [n],$ we have that $$s_1-s_{-1}\in \{-n, 0, n\}.$$  The cases where $|s_1-s_{-1}|=n$ imply a tour only using edges of length 1; i.e., a tour of cost $n=2^{k+1}.$  Thus we need only consider the case where $s_1=s_{-1}.$  Here we analogize an argument from Theorem 5.2 in Greco and Gerace \cite{Grec07}. 

 \begin{claim}\label{cm:compvisited}
A tour using just edges of lengths $1, -1$ and $d$ visits $\max\{s_1, s_{-1}\}+1$ components in of $C\langle\{2^k\}\rangle$.  Hence, a Hamiltonian tour requires $\max\{s_1, s_{-1}\}+1\geq \f{n}{2}=2^k.$
 \end{claim}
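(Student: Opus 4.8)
The plan is to project the tour onto the components of $C\langle\{2^k\}\rangle$ and analyze the resulting walk on a cycle. First I would recall, via Proposition \ref{prop:Ham}, that $C\langle\{2^k\}\rangle$ has exactly $\gcd(n,2^k)=2^k$ components, each of size $2$: the component containing $v$ is $\{v, v+2^k\}$, determined by $v \bmod 2^k$. Thus I can label components by $\Z/2^k$, sending each vertex $v$ to $v \bmod 2^k$. Under this labeling, a directed edge of length $1$ advances the component label by $+1$, an edge of length $-1$ by $-1$, and an edge of length $d=2^k$ leaves the label unchanged (since $2^k\equiv_{2^k}0$). So the tour induces a closed walk on the cycle $\Z/2^k$ using $s_1$ forward steps, $s_{-1}$ backward steps, and $n-s_1-s_{-1}$ stationary steps, and recall we are in the case $s_1=s_{-1}$.

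Next I would count the distinct components the tour touches by unwrapping this walk to the integers. Fixing a starting vertex, let $Q_t\in\Z$ be the signed number of $\pm 1$ steps taken so far (ignoring the stationary $d$-edges); then the component visited at step $t$ is $Q_t \bmod 2^k$, and since $s_1=s_{-1}$ the walk is genuinely closed, so $Q$ starts and ends at $0$. Because $Q$ moves in unit steps, the integers it visits form a contiguous interval $[m,M]$ of size $M-m+1$, so the number of distinct residues mod $2^k$ --- i.e. the number of components visited --- is at most $M-m+1$.

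The crux is then to bound the range $R:=M-m$ of this closed walk by $\max\{s_1,s_{-1}\}$. I would argue as follows: let $M$ be attained at time $t_M$ and $m$ at time $t_m$, and suppose (by symmetry of the two orientations) $t_M<t_m$. Since $Q_0=0$ we have $M\geq 0\geq m$. Travelling from $t_M$ down to $t_m$ has net change $-R$ and hence forces at least $R$ backward steps, so $s_{-1}\geq R$; meanwhile the segment from the start up to $t_M$ and the segment from $t_m$ back to the (closed) end are disjoint and together require at least $M+(-m)=R$ forward steps, so $s_1\geq R$. As $s_1=s_{-1}$, both give $R\leq \max\{s_1,s_{-1}\}$, and so the tour visits at most $R+1\leq \max\{s_1,s_{-1}\}+1$ components. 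Since a Hamiltonian tour visits every vertex and therefore meets all $2^k$ components, we conclude $2^k\leq \max\{s_1,s_{-1}\}+1$, which is the desired bound.

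I expect the main obstacle to be the range bound $R\leq\max\{s_1,s_{-1}\}$: it must be handled as a property of the \emph{closed} walk (using that the tour returns to its start, and that shifting the cyclic starting point preserves both $R$ and the count of visited components), since for an arbitrary walk the range could be as large as $s_1+s_{-1}$. The projection step and the contiguous-interval observation are routine once the quotient structure on $\Z/2^k$ is set up.
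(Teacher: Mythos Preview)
Your argument is correct and takes a genuinely different route from the paper's. The paper argues combinatorially on the sequence $L'$ of $\pm 1$ steps: it repeatedly deletes an adjacent $(+1,-1)$ or $(-1,+1)$ pair from $L'$, observing that such a pair can introduce at most one new component, and then charges the remaining monochromatic tail. This pairing/cancellation bookkeeping yields the count $U=\max\{s_1,s_{-1}\}+1$ without ever invoking $s_1=s_{-1}$. Your approach instead lifts the component walk to $\Z$ and bounds the number of residues by the range $R=M-m$ of the integer walk, then bounds $R$ by a start--max--min--end segment count. This is cleaner and more geometric, and the range bound is arguably more transparent than the paper's deletion procedure (whose correctness after iterated deletions is only sketched). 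The price you pay is that your WLOG via ``symmetry of orientations'' and the ``end at $0$'' step both use $s_1=s_{-1}$; that is harmless here since the claim is only applied in that case, but note that you could drop this hypothesis: whichever of $t_M<t_m$ or $t_m<t_M$ holds, the single middle segment already forces $s_{-1}\ge R$ or $s_1\ge R$ respectively, giving $R\le\max\{s_1,s_{-1}\}$ outright.
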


 We note that the graph $C\langle\{2^k\}\rangle$ using just edges of length $2^k$ has $2^k$ connected components $C_1^1, C_2^1, ..., C^1_{2^k}.$ We identify $C^1_i$ as consisting of the two vertices $\{i, 2^k+i\}$ connected by a single edge of length $2^k.$ 

Let $L=(e_1, ..., e_n)$ be a list of edges in any Hamiltonian tour using just edges of lengths $1, -1$ and $d$, so that $e_i\in \{-1, 1, d\}$ for $i=1, ..., n.$  From this list, we can bound the number of components of $C\langle\{2^k\}\rangle$ visited: first, we can delete any edges of length $d$: they do not cause us to change components   of $C\langle\{2^k\}\rangle;$  any length 1 edge connects $C_i^1$ to $C_{i+1}^1,$ while any length $-1$ edge connects $C_i^1$ to $C_{i-1}^1$ (regardless of whether or not any length $d$ edges are used).  Hence we need only consider the subsequence $L'$ of $L$ just consisting of edges of lengths 1 and $-1$ obtained by deleting the edges of length $d$.  Formally, $$L' = (e_{i_1}, ..., e_{i_k}): i_1<i_2<\cdots i_k, e_{i_j}\in \{\pm 1\}.$$

We upper bound the number of components of $C\langle\{2^k\}\rangle$ visited directly from $L'$ as follows: Set $U=1,$ corresponding to starting at some component.  Until $L'$ is either all $1$s or all $-1$s, find an occurrence of a  $1$ followed by a $-1$ in $L'$ (or a  $-1$ followed by a $1$); delete these two elements and increment $U$ by 1.  Once this process terminates, increment $U$ by $|L'|$ (the number of 1s or $-1$s remaining when $L'$ is either all 1s or all $-1$s).  Note that, at the end, $U=\max\{s_1, s_{-1}\}+1.$  $U$ provides an upper bound on the number of components of $C\langle\{2^k\}\rangle$ visited: Any time a 1 is followed by a $-1$ in $L$, the effect is to move from  $C_i^1$ to $C_{i+1}^1,$ then back to  $C_i^1.$  Hence we visit at most one new component,  $C_{i+1}^1$.  It is analogous any time a $-1$ is followed by a $1$.  Thus Claim \ref{cm:compvisited} holds.

 Since any Hamiltonian cycle must visit every component of $C\langle\{2^k\}\rangle$,  we need $$\max\{s_1, s_{-1}\}+1\geq \f{n}{2}=2^k.$$  That is, we need at least $2^k-1$ length $1$ edges, or $2^k-1$ length $-1$ edges, to connect all components.  
 
 Putting Claims \ref{cm:modn} and \ref{cm:compvisited} together, we find that we need $$s_1, s_{-1}\geq 2^k-1,$$  so that  $\TSPOPT\geq 2^{k+1}-2.$  We can find such a tour to establish equality: $$\{1, 2\}, \{2, 3\}, ..., \{2^k-1, 2^k\}, \{2^k, n\}, \{n, n-1\}, ..., \{2^{k}+2, 2^k+1\}, \{2^k+1, 1\}.$$  See, for example, Figure \ref{fig:TSPOPT}.
Thus 
$$\f{\TSPOPT}{\VDV} = \f{2^{k+1}-2}{2^k}\rar 2.$$

\begin{figure}[t]
\centering

\begin{tikzpicture}[scale=0.6]
\tikzset{vertex/.style = {shape=circle,draw,minimum size=2.5em}}
\tikzset{edge/.style = {->,> = latex'}}
\tikzstyle{decision} = [diamond, draw, text badly centered, inner sep=3pt]
\tikzstyle{sq} = [regular polygon,regular polygon sides=4, draw, text badly centered, inner sep=3pt]
\node[vertex] (a) at  (4, 0) {$1$};
\node[vertex] (b) at  (2.828, 2.828) {$8$};
\node[vertex] (c) at  (0, 4) {$7$};
\node[vertex] (d) at  (-2.828,2.828) {$6$};
\node[vertex] (e) at  (-4, 0) {$5$};
\node[vertex] (f) at  (-2.828, -2.828) {$4$};
\node[vertex] (g) at  (0, -4) {$3$};
\node[vertex] (h) at  (2.828, -2.828) {$2$};

\draw (a) to (h);
\draw (g) to (h);
\draw (g) to (f);
\draw[line width=2pt] (b) to (f);
\draw (b) to (c);
\draw (c) to (d);
\draw (e) to (d);
\draw[line width=2pt] (e) to (a);
\end{tikzpicture}
\caption{An optimal TSP solution for the instance in Theorem \ref{main2}.  Thick edges have cost 0 while thin edges have cost 1 so that this solution has cost $2^3-2=6.$} \label{fig:TSPOPT}
\end{figure}
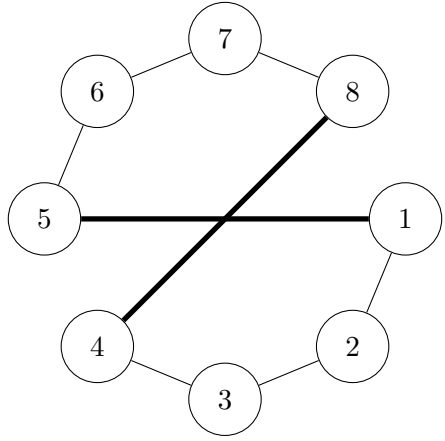 

\hfill 
\end{proof}

\section{Conclusions}\label{sec:conc}

Theorems \ref{thm:main} and \ref{main2} characterize the subtour LP when restricted to circulant instances: its optimal solution has an explicit combinatorial formulation given in Theorem \ref{thm:main} and is based entirely on how connectivity changes from $C\langle\{\phi(1), ..., \phi(i-1)\}\rangle$ to $C\langle\{\phi(1), ..., \phi(i)\}\rangle$ for stripes $i=1, ..., \ell.$  Moreover, the integrality gap  of the subtour LP on circulant instances is exactly two.

Our hope is also that this paper reinvigorates interest in several compelling open questions: What inequalities can be added to the subtour LP to strengthen its integrality gap on circulant instances?  Are there stronger linear programs for circulant instances, and do they translate to metric, symmetric TSP?  For example, de Klerk, Pasechnik, and Sotirov \cite{Klerk08} introduce a  semidefinite programming (SDP) relaxation for the TSP based on the theory of association schemes  (see also de Klerk, de Oliveira Filho, and Pasechnik \cite{Klerk12}). De Klerk and Dobre \cite{Klerk11} show that, for circulant instances, this SDP can be written as an LP; Gutekunst and Williamson \cite{Gut17} show that the general SDP has an unbounded integrality gap, but the integrality gap of the SDP (and equivalent LP) on circulant instances remains open.  Numerical experiments suggest that the integrality gap on circulant instances is at least 2, using the same  instances as in Section \ref{sec:main2}.  Similarly,  de Klerk and Sotirov \cite{Klerk12b} present an SDP that uses symmetry reduction to strengthen the SDP of  de Klerk, Pasechnik, and Sotirov \cite{Klerk08}; Gutekunst and Williamson \cite{Gut19} show that integrality gap of this SDP is also unbounded in general, but the gap is unknown when restricted to circulant instances.

With respect to adding inequalities to remove our bad instances (see Figure \ref{fig:intGap} for an example), we note that our instance achieving the worst-case integrality gap also appears in Naddef and Rinaldi \cite{Nad92} where they  construct explicit facet-defining inequalities that remove it from the subtour LP in a non-circulant setting.  These inequalities are the {\bf crown inequalities} and take the form $$\al^Tx\geq \al_0 := 12s(s-1)-2, \hspace{5mm} n=4s,$$ where the weight $\alpha_e$ that $\alpha$ places on edge $e$ is based only on the length of edge $e$: $$\al(v, v+j)=\begin{cases} 4s-6+j, & j<d \\ 2(s-1), &j=d. \end{cases}$$  Here, for example, the crown inequalities place a weight of $2(s-1)$ on each of the $d$ edges from the $d$th stripe, and a weight of $4s-5$ on each edge in the first stripe.  The subtour LP solution places a weight of $1$ on each of the $d$ edges of length $d$, and 1/2 on each of the $n$ length 1 edges.  Since $d=2s$:
$$\al^Tx =2s(2s-2)+\f{1}{2}4s(4s-5)=2s(6s-7)=12s^2-14s<\al_0=12s^2-12s-2$$ so that they are violated for any example where $n=4s$ and $s>1.$  

Unfortunately, adding these constraints does not reduce the integrality gap from 2.  We can instead consider solutions to the subtour LP that place marginally less weight on the $d$-edges and marginally more weight on the $1$-edges.  If we let $\lambda$ be the weight on the  $n$ edges of length 1 (on which $\alpha$ places weight $4s-5=n-5$), then $2-2\lambda$ is the weight on on each of the $d=\f{n}{2}$ edges of length $d$ (on which $\alpha$ places a weight of $2s-2=\f{n}{2}-2$).   The right hand side of the crown inequalities is $12\f{n}{4}\left(\f{n}{4}-1\right)-2=\f{3}{4}n^2-3n-2$, so we can solve for $$\lambda n (n-5)+(2-2\lambda)\f{n}{2} \left(\f{n}{2}-2\right)\geq \f{3}{4}n^2-3n-2 \rar \lambda \geq \f{n^2-4n-8}{2n^2-12n}=\f{1}{2}+ \f{2}{3n}+\f{1}{3(n-6)}$$ (assuming that $n>6$). Hence, setting $$ \lambda = \f{n^2-4n-8}{2n^2-12n}=\f{1}{2}+ \f{2}{3n}+\f{1}{3(n-6)}$$ suffices to find a solution that satisfies the subtour elimination constraints and the crown inequalities, but does not reduce the integrality gap.

\begin{proposition}
Adding the crown inequalities does not change the integrality gap of the subtour LP when restricted to circulant instances.
\end{proposition}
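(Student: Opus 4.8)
The plan is to reuse the very instances from the proof of Theorem~\ref{main2} and exhibit a single fractional solution that is feasible for the subtour LP augmented with the crown inequalities, yet still has objective value asymptotically equal to $\tfrac{1}{2}\TSPOPT$. Fix $n = 2^{k+1}$ with $k$ large enough that $n > 6$, and write $n = 4s$ with $s = 2^{k-1}$ so that both the construction of Theorem~\ref{main2} and the crown inequalities (which require $n = 4s$) apply simultaneously. Take the same costs $c_1 = 1$, $c_d = 0$, and $c_i > n$ otherwise, for which Theorem~\ref{main2} gives $\TSPOPT = n - 2$. Since the crown inequalities are valid for every Hamiltonian tour (they are facet-defining for the symmetric TSP polytope), adding them leaves $\TSPOPT$ unchanged and can only raise the LP optimum; hence the integrality gap of the augmented LP is at most that of the subtour LP, which is $2$ by Theorem~\ref{thm:int}. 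It therefore suffices to produce feasible solutions to the augmented LP whose cost is $\tfrac{n}{2} + O(1)$.

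For the parameter $\lambda = \tfrac{1}{2} + \tfrac{2}{3n} + \tfrac{1}{3(n-6)}$ computed above, let $x^{(\lambda)}$ place weight $\lambda$ on each of the $n$ length-$1$ edges, weight $2 - 2\lambda$ on each of the $d = n/2$ length-$d$ edges, and weight $0$ elsewhere. The crux of the argument is to verify that $x^{(\lambda)}$ is feasible for the \emph{full} subtour LP, not merely the degree and crown constraints. I would establish this by exhibiting $x^{(\lambda)}$ as a convex combination of two points already known to lie in the subtour polytope: the optimal subtour solution $x^{(1/2)}$ of Theorem~\ref{thm:main} (weight $\tfrac{1}{2}$ on length-$1$ edges, weight $1$ on length-$d$ edges) and the incidence vector $x^{(1)}$ of the Hamiltonian cycle $1,2,\dots,n,1$ formed by the length-$1$ stripe (a single tour since $\gcd(1,n)=1$). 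A direct check gives
\[
x^{(\lambda)} = (2 - 2\lambda)\,x^{(1/2)} + (2\lambda - 1)\,x^{(1)},
\]
and for $\lambda \in [\tfrac{1}{2}, 1]$ both coefficients are nonnegative and sum to $1$; since the subtour polytope is convex and contains both endpoints, it contains $x^{(\lambda)}$.

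It then remains to confirm the crown inequality and compute the objective. The crown inequality $\al^T x \geq \al_0$ holds for $x^{(\lambda)}$ precisely because the displayed choice of $\lambda$ was obtained by solving $\al^T x^{(\lambda)} \geq \al_0$ for $\lambda$, so this step is the routine verification already carried out above. For the objective, only the length-$1$ edges have nonzero cost, giving $\sum_e c_e x^{(\lambda)}_e = \lambda \cdot n \cdot 1 = \lambda n = \tfrac{n}{2} + \tfrac{2}{3} + \tfrac{n}{3(n-6)}$, which is $\tfrac{n}{2} + O(1)$. Consequently the augmented-LP optimum is at most $\lambda n$, and
\[
\f{\TSPOPT}{\lambda n} = \f{n-2}{\tfrac{n}{2} + O(1)} \rar 2
\]
as $k \rar \infty$. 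Together with the upper bound from the first paragraph, this shows the integrality gap of the subtour LP with the crown inequalities added is still exactly $2$.

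I expect the main obstacle to be precisely the subtour-feasibility of $x^{(\lambda)}$: checking the exponentially many subtour elimination constraints directly would be painful, so the payoff comes from recognizing that perturbing $\lambda$ away from $\tfrac{1}{2}$ simply moves along the segment from the known LP optimum toward a genuine tour, keeping one inside the polytope for free. Everything else---the parity bookkeeping to make $n$ simultaneously a power of two and a multiple of four, the crown computation, and the limit---is routine and largely discharged by the surrounding text.
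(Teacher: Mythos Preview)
Your proposal is correct and follows essentially the same approach as the paper: the paper also perturbs the subtour optimum by taking a convex combination with the all-ones Hamiltonian cycle, chooses the same $\lambda$ to just satisfy the crown inequality, and computes $(n-2)/(n\lambda)\to 2$. Your write-up is in fact slightly more explicit, giving the coefficients $(2-2\lambda)$ and $(2\lambda-1)$ of the convex combination where the paper merely asserts it.
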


\begin{proof}
We take our solution above, setting $$ \lambda = \f{n^2-4n-8}{2n^2-12n}=\f{1}{2}+ \f{2}{3n}+\f{1}{3(n-6)}$$ and 
 placing a weight $\lambda$  on the $1$-edges (the dashed edges in Figure \ref{fig:intGap}) and $2-2\lambda$ on the edges of length $d$ (the full edges in Figure \ref{fig:intGap}).  Note that this solution is still feasible for the subtour LP: we are taking a convex combination of the instance in Theorem \ref{main2} and the Hamiltonian cycle using just 1-edges.  This thus lower bounds the integrality gap as:
 $$\f{\TSPOPT}{\LPOPT}=\f{n-2}{n\lambda}\rar 2$$ as $n\rar \infty,$ where $n=2^{k+1}.$
\hfill 
\end{proof} 

We note that the ladder and chain inequalities (see Boyd and Cunningham \cite{Boyd91}, Padberg and Hong \cite{Pad80}) 
can similarly be added to remove the solutions constructed in Theorem \ref{main2} but do not reduce the integrality gap from 2.

We conjecture that the following inequalities are valid.
\begin{conjecture}
The following inequality, if valid, would strengthen the subtour LP in the symmetric circulant case.  If $4|n,$ then
$$\sum_{i=1}^{n-1}\al_i \left(\sum_{\substack{e\in E:\\  \text{length}(e)=i}} x_e \right) \geq n-2, \hspace{5mm} \al_i = \begin{cases} i, & \text{ if i odd} \\d-i, & \text{ if i even.}\end{cases}$$
\end{conjecture}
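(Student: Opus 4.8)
The plan is to prove the equivalent statement that the symmetric circulant instance with edge costs $c_i = \al_i$ (where $\al_i = i$ for odd $i$ and $\al_i = d-i$ for even $i$, $1 \le i \le d$; the terms with $i > d$ in the conjecture are vacuous, since no edge has stripe length exceeding $d$) has the property that \emph{every} Hamiltonian tour costs at least $n-2$. The first thing I would record is that this cannot be proved through the subtour LP. For this cost vector the cheapest stripe is $\phi(1) = d$ (cost $0$) and the next is $\phi(2) = 1$ (cost $1$), giving $g_1^\phi = \gcd(n,d) = d$, $g_2^\phi = 1$, and $\ell = 2$, so by Proposition \ref{prop:VDV} one has $\VDV = (n-d)\cdot 0 + (d-1)\cdot 1 + c_1 = d = \f{n}{2}$. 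By Theorem \ref{thm:main}, $\LPOPT = \VDV = \f{n}{2}$, which is only \emph{half} of the target $n-2$. Thus the inequality lives entirely in the integrality gap, and any proof must exploit integrality exactly as the proof of Theorem \ref{main2} does, rather than any LP or dual-fitting argument.

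Concretely, I would orient a tour into directed steps and write each as a signed stripe $\delta_t \in \{-(d-1), \ldots, -1, 1, \ldots, d\}$ with $|\delta_t|$ the stripe length, so that the tour condition becomes $\sum_t \delta_t \equiv_n 0$ and the goal becomes $\sum_t \al_{|\delta_t|} \ge n-2$. The natural generalization of Theorem \ref{main2} is to pass to the quotient onto the $d = \f{n}{2}$ components of $C\langle\{d\}\rangle$ (the antipodal pairs $\{v, v+d\}$) via $\rho(v) = v \bmod d$. Under $\rho$, the free stripe-$d$ edges are self-loops, while a stripe-$j$ edge ($j < d$) moves by $\pm j$ in $\Z_d$. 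I would then combine two ingredients, mirroring Claims \ref{cm:modn} and \ref{cm:compvisited}: (i) a \emph{covering} bound, namely that the tour must visit all $d$ components of $C\langle\{d\}\rangle$, forcing at least $d-1$ net component-changing moves in $\Z_d$; and (ii) a \emph{winding/parity congruence} coming from $\sum_t \delta_t \equiv_n 0$, sharpened using $4 \mid n$ (so that $d = \f{n}{2}$ is even), which should force forward and backward progress to be essentially balanced and thereby \emph{double} the count $d-1$ up to $n-2$. I would calibrate the whole argument against the extremal family of Theorem \ref{main2}, where $2(d-1)$ length-$1$ edges and two free length-$d$ edges attain equality $\sum_t \al_{|\delta_t|} = n-2$.

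The step I expect to be the true obstacle is coupling the global single-cycle (connectivity) constraint to the weights $\al$. The weighting is genuinely asymmetric on $\Z_d$-moves: a backward move by $2$ can be realized cheaply by a stripe-$(d-2)$ edge (cost $2$), whereas a forward move by $2$ costs $\al_2 = d-2$, and this directional asymmetry is precisely the failure of $\al_i$ to be symmetric under $i \mapsto n-i$. More seriously, any attempt to split the tour into its two vertex-parity classes and bound each class's cost independently \emph{discards} the single-cycle requirement: such a relaxation would permit covering each parity class by a perfect matching of free antipodal edges, which Hamiltonicity forbids (for $n=8$ this produces two $4$-cycles rather than a tour), and it yields a bound strictly weaker than $n-2$. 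Hence connectivity cannot be dropped; it must be carried through the estimate, most plausibly via a potential function on $\Z_d$ augmented by an integral (parity/winding) correction term, or via an exchange argument that rewrites an arbitrary tour into the Theorem \ref{main2} extremal form without increasing $\al$-cost.

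Finally, I would be candid that the statement is flagged ``if valid,'' so part of the task is to confirm validity (or exhibit a counterexample). A direct check when $n = 8$ (costs $(c_1,c_2,c_3,c_4) = (1,2,3,0)$) gives optimal tour value exactly $6 = n-2$: using $k$ free edges forces cost $\ge n-k$, the case $k = 4$ is ruled out as a single cycle unless an expensive stripe-$3$ edge is used, and the congruence from Claim \ref{cm:modn} makes the cheap configurations with $k = 3$ infeasible (they would require a half-integral count of $\pm 1$ steps). This both supports the conjecture and isolates the mechanism --- a congruence obstruction plus a connectivity obstruction --- that a general proof must formalize. I would therefore prioritize either the exchange argument toward the extremal tours, or a short computer search over small $n$ with $4 \mid n$, before investing in the full potential-plus-correction argument.
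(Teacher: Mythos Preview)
The paper does not prove this statement: it is explicitly posed as an open \emph{conjecture} in the concluding section, introduced with the hedge ``if valid.'' There is therefore no proof in the paper against which to compare your proposal.

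Your write-up is itself not a proof but a plan, and you are candid about this. The preliminary observations are correct and well-chosen: the instance with $c_i=\al_i$ has $\phi(1)=d$, $\phi(2)=1$, hence $\VDV=\LPOPT=d=n/2$, so the conjectured inequality genuinely lives in the integrality gap and cannot be established by LP duality. Your reduction to the quotient $\Z_d$ by collapsing antipodal pairs, together with the two-ingredient strategy (a covering count forcing roughly $d-1$ moves in $\Z_d$, plus a parity/winding congruence from $4\mid n$ doubling this to $n-2$), is exactly the template the paper uses to prove Theorem~\ref{main2}, and you correctly identify the extremal tours as the family from that theorem. You also correctly isolate the real obstacle: the weighting $\al$ is asymmetric under $i\mapsto n-i$ on $\Z_d$-moves, so a naive potential argument fails, and any relaxation that drops the single-cycle constraint (e.g., splitting by vertex parity) is too weak. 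That diagnosis is sound, but it is precisely why the paper leaves the statement open; your proposal does not close the gap, and neither does the paper.
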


Finally, as noted earlier, it is a major open question whether or not circulant TSP is polynomial-time solvable.  The answer is not known even in the case where only two stripes have finite cost.  It would be interesting to see if some of the tools developed recently for the metric TSP might be able to resolve this decades-long open question.

\section*{Acknowledgments}
 We thank Etienne de Klerk for pointing us to reference \cite{Klerk11}. We thank the referees for valuable comments and particularly thank the referee who suggested a cleaner proof of Proposition \ref{prop:ST}. 
 This work was supported by the Simons Institute for the Theory of Computing. 
 This material is also based upon work supported by the National Science Foundation Graduate 
Research Fellowship Program under Grant No. DGE-1650441. Any opinions,
findings, and conclusions or recommendations expressed in this material are those of the
authors and do not necessarily reflect the views of the National Science Foundation.

\clearpage

\clearpage

\bibliography{bibliog} 
\bibliographystyle{abbrv}

\clearpage

\appendix

\section{Appendix: Previous Results on Circulant TSP}

In this appendix, we first sketch the proof of Proposition \ref{prop:HP}.  We then sketch the 2-approximation algorithm for circulant TSP given in Gerace and Greco \cite{Ger08}.  

\subsection{Proof of Proposition \ref{prop:HP}}\label{ap:HP}
We recall Proposition \ref{prop:HP}.
\begin{prop*}[Proposition \ref{prop:HP}; from Bach, Luby, and Goldwasser, cited in Gilmore, Lawler, and Shmoys \cite{Gil85}]
Let $c_1, ..., c_d$ be the edge costs of a circulant instance and let $\phi$ be an associated stripe permutation.  The minimum-cost Hamiltonian path has cost $$\sum_{i=1}^{\ell} (g_{i-1}^{\phi}-g_i^{\phi})c_{\phi(i)}.$$
\end{prop*}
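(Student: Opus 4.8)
The plan is to prove the formula as a pair of matching bounds: the quantity $\sum_{i=1}^{\ell}(g_{i-1}^{\phi}-g_i^{\phi})c_{\phi(i)}$ is simultaneously a lower bound on and an achievable value for the minimum-cost Hamiltonian path.

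For the lower bound I would observe that any Hamiltonian path on $[n]$ is a spanning tree of $K_n$ (it is connected, spans all vertices, and has exactly $n-1$ edges, hence is acyclic), so its cost is at least the cost of a minimum spanning tree; it then suffices to show the minimum spanning tree has cost exactly $\sum_{i=1}^{\ell}(g_{i-1}-g_i)c_{\phi(i)}$. This follows from the greedy (Kruskal) analysis run with the stripe permutation $\phi$: processing edges in nondecreasing cost order, the number of tree edges added while scanning all edges of stripe $\phi(i)$ equals the drop in the number of components from $C\langle\{\phi(1),\dots,\phi(i-1)\}\rangle$ to $C\langle\{\phi(1),\dots,\phi(i)\}\rangle$. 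By Proposition \ref{prop:Ham} these graphs have $g_{i-1}$ and $g_i$ components respectively, so the greedy algorithm selects exactly $g_{i-1}-g_i$ edges of cost $c_{\phi(i)}$. Summing, and using the telescoping identity $\sum_{i=1}^{\ell}(g_{i-1}-g_i)=g_0-g_\ell=n-1$ (since $g_0=n$ and $g_\ell=1$ by definition of $\ell$), yields both the stated cost and a confirmation that $n-1$ edges suffice, with no edges needed from stripes $\phi(\ell+1),\dots,\phi(d)$.

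For the upper bound I would construct a Hamiltonian path attaining this cost by induction on $i$, maintaining the invariant that each component $C^i$ of $C\langle\{\phi(1),\dots,\phi(i)\}\rangle$ carries a Hamiltonian path using only edges of stripes $\phi(1),\dots,\phi(i)$. The base case $i=0$ is trivial, as each $C^0$ is a single vertex. For the inductive step, Lemma \ref{lem:mergeCi} shows that the $\frac{g_{i-1}}{g_i}$ components $C^{i-1}$ composing a fixed $C^i$ form a directed cycle after contraction, with consecutive contracted components joined by stripe-$\phi(i)$ edges. Deleting one edge of this cycle linearly orders the sub-components, and I would splice their Hamiltonian sub-paths into a single Hamiltonian path on $C^i$ by inserting $\frac{g_{i-1}}{g_i}-1$ edges of stripe $\phi(i)$. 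Performing this in every one of the $g_i$ components $C^i$ uses $g_i\left(\frac{g_{i-1}}{g_i}-1\right)=g_{i-1}-g_i$ edges of stripe $\phi(i)$, so iterating up to $i=\ell$ (where $g_\ell=1$ and $C^\ell=V$) produces a single Hamiltonian path on all of $V$ of cost exactly $\sum_{i=1}^{\ell}(g_{i-1}-g_i)c_{\phi(i)}$, matching the lower bound.

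The main obstacle is the endpoint bookkeeping in the splicing step: to chain the sub-paths along the contracted cycle using one stripe-$\phi(i)$ edge between each consecutive pair, I must guarantee that an endpoint of the Hamiltonian sub-path on each $C^{i-1}$ is joined by a stripe-$\phi(i)$ edge to an appropriate endpoint of the neighboring sub-path. I would handle this by strengthening the inductive hypothesis to track the two endpoints of each sub-path explicitly, exploiting that by Proposition \ref{prop:Ham} each $C^{i-1}$ is an arithmetic progression modulo $g_{i-1}$ and that a stripe-$\phi(i)$ edge shifts a vertex by $\phi(i)$; this lets a consistent choice of entry and exit vertex be propagated through the induction. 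Verifying that such a choice is always simultaneously realizable all the way around the cycle is the delicate part, while the remaining steps reduce to the routine counting above.
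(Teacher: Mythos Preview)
Your proposal is correct and follows essentially the same two-part strategy as the paper's sketch: the lower bound via the observation that a Hamiltonian path is a spanning tree, with Kruskal's algorithm and Proposition~\ref{prop:Ham} giving exactly $g_{i-1}-g_i$ edges of cost $c_{\phi(i)}$, is identical to the paper's argument. For the upper bound, the paper simply invokes the nearest-neighbor heuristic (citing Van der Veen, Van Dal, and Sierksma) and asserts that it uses $g_{i-1}-g_i$ edges from stripe $\phi(i)$; your inductive splicing construction is that same heuristic unpacked level by level, and the zigzag pattern you describe is exactly what nearest-neighbor produces (cf.\ Figure~\ref{fig:HamPath}).

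The one place you are more careful than the paper is the endpoint bookkeeping, which the paper's sketch elides by deferring to the cited reference. Your instinct to strengthen the inductive hypothesis is right, and the cleanest way to do it is to require that the Hamiltonian paths on the $g_i$ components $C^i_1,\dots,C^i_{g_i}$ are all translates of a single path. Since consecutive $C^{i-1}$'s in the contracted cycle of Lemma~\ref{lem:mergeCi} differ by the shift $v\mapsto v+\phi(i)$, translated sub-paths have endpoints that line up under a stripe-$\phi(i)$ edge, and the zigzag (alternately traversing a sub-path forward and backward) chains them with exactly $\tfrac{g_{i-1}}{g_i}-1$ such edges. Constructing the path on one $C^i$ this way and translating it to the remaining components preserves the strengthened invariant, so the induction closes without further difficulty.
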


\begin{figure}[h!t]
\centering

\begin{tikzpicture}[scale=0.6]
\tikzset{vertex/.style = {shape=circle,draw,minimum size=2.5em}}
\tikzset{edge/.style = {->,> = latex'}}
\tikzstyle{decision} = [diamond, draw, text badly centered, inner sep=3pt]
\tikzstyle{sq} = [regular polygon,regular polygon sides=4, draw, text badly centered, inner sep=3pt]
\node[vertex] (a) at  (-11, 2) {$1$};
\node[vertex] (b) at  (-11, -2) {$7$};
\node[vertex] (c) at  (-7, 2) {$3$};
\node[vertex] (d) at  (-7, -2) {$9$};
\node[vertex] (a1) at  (-3, 2) {$5$};
\node[vertex] (b1) at  (-3, -2) {$11$};

\node[vertex] (c1) at  (3, 2) {$8$};
\node[vertex] (d1) at  (3, -2) {$2$};
\node[vertex] (a2) at  (7, 2) {$6$};
\node[vertex] (b2) at  (7, -2) {$12$};
\node[vertex] (c2) at  (11, 2) {$4$};
\node[vertex] (d2) at  (11, -2) {$10$};
\draw (a) -- (b);
\draw (c) -- (d);
\draw (a1) -- (b1);
\draw[line width=2pt] (b) -- (d);
\draw[line width=2pt] (c) -- (a1);

\draw (c1) -- (d1);
\draw (a2) -- (b2);
\draw (c2) -- (d2);
\draw[line width=2pt] (c1) -- (a2);
\draw[line width=2pt] (b2) -- (d2);

\draw[dotted,line width=2pt] (b1) -- (d1);
\end{tikzpicture}
\caption{Constructing a minimum-cost Hamiltonian Path via the nearest neighbor heuristic.  In this case, $n=12$, $\phi(1)=6$ (thin edges), $\phi(2)=2$ (thick edges), and $\phi(3)=3$ (dotted edges).  This process  fully connects a component of $C\langle\{\phi(1), ..., \phi(i)\}\rangle$, uses an edge of length $\phi(i+1)$ to move to the new component of  $C\langle\{\phi(1), ..., \phi(i)\}\rangle,$ and recursively fully connects that component.  When all possible edges of length $\phi(1), ..., \phi(i), \phi(i+1)$ have been added, the path connects a component of $C\langle\{\phi(1), ..., \phi(i+1)\}\rangle$ and the process repeats using edges of length $\phi(i+2)$.} \label{fig:HamPath}
\end{figure}
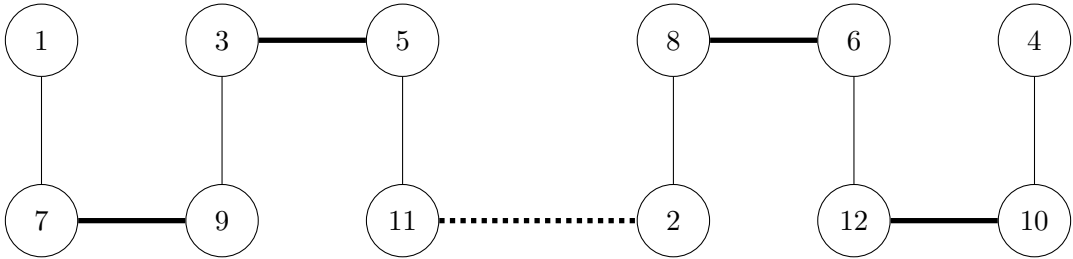

\begin{proof}[Sketch]
Van der Veen, Van Dal, and Sierksma \cite{VDV91} argue that the nearest neighbor heuristic\footnote{Start at some vertex and and follow a cheapest edge from that vertex.  Then, recursively grow a Hamiltonian path by adding a cheapest edge from the most recently added vertex to a vertex that has not yet been visited.}  constructs a Hamiltonian path using exactly $g_{i-1}^{\phi}-g_i^{\phi}$ edges from the $i$th cheapest stripe (see Figure \ref{fig:HamPath}).  This path thus has cost
 $$\sum_{i=1}^{\ell} (g_{i-1}^{\phi}-g_i^{\phi})c_{\phi(i)}.$$
 The optimality of such a path can be seen by applying Kruskal's algorithm \cite{Krusk56} for minimum-cost spanning trees:  For $1\leq i\leq \ell,$ Proposition \ref{prop:Ham} indicates that the graph $C\langle\{\phi(1), \phi(2), ..., \phi(i)\}\rangle$ has $g_i^{\phi}$ components.  Hence, at most $n-g_i^{\phi}$ edges can be used from the cheapest $i$ stripes without creating a cycle. Kruskal's algorithm will find a minimum-cost spanning tree using $n-g_1^{\phi}=g_0^{\phi}-g_1^{\phi}$ edges from the cheapest stripe, $g_1^{\phi}-g_2^{\phi}$ edges from the second cheapest stripe, and in general $g_{i-1}^{\phi}-g_i^{\phi}$ edges from the $i$th cheapest stripe.  This spanning tree thus also costs  $\sum_{i=1}^{\ell} (g_{i-1}^{\phi}-g_i^{\phi})c_{\phi(i)}.$  Since any Hamiltonian path  is itself a spanning tree,  any Hamiltonian path must cost at least this much; the constructed Hamiltonian path achieves this lower bound and is therefore optimal.   \hfill 
\end{proof}  

\subsection{Two Approximation for Circulant TSP}\label{ap}
This 2-approximation algorithm is motivated by a heuristic  Van der Veen, Van Dal, and Sierksma \cite{VDV91} developed for the case where every stripe has distinct cost.  The algorithm only adds edges of length $\phi(i)$ if $g_i^{\phi}<g_{i-1}^{\phi}.$  For simplicity of exposition, we'll suppress the dependence on $\phi$ and assume that $$n=g_0<g_1<g_2<\cdots<g_{\ell}=1$$ as in Section \ref{Main}.  

\subsubsection{Case 1: $g_{\ell-1}$ is Even}\label{ap1}

This algorithm is most straightforward when $g_{\ell-1}$ is even: First, it
 builds Hamiltonian paths on each component of  $C\langle\{\phi(1), ..., \phi(\ell-1)\}\rangle.$  It then deletes  one edge from  $g_{\ell-1}-1$ of these paths.  Finally, it adds $2(g_{\ell-1}-1)$ of length $\phi(\ell).$ 
 See Figure \ref{fig:PC2}. 

More specifically, construct a Hamilonian path on the vertices in the component of $C\langle\{\phi(1), ..., \phi(\ell-1)\}\rangle$ containing vertex $1$ using the nearest neighbor rule starting at vertex 1.  Call this path $P_1$ and let $z$ be the other endpoint of $P_1.$ Let $C_i^{\ell-1}$ be the component of $C\langle\{\phi(1), ..., \phi(\ell-1)\}\rangle$ containing vertex $1+(i-1)\phi(\ell)$ (as usual, here and throughout we implicitly consider all vertices mod $n$).  Translate $P_1$ to a Hamiltonian path $P_i$ on the vertices in $C_i^{\ell-1}$: add $(i-1)\phi(\ell)$ to the label of every vertex in $P_1$.   See Figure \ref{fig:PC}.

\begin{figure}[t]
\centering
\begin{tikzpicture}[scale=0.6]
\tikzset{vertex/.style = {shape=circle,draw,minimum size=2.5em}}
\tikzset{edge/.style = {->,> = latex'}}
\tikzstyle{decision} = [diamond, draw, text badly centered, inner sep=3pt]
\tikzstyle{sq} = [regular polygon,regular polygon sides=4, draw, text badly centered, inner sep=3pt]
\node[vertex] (a1) at  (-8, 6) {$1$};
\node[vertex] (a2) at  (-8, 3) {$13$};
\node[vertex] (a3) at  (-8, 0) {$19$};
\node[vertex] (a4) at  (-8, -3) {$7$};

\node[vertex] (b1) at  (-3, 6) {$6$};
\node[vertex] (b2) at  (-3, 3) {$18$};
\node[vertex] (b3) at  (-3, 0) {$24$};
\node[vertex] (b4) at  (-3, -3) {$12$};

\node[vertex] (c1) at  (2, 6) {$11$};
\node[vertex] (c2) at  (2, 3) {$23$};
\node[vertex] (c3) at  (2, 0) {$5$};
\node[vertex] (c4) at  (2, -3) {$17$};

\node[vertex] (d1) at  (7, 6) {$16$};
\node[vertex] (d2) at  (7, 3) {$4$};
\node[vertex] (d3) at  (7, 0) {$10$};
\node[vertex] (d4) at  (7, -3) {$22$};

\node[vertex] (e1) at  (12, 6) {$21$};
\node[vertex] (e2) at  (12, 3) {$9$};
\node[vertex] (e3) at  (12, 0) {$15$};
\node[vertex] (e4) at  (12, -3) {$3$};

\node[vertex] (f1) at  (17, 6) {$2$};
\node[vertex] (f2) at  (17, 3) {$14$};
\node[vertex] (f3) at  (17, 0) {$20$};
\node[vertex] (f4) at  (17, -3) {$8$};

\draw (a1) -- (a2);
\draw (a3) -- (a2);
\draw (a3) -- (a4);

\draw (b1) -- (b2);
\draw (b3) -- (b2);
\draw (b3) -- (b4);

\draw (c1) -- (c2);
\draw (c3) -- (c2);
\draw (c3) -- (c4);

\draw (d1) -- (d2);
\draw (d3) -- (d2);
\draw (d3) -- (d4);

\draw (e1) -- (e2);
\draw (e3) -- (e2);
\draw (e3) -- (e4);

\draw (f1) -- (f2);
\draw (f3) -- (f2);
\draw (f3) -- (f4);

\draw  (-8, -4) node[below] {$P_1$};
\draw  (-3, -4) node[below] {$P_2$};
\draw  (2, -4) node[below] {$P_3$};
\draw  (7, -4) node[below] {$P_4$};
\draw  (12, -4) node[below] {$P_5$};
\draw  (17, -4) node[below] {$P_6$};
\end{tikzpicture}
\caption{Translations of a Hamiltonian path $P_1$ to other components of $C\langle\{12, 6\}\rangle$ for a graph where $n=24, \phi(1)=12, \phi(2)=6,$ and $\phi(3)=5.$  In this example, $z=7.$} \label{fig:PC}
\end{figure}
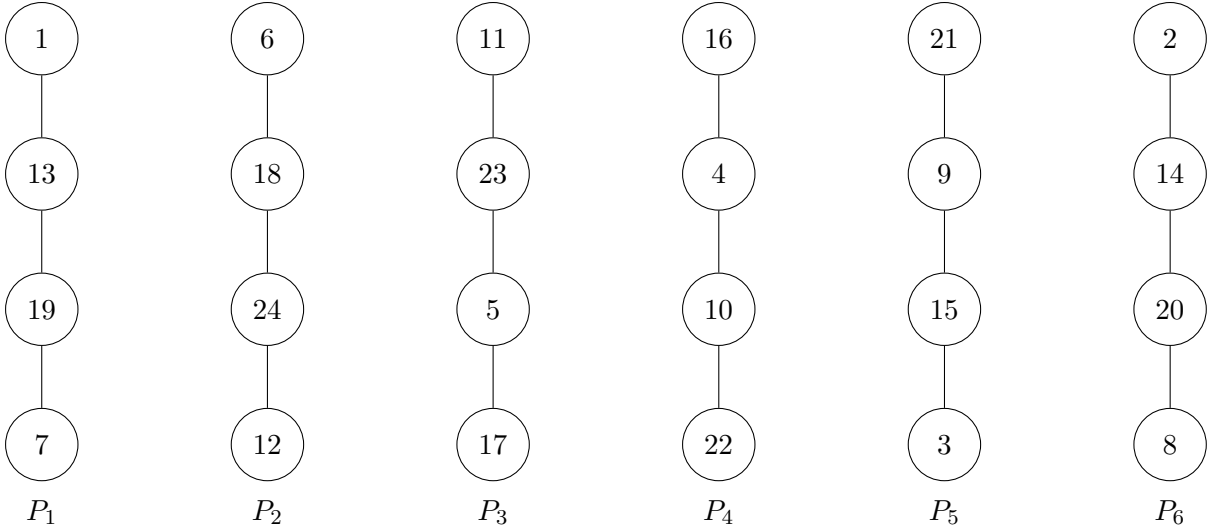

If $g_{\ell-1}$ is even, the algorithm deletes $g_{\ell-1}-2$ edges: pick some edge $\{u, v\}$ in $P_1.$  Delete the corresponding edge in each $P_2, P_3, ..., P_{g_{\ell-1}-1}$: delete the edge $\{u+(i-1)\phi(\ell), v+(i-1)\phi(\ell)\}$ from $P_i.$    Form a Hamiltonian cycle on the entire vertex set by adding $2(g_{\ell-1}-1)$ edges of length $\phi(\ell)$ as in Figure \ref{fig:PC2}\footnote{Specifically, add the following edges:
\begin{itemize}
\item Add the edges $\{1, 1+\phi(\ell)\}, \{1+2\phi(\ell), 1+3\phi(\ell)\}, ..., \{1+(g_{\ell-1}-2)\phi(\ell), 1+(g_{\ell-1}-1)\phi(\ell)\}.$  Also add the edges $\{z, z+\phi(\ell)\}, \{z+2\phi(\ell), z+3\phi(\ell)\}, ..., \{z+(g_{\ell-1}-2)\phi(\ell), z+(g_{\ell-1}-1)\phi(\ell)\}.$  This adds $g_{\ell-1}$ edges of length $\phi(\ell).$
\item Add the edges $\{u+\phi(\ell), u+2\phi(\ell)\}, \{u+3\phi(\ell), u+4\phi(\ell)\}, ..., \{u+(g_{\ell-1}-3)\phi(\ell), u+(g_{\ell-1}-2)\phi(\ell)\}.$  Also add the edges$\{v+\phi(\ell), v+2\phi(\ell)\}, \{v+3\phi(\ell), v+4\phi(\ell)\}, ..., \{v+(g_{\ell-1}-3)\phi(\ell), v+(g_{\ell-1}-2)\phi(\ell)\}.$ This adds $g_{\ell-1}-2$ edges of length $\phi(\ell).$
\end{itemize}}.

\begin{proposition}Consider any circulant instance where $g_{\ell-1}$ is even.  Let $\TSPOPT$ denote the optimal cost of a Hamiltonian tour on the circulant instance.  Then the above algorithm produces a Hamiltonian tour of cost at most $2\TSPOPT.$
\end{proposition}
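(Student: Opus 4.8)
The plan is to bound the cost of the tour the algorithm outputs and compare it directly to the Van der Veen, Van Dal, and Sierksma lower bound $\VDV$, since Proposition \ref{prop:VDV} already gives $\VDV \le \TSPOPT$. Thus it suffices to prove two things: (i) the object the algorithm returns really is a Hamiltonian tour, and (ii) its cost is at most $2\VDV$. Granting these, $\text{cost} \le 2\VDV \le 2\TSPOPT$, which is the claim.

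For the cost bound, first account for the $g_{\ell-1}$ paths. Each $P_i$ is a translate of $P_1$ by $(i-1)\phi(\ell)$, so all $P_i$ have equal cost; and $P_1$ is the nearest-neighbor (hence minimum-cost) Hamiltonian path on the component $C^{\ell-1}$ of $C\langle\{\phi(1),\dots,\phi(\ell-1)\}\rangle$ containing vertex $1$. Applying the spanning-tree argument of Proposition \ref{prop:HP} inside a single component — which, restricted to stripes $\phi(1),\dots,\phi(\ell-1)$, is connected and carries $g_j/g_{\ell-1}$ of the components $C^j$ for each $j\le \ell-1$ — shows $P_1$ uses exactly $(g_{j-1}-g_j)/g_{\ell-1}$ edges of stripe $\phi(j)$, so
$$\sum_{i=1}^{g_{\ell-1}}\text{cost}(P_i) = g_{\ell-1}\cdot\text{cost}(P_1) = \sum_{j=1}^{\ell-1}(g_{j-1}-g_j)\,c_{\phi(j)} =: A.$$
The algorithm then deletes $g_{\ell-1}-2$ edges (which only decreases cost, as all $c_e\ge 0$) and adds $2(g_{\ell-1}-1)$ edges of length $\phi(\ell)$. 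Hence the output has cost at most $A + 2(g_{\ell-1}-1)c_{\phi(\ell)}$.

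It remains to compare this with $\VDV$. Since $g_\ell = 1$, one has $\VDV = A + (g_{\ell-1}-1)c_{\phi(\ell)} + c_{\phi(\ell)} = A + g_{\ell-1}c_{\phi(\ell)}$. Writing $B := g_{\ell-1}c_{\phi(\ell)}$, the output cost is at most $A + 2(g_{\ell-1}-1)c_{\phi(\ell)} = A + 2B - 2c_{\phi(\ell)} \le A + 2B \le 2(A+B) = 2\VDV$, using $A \ge 0$ and $c_{\phi(\ell)}\ge 0$. This closes the cost side of the argument.

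The main obstacle is part (i): verifying that the deletions and the $2(g_{\ell-1}-1)$ added length-$\phi(\ell)$ edges actually produce a single Hamiltonian cycle rather than a disjoint union of cycles or a graph with wrong degrees. Here I would lean on Lemma \ref{lem:mergeCi}: since $g_\ell = 1$, contracting each component $C^{\ell-1}_i$ to a point turns the stripe-$\phi(\ell)$ edges into a single directed cycle on the $g_{\ell-1}$ components, giving a canonical cyclic order $C^{\ell-1}_1, C^{\ell-1}_2, \dots, C^{\ell-1}_{g_{\ell-1}}$. I would then check two things along this order. First, a degree count: interior path vertices already have degree $2$; the endpoints of the deleted edge $\{u,v\}$ and the path endpoints $\{1,z\}$ (together with their translates) are exactly the vertices left with degree $1$, and the added edges are chosen to pair each such vertex with one in an adjacent component, restoring degree $2$ everywhere and creating no degree-$3$ vertex. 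Second, connectivity: following the prescribed edges, the two interleaved ``zigzags'' (the $1,z$-edges stepping across components $(1,2),(3,4),\dots$ and the $u,v$-edges stepping across $(2,3),(4,5),\dots$) must splice the translated paths into one closed walk visiting every vertex. This is precisely where the hypothesis that $g_{\ell-1}$ is even is used: an even number of components lets the two offset zigzags mesh and close up into a single cycle, whereas an odd count would leave the patching unclosed or split it into two cycles. I expect this bookkeeping — tracking endpoints and the alternating pattern of added edges — to be the only delicate part; the cost estimate above is then immediate.
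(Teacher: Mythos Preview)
Your proposal is correct and follows essentially the same approach as the paper's sketch: bound the cost as $A + 2(g_{\ell-1}-1)c_{\phi(\ell)}$ where $A=\sum_{j=1}^{\ell-1}(g_{j-1}-g_j)c_{\phi(j)}$, then compare to a lower bound on $\TSPOPT$. The only cosmetic difference is that the paper compares to the minimum Hamiltonian path cost $A+(g_{\ell-1}-1)c_{\phi(\ell)}$ directly, whereas you compare to $\VDV=A+g_{\ell-1}c_{\phi(\ell)}$; both work and the arithmetic is the same. Your discussion of part (i) is in fact more detailed than the paper's, which simply asserts ``by construction'' that the output is Hamiltonian.
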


\begin{proof}[Sketch]
By construction, the above algorithm produces a Hamiltonian tour.  We can analyze its cost in 3 steps:
\begin{enumerate}
\item When we start with $g_{\ell-1}$  paths (each Hamiltonian on a component of $C\langle\{\phi(1), ..., \phi(\ell-1)\}\rangle$), we have used all of the edges in a minimum-cost Hamiltonian path on $[n]$ except those of length $\phi(\ell).$  In total, these edges cost
 $$\sum_{i=1}^{\ell-1} (g_{i-1}^{\phi}-g_i^{\phi})c_{\phi(i)}.$$ 
\item We then delete some edges (translates of $\{u, v\}$), which cannot increase the cost.
\item Finally, we add $2(g_{\ell-1}-1)=2(g_{\ell-1}-g_{\ell})$ edges of cost $\phi(\ell).$ 
\end{enumerate}
Hence, we end with a tour costing at most 
 $$\sum_{i=1}^{\ell-1} (g_{i-1}^{\phi}-g_i^{\phi})c_{\phi(i)}+2(g_{\ell-1}-g_{\ell})c_{\phi(\ell)}\leq 2\sum_{i=1}^{\ell} (g_{i-1}^{\phi}-g_i^{\phi})c_{\phi(i)} \leq 2\TSPOPT.$$  The second  inequality follows because $\sum_{i=1}^{\ell} (g_{i-1}^{\phi}-g_i^{\phi})c_{\phi(i)}$ is the cost of a minimum-cost Hamiltonian path, which lower-bounds the cost of a Hamiltonian tour. \hfill
\end{proof}

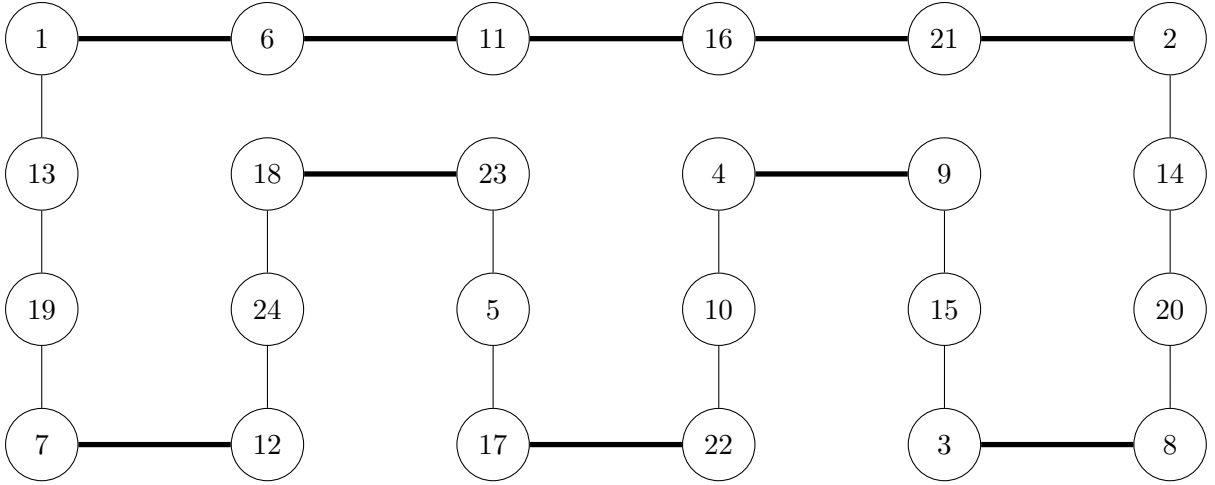
\begin{figure}[t]
\centering

\begin{tikzpicture}[scale=0.6]
\tikzset{vertex/.style = {shape=circle,draw,minimum size=2.5em}}
\tikzset{edge/.style = {->,> = latex'}}
\tikzstyle{decision} = [diamond, draw, text badly centered, inner sep=3pt]
\tikzstyle{sq} = [regular polygon,regular polygon sides=4, draw, text badly centered, inner sep=3pt]
\node[vertex] (a1) at  (-8, 6) {$1$};
\node[vertex] (a2) at  (-8, 3) {$13$};
\node[vertex] (a3) at  (-8, 0) {$19$};
\node[vertex] (a4) at  (-8, -3) {$7$};

\node[vertex] (b1) at  (-3, 6) {$6$};
\node[vertex] (b2) at  (-3, 3) {$18$};
\node[vertex] (b3) at  (-3, 0) {$24$};
\node[vertex] (b4) at  (-3, -3) {$12$};

\node[vertex] (c1) at  (2, 6) {$11$};
\node[vertex] (c2) at  (2, 3) {$23$};
\node[vertex] (c3) at  (2, 0) {$5$};
\node[vertex] (c4) at  (2, -3) {$17$};

\node[vertex] (d1) at  (7, 6) {$16$};
\node[vertex] (d2) at  (7, 3) {$4$};
\node[vertex] (d3) at  (7, 0) {$10$};
\node[vertex] (d4) at  (7, -3) {$22$};

\node[vertex] (e1) at  (12, 6) {$21$};
\node[vertex] (e2) at  (12, 3) {$9$};
\node[vertex] (e3) at  (12, 0) {$15$};
\node[vertex] (e4) at  (12, -3) {$3$};

\node[vertex] (f1) at  (17, 6) {$2$};
\node[vertex] (f2) at  (17, 3) {$14$};
\node[vertex] (f3) at  (17, 0) {$20$};
\node[vertex] (f4) at  (17, -3) {$8$};

\draw (a1) -- (a2);
\draw (a3) -- (a2);
\draw (a3) -- (a4);

\draw (b3) -- (b2);
\draw (b3) -- (b4);

\draw (c3) -- (c2);
\draw (c3) -- (c4);

\draw (d3) -- (d2);
\draw (d3) -- (d4);

\draw (e3) -- (e2);
\draw (e3) -- (e4);

\draw (f1) -- (f2);
\draw (f3) -- (f2);
\draw (f3) -- (f4);

\draw[line width=2pt] (a1) -- (b1);
\draw[line width=2pt] (a4) -- (b4);

\draw[line width=2pt] (c1) -- (d1);
\draw[line width=2pt] (c4) -- (d4);

\draw[line width=2pt] (e1) -- (f1);
\draw[line width=2pt] (e4) -- (f4);

\draw[line width=2pt] (b1) -- (c1);
\draw[line width=2pt] (b2) -- (c2);

\draw[line width=2pt] (d1) -- (e1);
\draw[line width=2pt] (d2) -- (e2);
\end{tikzpicture}
\caption{Constructing a Hamiltonian path when $g_{\ell-1}$ is even.  In this case, $n=24, \phi(1)=12, \phi(2)=6$ and $\phi(3)=5.$  We pick $\{u, v\}=\{1, 13\}.$} \label{fig:PC2}
\end{figure}

\subsubsection{Case 2: $g_{\ell-1}$ is Odd}\label{ap2}
If $g_{\ell-1}$ is odd, the algorithm of Gerace and Greco \cite{Ger08} proceeds similarly, but the analysis is more involved because the paths $P_1, ..., P_{g_{\ell-1}}$ cannot be connected into a Hamiltonian cycle as before.  Instead, the algorithm recursively calls itself to 
produce a Hamiltonian cycle $H$ in component $C_1^{\ell-1},$ as explained below.  As before, we take $P_1, ..., P_{g_{\ell-1}}$ to be Hamiltonian paths on the components of  $C\langle\{\phi(1), ..., \phi(\ell-1)\}\rangle,$ where the endpoints of $P_1$ are vertex 1 and vertex $z$, and each other $P_i$ is a translate of $P_1$.  We take edge $\{u, v\}$ of length $\phi(\ell-1)$ in path $P_1$.  Without loss of generality, we can assume $H$ contains edge $\{u, v\}$: $H$ contains some edge of length $\phi(\ell-1)$, and we can shift all the vertices in $H$ (adding some multiple of $g_{\ell-1}$ to each vertex) until that edge is $\{u, v\}.$

We then delete edge $\{u, v\}$ and its translates from $H, P_2, P_3, ..., P_{g_{\ell-1}-1}$ and add $2(g_{\ell-1}-1)$ edges of length $\phi(\ell)$ as in Figure \ref{fig:PC3}\footnote{Specifically:
\begin{itemize}
\item Add the edges $\{1+\phi(\ell), 1+2\phi(\ell)\}, \{1+3\phi(\ell), 1+4\phi(\ell)\}, ..., \{1+(g_{\ell-1}-2)\phi(\ell), 1+(g_{\ell-1}-1)\phi(\ell)\}.$  Also add the edges $\{z+\phi(\ell), z+2\phi(\ell)\}, \{z+3\phi(\ell), z+4\phi(\ell)\}, ..., \{z+(g_{\ell-1}-2)\phi(\ell), z+(g_{\ell-1}-1)\phi(\ell)\}.$ This adds $g_{\ell-1}-1$ edges of length $\phi(\ell).$
\item Add the edges $\{u+\phi(\ell), u+2\phi(\ell)\}, \{u+3\phi(\ell), u+4\phi(\ell)\}, ..., \{u+(g_{\ell-1}-3)\phi(\ell), u+(g_{\ell-1}-2)\phi(\ell)\}.$  Also add the edges$\{v+\phi(\ell), v+2\phi(\ell)\}, \{v+3\phi(\ell), v+4\phi(\ell)\}, ..., \{v+(g_{\ell-1}-3)\phi(\ell), v+(g_{\ell-1}-2)\phi(\ell)\}.$ This adds $g_{\ell-1}-1$ edges of length $\phi(\ell).$
\end{itemize}}.

\begin{figure}[t]
\centering
\begin{tikzpicture}[scale=0.6]
\tikzset{vertex/.style = {shape=circle,draw,minimum size=2.5em}}
\tikzset{edge/.style = {->,> = latex'}}
\tikzstyle{decision} = [diamond, draw, text badly centered, inner sep=3pt]
\tikzstyle{sq} = [regular polygon,regular polygon sides=4, draw, text badly centered, inner sep=3pt]
\node[vertex] (a1) at  (-8, 6) {$1$};
\node[vertex] (a2) at  (-8, 3) {$16$};
\node[vertex] (a3) at  (-8, 0) {$21$};
\node[vertex] (a4) at  (-8, -3) {$6$};
\node[vertex] (a5) at  (-8, -6) {$11$};
\node[vertex] (a6) at  (-8, -9) {$26$};

\node[vertex] (b1) at  (-3, 6) {$3$};
\node[vertex] (b2) at  (-3, 3) {$18$};
\node[vertex] (b3) at  (-3, 0) {$23$};
\node[vertex] (b4) at  (-3, -3) {$8$};
\node[vertex] (b5) at  (-3, -6) {$13$};
\node[vertex] (b6) at  (-3, -9) {$28$};

\node[vertex] (c1) at  (2, 6) {$5$};
\node[vertex] (c2) at  (2, 3) {$20$};
\node[vertex] (c3) at  (2, 0) {$25$};
\node[vertex] (c4) at  (2, -3) {$10$};
\node[vertex] (c5) at  (2, -6) {$15$};
\node[vertex] (c6) at  (2, -9) {$30$};

\node[vertex] (d1) at  (7, 6) {$7$};
\node[vertex] (d2) at  (7, 3) {$22$};
\node[vertex] (d3) at  (7, 0) {$27$};
\node[vertex] (d4) at  (7, -3) {$12$};
\node[vertex] (d5) at  (7, -6) {$17$};
\node[vertex] (d6) at  (7, -9) {$2$};

\node[vertex] (e1) at  (12, 6) {$9$};
\node[vertex] (e2) at  (12, 3) {$24$};
\node[vertex] (e3) at  (12, 0) {$29$};
\node[vertex] (e4) at  (12, -3) {$14$};
\node[vertex] (e5) at  (12, -6) {$19$};
\node[vertex] (e6) at  (12, -9) {$4$};

\draw[thick, densely dotted] (a1) -- (a2);
\draw[thick, densely dotted] (a3) -- (a4);
\draw[thick, densely dotted] (a5) -- (a4);
\draw[thick, densely dotted] (a5) -- (a6);
\draw[thick, densely dotted] (a1) to[bend right] (a6);

\draw (b1) -- (b2);
\draw (b3) -- (b4);
\draw (b5) -- (b4);
\draw (b5) -- (b6);

\draw (c1) -- (c2);
\draw (c3) -- (c4);
\draw (c5) -- (c4);
\draw (c5) -- (c6);

\draw (d1) -- (d2);
\draw (d3) -- (d4);
\draw (d5) -- (d4);
\draw (d5) -- (d6);

\draw (e1) -- (e2);
\draw (e3) -- (e2);
\draw (e3) -- (e4);
\draw (e5) -- (e4);
\draw (e5) -- (e6);

\draw[line width=2pt] (a3) -- (b3);
\draw[line width=2pt] (a2) -- (b2);

\draw[line width=2pt] (b1) -- (c1);
\draw[line width=2pt] (b6) -- (c6);

\draw[line width=2pt] (c3) -- (d3);
\draw[line width=2pt] (c2) -- (d2);

\draw[line width=2pt] (d1) -- (e1);
\draw[line width=2pt] (d6) -- (e6);

\end{tikzpicture}

\caption{The 2-approximation algorithm for circulant TSP when $g_{\ell-1}$ is odd.  In this case, $n=30, \phi(1)=15, \phi(2)=5$ and $\phi(3)=2.$  We find the Hamiltonian path $P_1= \{1, 16\}, \{16, 21\}, \{21, 6\}, \{6, 11\}, \{11, 26\}$ so that, e.g., $P_2=\{3, 18\}, \{18, 23\}, \{23, 8\}, \{8, 13\}, \{13, 28\}$ is the path translated by $1\times \phi(2)=2.$  We pick $\{u, v\}=\{18, 23\}$, and edge of length $\phi(2)=5.$  Since  $g_{\ell-1}=g_2=5$ is odd, we apply the recursive algorithm to find a Hamiltonian cycle on the vertices in $P_1$ (i.e., $C_1^2$). This yields the cycle $\{1, 6\}, \{6, 11\} \{11, 26\}, \{26, 21\}, \{21, 16\},\{16, 1\},$ including the edge $\{u, v\}$, so we don't need to shift it.  We then delete the $\{u, v\}$ and its translates from $H, P_2, P_3,$ and $P_4$ and reconnect using the thick edges (of length $\phi(3)=2$). Bolded edges are of length $\phi(\ell)$, while the dotted edges correspond to the edges from $H$ (after $\{u, v\}$ is removed).} \label{fig:PC3}
\end{figure}

This recursive process will eventually reach one of two halting conditions:
\begin{enumerate}
\item It  is called to find a Hamiltonian cycle on a component of $C\langle\{\phi(1), ..., \phi(t)\}\rangle$ where $\f{g_{t-1}}{g_t}$ is even, in which case it proceeds as in Case 1.  This cycle is then recursively used to create a Hamiltonian cycle on a component of $C\langle\{\phi(1), ..., \phi(t+1)\}\rangle$, and then on a component of $C\langle\{\phi(1), ..., \phi(t+2)\}\rangle$, and so on until it creates a Hamiltonian cycle on $C\langle\{\phi(1), ..., \phi(\ell-1)\}\rangle$ (following the process described above).  Note that  $\f{g_{t-1}}{g_t}$ counts the number of components of $C\langle\{\phi(1), ..., \phi(t-1)\}\rangle$ that get merged into a component of $C\langle\{\phi(1), ..., \phi(t)\}\rangle.$
\item Otherwise, we recursively call the algorithm until it attempts to produce a Hamiltonian cycle on a component of $C\langle\{\phi(1)\}\rangle$, in which case the  Hamiltonian cycle on $C\langle\{\phi(1)\}\rangle$ can be found by following edges of length $\phi(1)$ until a cycle is created.  In the case where $\phi(1)=n/2,$  we treat $\{1, 1+n/2\}$ as a cycle on $C\langle\{\phi(1)\}\rangle$ consisting of two length $d$ edges.
\end{enumerate}

\begin{proposition}Consider any circulant instance where $g_{\ell-1}$ is odd.  Let $\TSPOPT$ denote the optimal cost of a Hamiltonian tour on the circulant instance.  Then the above algorithm produces a Hamiltonian tour of cost at most $2\TSPOPT.$
\end{proposition}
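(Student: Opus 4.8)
The plan is to reduce the claim to the cost of a minimum-cost Hamiltonian path and then to induct through the recursion. By Proposition~\ref{prop:HP}, $\TSPOPT$ is at least the cost $M_\ell := \sum_{i=1}^{\ell}(g_{i-1}-g_i)c_{\phi(i)}$ of a minimum-cost Hamiltonian path, so it suffices to show the algorithm outputs a tour of cost at most $2M_\ell$. More generally, for $1\le t\le \ell$ set $M_t := \sum_{i=1}^{t}(g_{i-1}-g_i)c_{\phi(i)}$; by the Kruskal argument in the proof of Proposition~\ref{prop:HP}, $M_t$ is exactly the cost of a minimum-cost spanning forest of $C\langle\{\phi(1), \ldots, \phi(t)\}\rangle$ with one tree per component, and since the $g_t$ components are isomorphic, each of the Hamiltonian paths $P_1, \ldots, P_{g_t}$ built on them has the common cost $\mu_t := M_t/g_t$. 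I would prove, by induction on $t$, that the Hamiltonian cycle the algorithm builds on a single component $C^t$ of $C\langle\{\phi(1), \ldots, \phi(t)\}\rangle$ has cost at most $2\mu_t$. Taking $t=\ell$ (where $g_\ell=1$ and $\mu_\ell = M_\ell$) then yields a tour of cost at most $2M_\ell \le 2\TSPOPT$, as desired.

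The single ingredient driving the induction is the recursion $\mu_t = r\mu_{t-1} + (r-1)c_{\phi(t)}$, where $r := g_{t-1}/g_t$ counts the number of components $C^{t-1}$ merged into one $C^t$; this follows from $M_t = M_{t-1} + (g_{t-1}-g_t)c_{\phi(t)}$ together with $\frac{g_{t-1}-g_t}{g_t} = r-1$. For the base cases I would handle the two halting conditions. When $t=1$ the component $C^1$ is a cycle of $n/g_1$ edges of length $\phi(1)$, of cost $\frac{n}{g_1}c_{\phi(1)}$, which is at most $2\mu_1 = 2(\frac{n}{g_1}-1)c_{\phi(1)}$ precisely because each component has $n/g_1 \ge 2$ vertices. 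When $\frac{g_{t-1}}{g_t}$ is even, the algorithm builds the cycle as in Case~1 using the $r$ translated paths $P_1, \ldots, P_r$ (total cost $r\mu_{t-1}$) together with $2(r-1)$ edges of length $\phi(t)$; discarding the deleted edges gives cost at most $r\mu_{t-1} + 2(r-1)c_{\phi(t)} \le 2r\mu_{t-1} + 2(r-1)c_{\phi(t)} = 2\mu_t$, matching the even-case proposition already established.

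The inductive step is exactly the odd case of the algorithm. Here $P_1$ is replaced by a recursively produced Hamiltonian cycle $H$ on $C_1^{t-1}$, the remaining $r-1$ paths $P_2, \ldots, P_r$ each cost $\mu_{t-1}$, one $\phi(t-1)$-edge is deleted from $H$ and from $P_2, \ldots, P_{r-1}$, and $2(r-1)$ edges of length $\phi(t)$ are added. Discarding the deleted edges and applying the inductive hypothesis $\mathrm{cost}(H) \le 2\mu_{t-1}$, the cycle on $C^t$ costs at most $2\mu_{t-1} + (r-1)\mu_{t-1} + 2(r-1)c_{\phi(t)} = (r+1)\mu_{t-1} + 2(r-1)c_{\phi(t)}$. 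Comparing with $2\mu_t = 2r\mu_{t-1} + 2(r-1)c_{\phi(t)}$, the bound reduces to $(r+1)\mu_{t-1} \le 2r\mu_{t-1}$, i.e. to $r \ge 1$, which always holds. The step I expect to require the most care is the exact bookkeeping of which paths and which edges survive the deletion-and-reconnection operation, so that the recursion for the cycle's cost is stated correctly; the pleasant simplification is that one need not even credit the deletions, so the whole argument collapses to the trivial inequality $r+1\le 2r$ once the per-component normalization $\mu_t = M_t/g_t$ and its recursion are in place.
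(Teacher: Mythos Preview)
Your proposal is correct and follows essentially the same approach as the paper's proof. Your normalization $\mu_t = M_t/g_t$ is exactly the paper's quantity $\frac{1}{g_t}\sum_{i=1}^t(g_{i-1}-g_i)c_{\phi(i)}$, your recursion $\mu_t = r\mu_{t-1} + (r-1)c_{\phi(t)}$ is what drives the paper's inductive step, and your reduction of the induction to the inequality $r+1\le 2r$ is precisely the paper's observation that $\frac{g_{k+1}}{g_k}+1\le 2$; the only difference is that your notation is cleaner.
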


\begin{proof}[Sketch]
By construction, the above algorithm produces a Hamiltonian tour.  We analyze its cost inductively at each stage of the recursion.  

Suppose the algorithm recurses until it finds a Hamiltonian cycle on a component of \\ $C\langle\{\phi(1), ..., \phi(t)\}\rangle$ (where possibly $t=1$).  We claim that the cost of the Hamiltonian cycle produced on this component is at most $$\f{2}{g_t}\sum_{i=1}^t (g_{i-1}-g_i)c_{\phi(i)}.$$   Indeed, if the algorithm halts because $t=1$, it produces a Hamiltonian cycle consisting of $\f{n}{g_1}$ edges of cost $c_{\phi(1)}$ and $$\f{n}{g_1}c_{\phi(1)}\leq 2\left(\f{n}{g_1}-1\right)c_{\phi(1)}=\f{2}{g_1}\sum_{i=1}^1 (g_{i-1}-g_i)c_{\phi(i)}.$$  If instead $t>1$,  we view the component of $C\langle\{\phi(1), ..., \phi(t)\}\rangle$ as the graph $C\langle\{\f{\phi(1)}{g_t}, \f{\phi(2)}{g_t}..., \f{\phi(t)}{g_t}\}\rangle$ with $\f{n}{g_t}$ vertices where edges of length $\f{\phi(i)}{g_t}$ have cost $c_{\phi(i)}$; since $g_t=\gcd(n, \phi(1), ..., \phi(t)),$ this is a well-defined circulant graph\footnote{Consider a component of   $C\langle\{\phi(1), ..., \phi(t)\}\rangle$ whose smallest vertex is labeled $i$.  Any vertex in this component with label $v$ can be relabeled with $\f{v-i}{g_t},$ which is an integer: $v, i$ in the same component of  $C\langle\{\phi(1), ..., \phi(t)\}\rangle$ implies $v\equiv_{g_t}i.$  Any edge in this component is of length $\phi(i)$ for $1\leq i\leq t,$ and $$u-v=\phi(t) \text{ if and only if } \f{u-i}{g_t}-\f{v-i}{g_t}=\f{\phi(t)}{g_t}.$$}.  Moreover, the algorithm reaching a base case of the recursion and $t>1$ implies that $\f{g_{t-1}}{g_t}$ is even, so that the graph $C\langle\{\f{\phi(1)}{g_t}, \f{\phi(2)}{g_t}..., \f{\phi(t-1)}{g_t}\}\rangle$ with $\f{n}{g_t}$ vertices has an even number of components.  Thus we can appeal to the analysis of the algorithm introduced in Appendix \ref{ap1} and, at the base case of recursion, the algorithm will produce a Hamiltonian tour on a component of $C\langle\{\phi(1), ..., \phi(t)\}\rangle$ of cost at most $$2\sum_{i=1}^t \f{g_{i-1}-g_i}{g_t}c_{\phi(i)}=\f{2}{g_t}\sum_{i=1}^t (g_{i-1}-g_i)c_{\phi(i)}.$$

We now analyze the algorithm inductively, claiming that at each subsequent iteration of the algorithm, it extends a Hamiltonian cycle on a component of 
 $C\langle\{\phi(1), ..., \phi(k)\}\rangle$ of cost at most $\f{2}{g_k}\sum_{i=1}^k (g_{i-1}-g_i)c_{\phi(i)}$ to a Hamiltonian cycle on a component of 
 $C\langle\{\phi(1), ..., \phi(k+1)\}\rangle$ of cost at most $\f{2}{g_{k+1}}\sum_{i=1}^{k} (g_{i-1}-g_i)c_{\phi(i)}.$  We do so in the following steps: 

\begin{enumerate}
\item By assumption, the Hamiltonian cycle on a component of $C\langle\{\phi(1), ..., \phi(k)\}\rangle$ costs at most $$\f{2}{g_k}\sum_{i=1}^k (g_{i-1}-g_i)c_{\phi(i)}.$$
\item There are $\f{g_k}{g_{k+1}}$ components of  $C\langle\{\phi(1), ..., \phi(k)\}\rangle$ that get joined into a component of  $C\langle\{\phi(1), ..., \phi(k+1)\}\rangle.$  The algorithm produces a minimum Hamiltonian path on the other $\f{g_k}{g_{k+1}}-1$ components of  $C\langle\{\phi(1), ..., \phi(k)\}\rangle$ that merge into $C\langle\{\phi(1), ..., \phi(k+1)\}\rangle.$  As in bounding the cost of base case of the recursion, each of these components is equivalent to the circulant graph $C\langle\{\f{\phi(1)}{g_k}, \f{\phi(2)}{g_k}..., \f{\phi(k)}{g_k}\}\rangle$ on $\f{n}{g_k}$ vertices so that the Hamiltonian path on each of these components will cost $$\sum_{i=1}^k \f{g_{i-1}-g_i}{g_k}c_{\phi(i)}.$$  These paths, with our Hamiltonian cycle, together cost at most
\begin{align*}
\f{2}{g_k}\sum_{i=1}^k (g_{i-1}-g_i)c_{\phi(i)} + \left(\f{g_{k}}{g_{k+1}}-1\right) \sum_{i=1}^k \f{g_{i-1}-g_i}{g_k}c_{\phi(i)} &=\left(\f{2}{g_k}+\f{1}{g_{k+1}}-\f{1}{g_k}\right)\sum_{i=1}^k (g_{i-1}-g_i)c_{\phi(i)} \\
&=\f{1}{g_{k+1}}\left(\f{g_{k+1}}{g_k}+1\right)\sum_{i=1}^k (g_{i-1}-g_i)c_{\phi(i)} \\
&\leq \f{2}{g_{k+1}}\sum_{i=1}^k (g_{i-1}-g_i)c_{\phi(i)},
\end{align*}
since $g_{k+1}\leq g_k.$
\item We then delete some edges, which cannot increase the cost.
\item Finally, we add $2\left(\f{g_k}{g_{k+1}}-1\right)$ edges of length $\phi(k+1)$ to form the Hamiltonian cycle on a component of $C\langle\{\phi(1), ..., \phi(k+1)\}\rangle.$  In total, these edges cost $$2\left(\f{g_k}{g_{k+1}}-1\right)c_{\phi(k+1)} = \f{2}{g_{k+1}}(g_k-g_{k+1})c_{\phi(k+1)}.$$
\end{enumerate}
Hence, we end with a Hamiltonian cycle on a component of  $C\langle\{\phi(1), ..., \phi(k+1)\}\rangle$  costing at most 
$$ \f{2}{g_{k+1}}\sum_{i=1}^k (g_{i-1}-g_i)c_{\phi(i)}+\f{2}{g_{k+1}}(g_k-g_{k+1})c_{\phi(k+1)} = \f{2}{g_{k+1}}\sum_{i=1}^{k+1} (g_{i-1}-g_i)c_{\phi(i)},$$
completing an inductive step.

Applying iteratively until we have a Hamiltonian cycle on the full instance, the total cost of this is at most 
$$ 2\sum_{i=1}^{\ell} (g_{i-1}^{\phi}-g_i^{\phi})c_{\phi(i)} \leq 2\TSPOPT.$$  The inequality again follows because $\sum_{i=1}^{\ell} (g_{i-1}^{\phi}-g_i^{\phi})c_{\phi(i)}$ is the cost of a minimum-cost Hamiltonian path, which lower-bounds the cost of a Hamiltonian tour. \hfill
\end{proof}

\end{document}